\documentclass{article}
\usepackage[utf8]{inputenc}
\usepackage[T1]{fontenc}
\usepackage{amsmath,amssymb,amsthm,mathtools}
\usepackage[letterpaper,margin=1in]{geometry}
\usepackage{graphicx,enumitem,booktabs,microtype}
\usepackage{tikz}
\usepackage{float}
\usepackage[dvipsnames]{xcolor}
\usepackage[ruled,noend,linesnumbered]{algorithm2e}
\usepackage[colorlinks=true,linkcolor=MidnightBlue,citecolor=MidnightBlue,urlcolor=MidnightBlue]{hyperref}
\hypersetup{
  pdftitle={On the Strong Matroid Secretary Conjecture and Beyond},
  pdfauthor={Hamed Abdi, Kiarash Banihashem, MohammadTaghi Hajiaghayi, Danny Mittal},
  pdfsubject={Ordinal matroid secretary and single-sample prophet inequalities},
  pdfkeywords={matroid secretary, prophet inequalities, online algorithms, ordinal information}
}
\setlist{itemsep=2pt,topsep=4pt,partopsep=0pt,parsep=0pt}
\let\originalthebibliography\thebibliography
\renewcommand{\thebibliography}[1]{%
  \originalthebibliography{#1}%
  \addcontentsline{toc}{section}{\refname}%
}
\usepackage{color-edits}
\addauthor{kb}{cyan}
\newtheorem{theorem}{Theorem}[section]
\newtheorem{lemma}[theorem]{Lemma}
\newtheorem{proposition}[theorem]{Proposition}
\newtheorem{corollary}[theorem]{Corollary}
\theoremstyle{definition}
\newtheorem{definition}[theorem]{Definition}
\newtheorem{conjecture}[theorem]{Conjecture}
\theoremstyle{remark}

\newcommand{\E}{\mathbb E}
\newcommand{\F}{\mathbb F}

\newcommand{\ind}{\mathbf 1}
\newcommand{\OPT}{\operatorname{OPT}}
\newcommand{\cl}{\operatorname{cl}}
\newcommand{\spn}{\operatorname{span}}
\newcommand{\rank}{\operatorname{rank}}
\newcommand{\dist}{\mathrel{\overset{\mathrm d}=}}

\title{On the Strong Matroid Secretary Conjecture and Beyond%
\thanks{This is a preliminary version. Due to time constraints it is
being posted earlier than we would have liked, in order to record the
results; the writing will be improved in future versions.}}
\author{\begin{tabular}{cc}
Hamed Abdi & Kiarash Banihashem\\[0.5ex]
MohammadTaghi Hajiaghayi & Danny Mittal\\[1ex]
\multicolumn{2}{c}{\small University of Maryland, College Park}
\end{tabular}}
\date{September 16, 2026}
\begin{document}
\maketitle
\begin{abstract}
The strong matroid secretary conjecture asserts that every matroid admits
a $1/e$-competitive secretary algorithm, matching the classical
single-choice guarantee. We formulate a finite linear program whose value
is the optimal ordinal competitive ratio of any fixed matroid; for all
matroids of positive rank on seven elements and nearly all on eight, this
value exceeds $1/e$. The same computations suggested that the optimal
ratio is monotone under truncation of the matroid; we prove this for
uniform matroids, where the ratio is strictly increasing in the rank, and
refute it for a graphic matroid. Guided by this
evidence, we prove the conjecture for every linear matroid, a class that
includes graphic matroids, regular matroids, laminar matroids, and gammoids, giving a
$1/e$-competitive ordinal secretary
algorithm. The algorithm maintains bounds on the expected intersection
dimension of the accepted span with every ambient subspace. Uncrossing and separation show
that these bounds can be preserved while admitting each current greedy-basis
element with a prescribed probability and the construction uses finite linear
programs.

For every matroid, we also give a single-sample prophet algorithm with
competitive ratio $1/2$ in any fixed arrival order independent of the samples
and values. Its output, including the selected values, has exactly the law of
an independent fair thinning of an optimum from a fresh product draw. The
algorithm uses $O(n^2)$ independence queries on $n$ elements. Both constants are tight in their
respective models. We also give a self-contained black-box reduction that
converts a single-sample prophet ratio $\alpha$ into a secretary ratio
$\alpha^2/16$, preserving polynomial running time. Our single-sample
algorithm consequently yields a $1/64$-competitive ordinal secretary
algorithm for arbitrary matroids.
\end{abstract}

\section{Introduction}

In the matroid secretary problem, an adversary assigns nonnegative weights to
the $n$ elements of a matroid. The elements then arrive in a uniformly random
order, and the algorithm learns an element's weight only when it arrives.
Each element must be accepted or rejected immediately and irrevocably, and
the accepted set must remain independent. The goal is to obtain, in
expectation, a constant fraction of the weight of a maximum-weight independent
set; the best such fraction guaranteed by an algorithm is its
\emph{competitive ratio}.

In rank one, where any single element is independent, this is the classical
secretary problem. The classical cutoff rule rejects roughly the first $n/e$
arrivals and then accepts the first \emph{record}, that is, the first arrival
heavier than everything seen before. It selects the heaviest element with
probability approaching $1/e$~\cite{ferguson1989}, and in the limit no
algorithm does better, even one that observes the numerical
weights~\cite[Theorem~15]{graphic2025}. Babaioff, Immorlica,
and Kleinberg~\cite{babaioff2007,babaioff2018} introduced the matroid
generalization and asked whether a constant competitive ratio is possible for
every matroid. The \emph{strong matroid secretary conjecture} asks for the
same constant as in rank one.

\begin{conjecture}[Strong matroid secretary conjecture]\label{conj:strong}
Every matroid admits a $1/e$-competitive secretary algorithm.
\end{conjecture}

Two clarifications are needed. First, since $1/e$ is optimal in rank one,
the conjecture asks for the best constant that could hold for every matroid.
Second, an algorithm is \emph{ordinal} if it uses only the relative order of
the weights observed so far, together with the matroid; it never uses
numerical values. In some matroids of higher rank, numerical weights allow a
better ratio than ordinal information does~\cite{chan2015}. Throughout, we
read the conjecture as a statement about ordinal algorithms, which is the
stronger reading, and all our secretary algorithms are ordinal.

Before this work, the conjecture was open even without the constant: the
best general guarantee was of order $1/\log\log r$, where $r$ is the
rank~\cite{lachish2014,feldman2015}. Corollary~\ref{cor:general-secretary}
below gives a constant ratio for every matroid, and Theorem~\ref{thm:linear}
gives the conjectured $1/e$ for every linear matroid. Previously, the
strong conjecture was known for uniform matroids: the ordinal virtual
algorithm of Babaioff, Immorlica, Kempe, and
Kleinberg~\cite{babaioff2007knapsack} is $1/e$-competitive for every rank,
and Corollary~\ref{cor:uniform-strong} in Appendix~\ref{app:truncation}
recovers this as a consequence of strict monotonicity in the rank. It was
also known for transversal matroids: the online bipartite-matching
algorithm of Kesselheim, Radke, T\"onnis, and
V\"ocking~\cite{kesselheim2013} is asymptotically $1/e$-competitive and
admits an ordinal implementation in this setting, and Soto, Turkieltaub, and
Verdugo~\cite{soto2021} give the exact $1/e$ per-element guarantee. For
graphic matroids, the previously best known ratio was $1/3.95$, approaching
$1/e$ only as the girth grows~\cite{graphic2025}.

To see where the difficulty lies, recall why the cutoff rule works. Suppose
the heaviest element arrives at time $t$ after the cutoff $s$. It is a record,
and it is accepted exactly when no record appeared at times $s+1,\ldots,t-1$,
which happens when the heaviest of the first $t-1$ arrivals lies among the
first $s$; this has probability $s/(t-1)$. Nothing else can go wrong: if no
record appeared at those times, nothing has been accepted, so the slot is
free when the heaviest element arrives.
In a matroid, the natural candidate at time $t$ is an element of the greedy
basis of the observed set, and every element of the full greedy basis $B(w)$
is such a candidate whenever it has arrived. But the algorithm may already have accepted a set
that spans the candidate, even a small one. A bound on the number of accepted
elements does not prevent this. What must be controlled is the \emph{span} of
the accepted set, and since the adversary chooses the weights, it must be
controlled in every direction at once.

Our starting point was computational. For a fixed matroid, the optimal
ordinal competitive ratio is the value of a finite linear program, which we
describe in Section~\ref{sec:lp} and prove correct in Appendix~\ref{app:lp}.
To our knowledge
this is the first program that computes this ratio for an arbitrary fixed
matroid. We solved it for all matroids of positive rank on seven elements and
nearly all on eight, and in every case the optimal ratio exceeded
$1/e$.\footnote{Code and data for these computations are available at
\url{https://github.com/TlatoaniHJ/MatroidSecretary}; the repository is
released as is, for full disclosure.} This
evidence for the strong conjecture led us to seek a proof, and the structure of the optimal solutions guided the construction
we present.

We prove the strong conjecture for every linear matroid, using a
computationally unrestricted ordinal algorithm. A matroid is \emph{linear}
if it can be represented by vectors over some field, with independence
meaning linear independence. Linear matroids include essentially all of
the classes of matroids that are usually considered: uniform and partition
matroids, graphic
and cographic matroids, regular matroids (those representable over every
field), transversal matroids and more generally
gammoids~\cite{lindstrom1973,ingletonpiff1973}, and laminar matroids, which
are gammoids~\cite{finkelstein2011,fifeoxley2017}. For all of these,
Theorem~\ref{thm:linear} gives the ratio $1/e$, the best constant that can
hold for every matroid; previously, $1/e$ was known for uniform and
transversal matroids~\cite{babaioff2007knapsack,kesselheim2013,soto2021},
constant ratios were known for some other classes~\cite{babaioff2007,soto2021},
and no constant at all was known for linear matroids in general. Non-linear matroids exist, and in the enumerative sense almost all
matroids are non-linear~\cite{nelson2018}; for them,
Corollary~\ref{cor:general-secretary} still provides a constant ratio. The algorithm behind Theorem~\ref{thm:linear} controls exactly the
quantity identified above: the expected dimension of the intersection of the
accepted span with every subspace.

Our second result concerns the single-sample matroid prophet inequality.
In this model, each element has an independent random value drawn from a
distribution that is not revealed to the algorithm; instead, the algorithm
receives one independent sample from each distribution before the arrivals
begin. The arrival order can be arbitrary, provided it is fixed independently
of the samples and actual values. With full knowledge of the distributions,
the optimal guarantee for matroids is $1/2$~\cite{kleinberg2012}. We show
that one sample per distribution suffices for the same guarantee in every
matroid, through an algorithm whose output has an exact distributional
description: the selected labels and their values are distributed as an
independent fair thinning of an optimal basis from a fresh draw. A
self-contained reduction then converts this into a polynomial-time
$1/64$-competitive ordinal secretary algorithm for every matroid, settling
the original conjecture of Babaioff et al.\ that every matroid admits a
constant competitive ratio.\footnote{Independently and concurrently,
Singla~\cite{singla2026} also resolved the constant-ratio conjecture, with
an ordinal algorithm that accepts each element of the optimum with
probability at least $1/4$ for every matroid, without knowing the matroid
in advance. The two works are incomparable: for linear matroids we obtain
the optimal constant $1/e$, while for general matroids our constant $1/64$
is weaker than his $1/4$. Singla also obtains a constant-competitive
single-sample matroid prophet inequality, with ratio $1/8$;
Theorem~\ref{thm:prophet} gives the tight ratio $1/2$.}

\subsection{Our results}

Throughout, the algorithm is given the ground set and matroid before
arrivals. Let $B(w)$ be the greedy basis under a fixed rule for breaking
ties; its weight is the optimum $\OPT(w)$. Table~\ref{tab:results}
summarizes the results.

\begin{table}[H]
\centering
\begin{tabular}{llll}
\toprule
Model & Matroids & Ratio & Computation\\
\midrule
Ordinal secretary & linear & $1/e$ (tight) & finite; no polynomial bound\\
Single-sample prophet & all & $1/2$ (tight) & $O(n^2)$ independence queries\\
Ordinal secretary & all & $1/64$ & polynomial\\
\bottomrule
\end{tabular}
\caption{Summary of results. The two constants are tight in their models:
$1/e$ for secretary algorithms and $1/2$ for single-sample prophet
inequalities, both already in rank one.}
\label{tab:results}
\end{table}

\begin{theorem}[Linear matroid secretary]\label{thm:linear}
Let $M$ be a linear matroid on $n\ge2$ elements.
For each $1\le s<n$, there is an ordinal secretary algorithm
(Algorithm~\ref{alg:linear}) returning
an independent set $A$ such that, for every nonnegative weight vector
$w$, each element of $B(w)$ is selected with probability exactly
\[
 c_n(s)=\frac{s}{n}\sum_{k=s}^{n-1}\frac1k.
\]
In particular, choosing $s$ from $n$ alone gives expected weight at least
$\OPT(w)/e$. The algorithm is a finite construction using linear programs;
no polynomial running-time bound is asserted.
\end{theorem}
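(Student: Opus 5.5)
We prove Theorem~\ref{thm:linear} by fixing a representation of $M$ by vectors $v_1,\dots,v_n$ in $\F^d$. We build an ordinal algorithm that, at each time $t>s$, looks at the arriving element $e_t$. It considers $e_t$ only if $e_t$ belongs to the greedy basis $B_t$ of the first $t$ arrivals; call such an element a \emph{candidate}. A candidate is accepted with a conditional probability chosen so that, unconditionally, it is accepted with probability exactly $q_t=s/(t-1)$. This mirrors the rank-one calculation in the introduction. An element $e\in B(w)$ arriving at time $t$ is always a candidate, because $B(w)\cap\{\text{first }t\}\subseteq B_t$. If it is accepted with probability $s/(t-1)$ given its arrival time, then averaging over $t$ uniform in $\{1,\dots,n\}$ gives
\[
\frac1n\sum_{t=s+1}^{n}\frac{s}{t-1}=\frac sn\sum_{k=s}^{n-1}\frac1k=c_n(s).
\]
Choosing $s\approx n/e$ then gives at least $1/e$ by the standard estimate $\max_s c_n(s)\ge 1/e$, which we verify with an elementary integral comparison. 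The whole burden therefore lies in the following: accept each candidate $e_t$ with probability exactly $q_t$, conditional only on ordinal information and on the arrival time, while keeping the accepted set independent, so that $v_{e_t}\notin\spn(A_{t-1})$.

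The key invariant controls the span of $A_t$ in every direction at once. For every subspace $W\le\F^d$ we maintain
\[
\E\bigl[\dim(\spn(A_t)\cap W)\bigr]\le \phi_t(W),
\]
where $\phi_t(W)$ is the value obtained if each element of a random-order greedy basis were admitted with probability $s/(k-1)$ at step $k$. Concretely, we take $\phi_t(W)=\sum_{k=s+1}^{t}\frac{s}{k-1}\Pr[e_k\in B_k,\ v_{e_k}\in W\ldots]$, suitably defined through the rank of $W$ restricted to the arrived set. To be precise, we define $\phi_t$ by the natural recursion. For each $W$, the expected number of candidates at step $t$ lying in $W$ beyond the current span is bounded by random-order symmetry: given the set of the first $t$ elements, $e_t$ is uniform among them, so $\Pr[e_t\in B_t\cap W]=\rank(B_t\cap W)/t$. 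Submodularity of $W\mapsto\dim(\spn A\cap W)$ (dimension of intersection with a fixed span is modular on the lattice) is what makes the family of constraints closed under uncrossing.

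The inductive step is the main obstacle. Given a joint law of $(A_{t-1},\text{history})$ satisfying the invariant, we must find acceptance rules achieving probability exactly $q_t$ for each candidate while never accepting a spanned one, and while preserving the invariant at time $t$. We phrase this as a finite LP/flow feasibility problem. The variables are, for each history state and each candidate label, the acceptance probabilities. The constraints are the equalities $\Pr[\text{accept}\mid e_t=e,\text{candidate}]=q_t$, zero acceptance when $v_e\in\spn(A_{t-1})$, and the invariant for every $W$. Only finitely many $W$ matter, namely the flats spanned by subsets of the vectors, so the LP is finite. We prove feasibility via LP duality. A violated dual certificate would be a nonnegative combination of subspace constraints. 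By uncrossing, using the modularity $\dim(X\cap W_1)+\dim(X\cap W_2)\le\dim(X\cap(W_1+W_2))+\dim(X\cap W_1\cap W_2)$, we reduce it to a chain $W_1\le\dots\le W_m$. For a chain we exhibit feasibility directly by a separation argument: the residual capacity $\phi_{t-1}(W_i)-\E\dim(\spn A_{t-1}\cap W_i)$ plus the headroom $s/(t-1)$ covers the expected candidate mass in each layer $W_i\setminus W_{i-1}$, using $\Pr[\text{candidate in layer}]\le(\rank\text{ increment})/t$ and $\frac{t-1-s}{t-1}\ge\ldots$. This chain-covering inequality is where linearity is essential, since the modular identity fails for general matroid flats. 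It is the step we expect to require the most care. Ordinality holds because every quantity depends only on the relative order and the matroid, and ties are resolved by the fixed rule.
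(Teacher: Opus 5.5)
Your skeleton matches the paper's: a per-step LP, a dual certificate uncrossed into a chain, and the average over the last arrival paid for by $|\{e\in B_t:v_e\in W\}|\le\dim W$. The gap is that the invariant meant to drive this is the wrong one.

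As written, $\phi_t(W)=\sum_k\frac{s}{k-1}\Pr[e_k\in B_k,\,v_{e_k}\in W]$ bounds the load $\dim(\spn A_t\cap W)$ by the mass of candidates whose vectors lie in $W$, and this is unsatisfiable. Take the free matroid with vectors $u_1,u_2,u_3$. The line $W=\langle u_1+u_2\rangle$ contains no ground vector, so $\phi_t(W)=0$, which forbids ever accepting $u_1$ and $u_2$ together. Applied to all three pairs, it forces $|A|\le1$, whereas the target gives $\E|A|=3c_3(1)=3/2$. The paper instead replaces $|\{e\in B_k:v_e\in W\}|$ by its bound $\dim W$. Your sum then telescopes to the uniform bound $\beta_t\dim W$ with $\beta_t=1-s/t$, so that $\beta_{t-1}+a_t=1$ and $\beta_{t-1}+a_t/t=\beta_t$. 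Modularity of $\dim$ is exactly what lets uncrossing preserve the budget side of a certificate; a set-dependent rank bound such as $W\mapsto r(\{e:v_e\in W\})$ is not submodular on subspaces.

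The invariant must also be imposed at the right level of conditioning. It should hold for the law $\mu_S$ of the accepted set conditional on each observed set $S$ with its relative order, averaged over arrival orders inside $S$. An unconditional expectation does not control the law of the accepted set at the moment $e$ arrives, which is what availability $\Pr[v_e\in\spn A]\le 1-a_t$ needs. Conditioning on the full history is too strong: already in rank one, after an accepted record the slot is full with certainty.

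Restricting to flats spanned by ground vectors does not survive uncrossing. In $\F_2^3$, $\langle u_1,u_2\rangle\cap\langle u_1+u_3,u_2+u_3\rangle=\langle u_1+u_2\rangle$ is not ground-spanned. Yet the chain step needs the invariant at every chain member $U$ and at $U+\langle v_e\rangle$, and the paper's example shows the ground-flat system is strictly weaker than the full one. So the induction does not close. You need every subspace; the paper makes this finite by passing to a finite-field representation via Rado's theorem.

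Finally, the chain step, which is the heart of the argument, is missing, and your layer accounting misreads the cost. Admitting $e$ at a state $A$ raises the load at $U$ exactly when $v_e\in\spn A+U$, which depends on $A$ and not only on whether $v_e\in U$. The paper's argument runs as follows:
\begin{itemize}
\item The expensive sets $D_U=\{A:v_e\in\spn A+U\}$ are nested along a chain.
\item Filling mass $a$ into feasible states in decreasing order of the first chain index at which they become expensive is therefore cheapest for every member at once.
\item Write $L_U$ for the expected load at $U$. For $v_e\notin U$ the identity $\Pr(D_U)=L_{U+\langle v_e\rangle}-L_U$ holds. Combined with the invariant at $U$ and at $U+\langle v_e\rangle$ and with $\beta+a\le1$, it gives $L_U+(a-1+\Pr(D_U))_+\le\beta\dim U$.
\end{itemize}
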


The quantity $c_n(s)$ is exactly the probability with which the classical
cutoff rule with cutoff $s$ selects the heaviest of $n$ elements: the
heaviest element arrives at a uniform time $t$, and is selected when $t>s$
and the heaviest of the previous $t-1$ arrivals lies among the first $s$.
Thus $c_n(s)=\frac1n\sum_{t=s+1}^n\frac{s}{t-1}$, which tends to $1/e$ when
$s/n\to1/e$. Our algorithm gives \emph{every} element of the optimal basis
this same selection probability.

The construction works with a representation over a finite field. This
does not restrict the theorem's scope: every finite linear matroid has
such a representation by Rado's theorem~\cite{rado1957}, and one can find
it by exhaustive search from the matroid before arrivals begin.

The per-element conclusion is a \emph{probability-competitive} guarantee
in the terminology of Soto, Turkieltaub, and Verdugo~\cite{soto2021}. It is
stronger than a bound on the expected weight. If $H$ is any prefix of the
weight order, greedy selects exactly $\rank_M(H)$ elements from $H$, so our
policy satisfies $\E|A\cap H|\ge c_n(s)\rank_M(H)$ for every such prefix
simultaneously, without knowing which prefix will be used to evaluate it.
Summation by parts (Section~\ref{sec:preliminaries}) turns these prefix
bounds into the weighted bound. The asymptotic constant $1/e$ cannot be
improved, already in rank one, even by algorithms that observe numerical
weights~\cite[Theorem~15]{graphic2025}.

\begin{theorem}[Single-sample matroid prophet]\label{thm:prophet}
Let $M$ be any matroid, and let $(X_e)_{e\in E}$ be independent
nonnegative values. Given one independent sample from each value
distribution, an online algorithm selects an independent set $A$ satisfying
\[
 \E\sum_{e\in A}X_e=\frac12\E\OPT(X).
\]
The guarantee holds in every fixed order independent of the samples and
values. The algorithm uses at most $n+1$ greedy scans, and hence
$O(n^2)$ independence queries. No factor larger than $1/2$ holds uniformly,
even in rank one when the algorithm is given the distributions.
\end{theorem}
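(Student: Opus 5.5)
The plan is to build the algorithm around a symmetrization of each sample--value pair. For each element $e$ the pair $(S_e,X_e)$ is exchangeable, since the two coordinates are i.i.d. Suppose a fair coin $c_e$, independent of everything else, chooses which coordinate plays the ``fresh'' role. Then $U_e$, the chosen coordinate, is again a product draw with the original marginals. The target statement is the following. The algorithm selects exactly those $e\in B(U)$ whose chosen coordinate is the realized value, $U_e=X_e$, and conditionally on $U$ these events are independent fair coins. The selected values then equal $U_e$ on the selected labels. This is precisely ``an independent fair thinning of an optimum from a fresh draw,'' and taking expectations gives $\E\sum_{e\in A}X_e=\frac12\E\OPT(U)=\frac12\E\OPT(X)$.

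The difficulty is that this rule is not online. Membership $e\in B(U)$ depends on coordinates of elements that have not yet arrived. The coin $c_e$ is also not an actual coin, because the algorithm sees which number is $S_e$ and which is $X_e$. I would therefore realize the coins implicitly through the principle of deferred decisions.

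At the arrival of $e$, the algorithm knows the following:
\begin{itemize}
\item the full pair $\{S_f,X_f\}$ for every arrived $f$;
\item only $S_g$ for every unarrived $g$.
\end{itemize}
I would define the decision for $e$ by one greedy scan. The scan uses a hybrid weight vector: $S_g$ on future elements, the ``retained'' coordinate on past elements, and both candidate coordinates for $e$. Element $e$ is accepted iff $e$ enters the greedy basis with its $X$-coordinate and not with its $S$-coordinate. Accepted elements are also required to remain independent of previously accepted ones, and the analysis should show that this requirement is automatic.

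This amounts to one scan per arrival plus one initial scan, hence $n+1$ greedy scans and $O(n^2)$ independence queries. The key lemma is a coupling statement. Swapping $S_g\leftrightarrow X_g$ for an unarrived $g$ does not change the law of the past decisions. Hence, conditionally on the unordered pairs, the labels of the future elements are still uniform. By induction over arrivals, the joint law of (selected set, selected values) then coincides with the offline rule above. The tool for this is the standard matroid exchange argument: the greedy basis on the hybrid vector changes in at most one element when one coordinate is swapped.

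I expect this coupling lemma to be the main obstacle. The hard part is getting exact equality of laws rather than just a $\ge\frac12$ bound. It requires checking that the acceptance of $e$ is measurable with respect to information that is invariant under the future swaps. It also requires checking that conditional independence of the thinning coins survives across elements, and that the accepted set is independent in $M$ because it is always contained in a single basis $B(U)$. If the hybrid-scan rule fails exact independence, my first fallback would be to add explicit fresh randomness: a fair coin for ties and boundary cases, chosen so as to restore the exchangeability. Note also that the guarantee never uses the arrival order except through which coordinates are revealed, so it holds for any fixed order independent of $S$ and $X$.

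For tightness, take rank one with $X_1\equiv1$ and $X_2=1/\varepsilon$ with probability $\varepsilon$, $0$ otherwise, with element $1$ arriving first. Even knowing the distributions, any algorithm earns at most $1$ while $\E\OPT=2-\varepsilon$. Letting $\varepsilon\to0$ rules out any factor above $1/2$.
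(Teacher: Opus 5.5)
Your framing matches the paper's: unordered pairs with fair orientation bits, the exact fair-thinning identity as the target, the single-exchange property of greedy bases, $n+1$ scans, and the same tightness example. But the algorithm and the coupling lemma, which carry the whole proof, do not work.

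Your acceptance rule is deterministic: accept $e$ iff it lies in the basis $D$ of the hybrid vector carrying $X_e$ but not in the basis $C$ carrying $S_e$. This fails in three ways.
\begin{itemize}
\item \emph{The case $e\in C\cap D$ is missing.} If every distribution is a point mass, then $S_e=X_e$, so $C=D$ at every arrival and nothing is ever accepted, although $\frac12\E\OPT(X)>0$. This is not a tie-breaking boundary case. The paper flips a fair coin to decide whether to overwrite the stored coordinate $W_e$ by $X_e$, and accepts $e$ iff it is still in the basis afterwards.
\item \emph{The retained coordinates are random.} The overwrite above also happens when $e\in C\setminus D$ and $e$ is rejected. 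So the ``retained coordinate'' of a past label is itself randomized, not a function of its pair.
\item \emph{The case $e\in D\setminus C$ is handled wrongly.} Here $D=C-f+e$, and accepting deterministically is wrong; independence is not automatic. Take rank one with pairs $\{0,10\}$ for $a$ and $\{5,15\}$ for $b$, arriving in that order, and orientation $S_a=0$, $X_a=10$, $S_b=5$, $X_b=15$. Your rule accepts $a$, and then, if $a$ retains $X_a$, also $b$. The paper refreshes and accepts with probability $0$ if $f\in A$, $1$ if $f$ arrived and was rejected, and $1/2$ if $f$ has not arrived. Since the earlier acceptance bit of $f$ was fair, this averages to $1/2$.
\end{itemize}

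Your key lemma is also false, and no algorithm achieving the target identity can satisfy it. Let $a$ arrive before $b$ in rank one, with $X_a\equiv1$ and $X_b\in\{0,2\}$, $\Pr(X_b=0)=\theta\in(0,1)$. Invariance of the decision law at $a$ under $S_b\leftrightarrow X_b$ on the pair $\{0,2\}$ forces equal acceptance probabilities at $a$ for $S_b=0$ and $S_b=2$. Since the algorithm does not know $\theta$, $\Pr(a\in A)$ is then independent of $\theta$. The identity, however, demands $\Pr(a\in A)=\frac12\Pr(Y_b=0)=\theta/2$. Your own rule exhibits this dependence: with pairs $\{0,10\}$ and $\{5,15\}$ it accepts $a$ with probability $1/2$ if $S_b=5$ and $0$ if $S_b=15$. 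So past decisions must be correlated with the orientations of unarrived labels, and an induction based on ``future labels remain uniform given the past'' cannot go through.

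The paper's invariant conditions on those orientations instead. Let latent bits $Z_e$ record which endpoint is currently stored. By induction over arrivals, $Z$ is uniform on $\{0,1\}^E$, and conditional on the \emph{entire} vector $Z$, including the bits of unarrived labels, $A$ is an independent fair thinning of $B(W(Z))\cap P$. The induction step fixes $(Z_g)_{g\ne e}$ and checks case by case the joint law of $Z_e$ with the acceptance indicators of $e$ and of the exchanged label $f$. For $f\in P$ this gives four cells of mass $1/4$ each.

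Only this distributional statement is proved. The paper's algorithm does not select exactly $\{e\in B(W):W_e=X_e\}$, because a label refreshed downward and rejected can re-enter the final basis.
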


The equality is not a coincidence of the analysis. Section~\ref{sec:prophet}
proves that the selected labels together with their values have exactly the
law of an independent fair thinning of the optimal basis of a fresh draw from
the value distributions, so the factor $1/2$ is the thinning probability.

The two results have different computational requirements. The secretary
construction maintains constraints for a finite but potentially very large
collection of subspaces. The prophet algorithm updates one stored value
vector and recomputes a greedy basis when an element arrives. It uses only
an independence oracle for the matroid.

The single-sample result also yields a secretary guarantee for arbitrary
matroids.

\begin{corollary}[General matroid secretary]\label{cor:general-secretary}
Every matroid admits a polynomial-time $1/64$-competitive ordinal
secretary algorithm.
\end{corollary}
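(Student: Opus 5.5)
The plan is to combine Theorem~\ref{thm:prophet} with the black-box reduction announced in the abstract, which turns a single-sample prophet ratio $\alpha$ into an ordinal secretary ratio $\alpha^2/16$ and preserves polynomial running time. With $\alpha=1/2$ this gives $(1/2)^2/16=1/64$, which is the claimed constant. So the corollary comes down to stating that reduction precisely and checking two things: that its hypotheses are met by the algorithm of Theorem~\ref{thm:prophet}, and that the resulting secretary algorithm is ordinal and polynomial.

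For the reduction, I would first turn the random-order secretary instance into a single-sample prophet instance. Fix an adversarial weight vector $w$ and a random arrival order. I would reveal a random prefix, of length about $n/2$ or drawn from a binomial, as a sample phase. Each element then lands in the sample half or the value half independently with probability about $1/2$. To put this in prophet form, I would pair elements into random ``slots'' or use random duplication, so that each slot holds an element in the sample half and an element in the value half, with the two roles exchangeable. Each element appears in a given role with probability $1/2$, and the roles are independent across elements. Conditioned on the role assignment, this mimics independent samples and values. Paying a factor $1/2$ on each side is where the $\alpha^2$ and the constant $1/16$ should come from: the value of $\OPT$ restricted to the value half is at least $\OPT(w)/4$ in expectation, after an additional loss from the coupling of the sample and the value.

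Ordinality should follow because the prophet algorithm of Theorem~\ref{thm:prophet} only recomputes greedy bases of stored value vectors, and greedy depends only on the relative order of the weights. So the composed algorithm uses only comparisons. Polynomial time follows from the $O(n^2)$ independence-query bound.

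The main obstacle is the independence requirement of Theorem~\ref{thm:prophet}. Its values must be independent across elements, and the arrival order must be fixed independently of the samples and the values. In the secretary embedding, the split into sample and value halves is correlated through the random order, and the ``values'' are deterministic weights assigned by random roles. I would handle this by conditioning on the unordered set of pairs, so that within each pair the choice of which element is the sample is an independent fair coin. That yields a genuine product distribution over two-point laws, and the remaining random order of the value-half arrivals is independent of the coins. The squared factor would then come from losing $1/2$ through this pairing coupling and $1/2$ through the prophet ratio itself. I expect the bookkeeping that gives exactly $\alpha^2/16$, rather than just some constant, to be the delicate part, but any constant of this form proves the corollary once the reduction is established as stated.
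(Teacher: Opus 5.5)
Your top-level plan matches the paper's: the corollary is Theorem~\ref{thm:reduction} applied to the $1/2$-competitive Algorithm~\ref{alg:prophet}, giving $(1/2)^2/16=1/64$, plus an ordinality check. Had you taken that reduction as established, only the ordinality check at the end of Section~\ref{sec:reduction-simulation} would remain. But you present your own construction of the reduction as the substance, and it has a genuine gap. A ``slot'' holding two different elements is not a matroid element. A single-sample matroid prophet instance attaches a sample and a value to the same ground element, so there is no matroid on which to run the prophet algorithm. If you read the split per element instead (sample $w_e$ and value $0$, or the reverse), sample and value are mutually exclusive. That is not the independent product model that Theorem~\ref{thm:prophet}, or any black-box $\alpha$-competitive algorithm, assumes. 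Conditioning on the role assignment also does not mimic independence; it makes everything deterministic. In the independent model an element can have both an active sample and an active value, and then the prophet algorithm may select an element you already rejected in the sample phase. That is the central obstacle of the reduction, and your sketch never confronts it. At activation rate $1/2$ the only available bound on the discarded value is $\frac14\sum_e w_e$, which can be arbitrarily larger than $\OPT(w)$.

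The paper removes this obstacle with three ingredients, none of which appears in your proposal.
\begin{enumerate}
\item A preliminary half-sample and the eligibility filter give candidates $T$ with $\E w(T)\le\OPT(w)$ and $\E\OPT(w|_T)\ge\OPT(w)/2$ (Lemma~\ref{lem:light-candidates}).
\item Rare activation presents each candidate as $(1-q)\delta_0+q\delta_{v_e}$. The useful reward is then linear in $q$, while unimplementable selections need both activations and cost $q^2$, giving the bound $\alpha q\OPT(w|_T)-q^2w(T)$ in~\eqref{eq:reduction-bound}. The constant $\alpha^2/16$ is the maximum of $\alpha q/2-q^2$, attained at $q=\alpha/4$. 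It is not the result of losing two factors of $1/2$; your accounting would give something like $\alpha/2$.
\item A Bernoulli-$q$ word interleaves the stored prefix with the physical arrivals. Every element, including the already rejected ones, thus gets a simulated value arrival, in an order that is uniform and independent of the activations~\eqref{eq:interleaving}. Your remark that the value-half order is independent of the coins does not say where the stored elements' arrivals go, and putting them first or last makes the order depend on the split.
\end{enumerate}
Your ordinality argument also overlooks that the simulated vectors mix observed weights with artificial zeros. The paper first assumes positive distinct weights, so every zero lies below every observed weight, and then extends by limits. Your pairing instinct can be rescued for Algorithm~\ref{alg:prophet} specifically. Take the pair $\{0,w_e\}$ for each single element, use the fixed-pair identity established inside the proof of Theorem~\ref{thm:prophet} (not its statement), and add the interleaving above. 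That argument is not a black-box reduction, and it is different from both the paper's and yours.
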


The implication from constant-sample prophet inequalities to a constant
secretary guarantee is recorded by Fu et al.~\cite{fu2024}, who attribute
it to Wenzheng Li (2023, personal communication). We did not have access to
that argument, so Theorem~\ref{thm:reduction} gives a self-contained
reduction converting any single-sample prophet ratio $\alpha$ into a
secretary ratio $\alpha^2/16$, while preserving polynomial running time.
Instantiating this reduction with our $1/2$-competitive algorithm proves
the corollary. The reduction uses only the expected prophet guarantee, not
the distributional identity.

\emph{Further results.}
Section~\ref{sec:lp} presents the linear program that computes the optimal
ordinal secretary ratio $\rho(M)$ of any fixed matroid $M$, the
computations we performed with it, and the conjecture they suggested;
Appendix~\ref{app:lp} proves that its value is exactly $\rho(M)$
(Theorem~\ref{thm:lp-exact}), so in particular the supremum defining
$\rho(M)$ is attained. Appendix~\ref{app:truncation} studies how $\rho$
behaves under truncation of the matroid. For uniform matroids the ratio is strictly
increasing in the rank (Theorem~\ref{thm:uniform-monotonicity}), which
recovers the known $1/e$ guarantee for uniform
matroids~\cite{babaioff2007knapsack} (Corollary~\ref{cor:uniform-strong});
a graphic matroid shows that this monotonicity fails in general
(Theorem~\ref{thm:truncation-counterexample}).

\subsection{Techniques}\label{sec:techniques}

\emph{The invariant for linear matroids.}
Fix a cutoff $s$ and a linear matroid represented by vectors $v_e$. Write
$W(A)$ for the span of the vectors of an accepted set $A$. Our policy will
accept each greedy-basis candidate arriving at time $t>s$ with probability
exactly $a_t=s/(t-1)$, conditional on the observed set and on the identity of
the arrival, which gives every element of the optimum the classical
probability $c_n(s)$. For this, the candidate must be available: its vector
must not lie in $W(A)$. In rank one, the relevant statistic is the
probability that the slot is used, and the cutoff rule keeps it equal to
$\beta_t=1-s/t$ after $t$ arrivals. For a linear matroid we maintain, for
\emph{every} subspace $U$ of the ambient space,
\[
 \E\dim(W(A)\cap U)\le\beta_t\dim U .
\]
Applied to the line spanned by a candidate $v_e$, this says that the
candidate is blocked with probability at most $\beta_{t-1}$, so at least
$1-\beta_{t-1}=a_t$ of the probability mass is free to accept it. This is
exactly the mass we need to spend. Constraints on lines of ground vectors
alone would not suffice; Section~\ref{sec:linear-invariant} gives an example,
and the proof genuinely uses constraints at subspaces that no set of ground
vectors generates.

\emph{Preserving the invariant.}
Spending mass $a_t$ on accepting $v_e$ raises the load $\dim(W(A)\cap U)$ by
one at each accepting state $A$ with $v_e\in W(A)+U$. The question is whether
the mass can be placed so that no constraint is violated. Along a
\emph{chain} of subspaces the sets of expensive states are nested, so there
is a single order of the states that is cheapest for every member of the
chain at once; placing the mass in that order preserves all the chain's
constraints, by a short calculation that uses the invariant at $U$ and at
$U+\langle v_e\rangle$. An arbitrary family of subspaces admits no common
order. We therefore consider a nonnegative combination of the constraints and
\emph{uncross} it: replacing two incomparable subspaces $U,T$ by $U\cap T$
and $U+T$ keeps the total dimension and can only increase the total load, so
every combination is dominated by a chain. Hence every single combination can
be satisfied, and a minimax argument (separation in a finite-dimensional
space) yields one placement satisfying all constraints simultaneously. This
last step works only \emph{on average} over the identity of the last
arrival, because the candidates form an independent set and an independent
set has at most $\dim U$ vectors in $U$. That average is exactly what a
uniformly random arrival order provides.

\emph{From the recursion to an online policy.}
The argument defines, for each observed set $S$ with its relative weight
order, a law $\mu_S$ on accepted sets, together with transition rules that
modify each $\mu_{S-e}$ by admitting $e$; the average of the modified laws
over $e$ is $\mu_S$. The online policy, when $e$ arrives with
observed set $S$ and accepted set $A$, flips a coin whose bias is read off
the transition rule at $A$. Because the recursion uses only the relative
weight order on $S$, everything it needs is available at decision time.
Section~\ref{sec:linear} presents the policy as pseudocode first and then
proves that it is well defined and has the claimed guarantee; in rank one it
reduces exactly to the classical cutoff rule.

\emph{Maintaining a thinned optimum.}
A \emph{fair thinning} of a set retains each of its elements independently
with probability $1/2$. The prophet algorithm stores a vector, initially the
samples, and keeps its accepted set inside the greedy basis of the stored
vector. When an element arrives, the algorithm may overwrite its stored
coordinate with the actual value, and it couples this overwrite with the
acceptance decision so that two properties hold at all times: the stored
vector has the original product distribution, and conditional on it the
accepted set is a fair thinning of the processed part of its greedy basis.
A one-coordinate change moves a greedy basis by at most one exchange, so only
the arriving element and a possibly displaced basis element are involved. The
critical case is a displaced element that has already arrived: if it was
accepted, the new element must be rejected, and if it was rejected, the new
element is accepted with probability one; since the earlier acceptance was a
fair bit, the average is $1/2$. The proof fixes the two draws of each
coordinate as an unordered pair and tracks the fair bit that says which one is
currently stored.

\emph{Secretary guarantees from samples.}
To use a prophet algorithm for the secretary problem, a rejected prefix of
the arrivals plays the role of the samples. The obstacle is that the prophet
algorithm may later decide to select one of those already rejected elements,
which cannot be implemented. Three moves handle this. A preliminary
half-sample filters the instance to elements not spanned by heavier sample
elements, so that the expected total weight of all remaining candidates is at
most the optimum, while half the optimum survives. Then each remaining
element is independently \emph{activated} with a small probability $q$, both
as a sample and as a value, so that the useful prophet reward is linear in
$q$ while the unimplementable selections, which require both activations,
cost only $q^2$. Finally, an exact interleaving of the stored prefix with the
physical arrivals produces a virtual arrival order independent of the
activations, as the prophet guarantee requires. Optimizing $q$ gives
$\alpha^2/16$.

\subsection{Related work}\label{sec:related}

Babaioff et al.~\cite{babaioff2007,babaioff2018} introduced the matroid
secretary problem and obtained constant ratios for several classes. For
general matroids, Lachish~\cite{lachish2014} and Feldman, Svensson, and
Zenklusen~\cite{feldman2015} gave algorithms with competitive ratio
$\Omega(1/\log\log r)$, the best known before this work. Concurrently and
independently, Singla~\cite{singla2026} obtained a $1/4$
probability-competitive ordinal algorithm for all matroids and a
$1/8$-competitive single-sample prophet inequality; see the footnote in the
introduction for a comparison. For uniform matroids,
Babaioff, Immorlica, Kempe, and Kleinberg~\cite{babaioff2007knapsack} gave
ordinal $1/e$-competitive algorithms for every rank,
Kleinberg~\cite{kleinberg2005} obtained a ratio of $1-O(r^{-1/2})$, and Chan,
Chen, and Jiang~\cite{chan2015} determined optimal thresholds for small
capacities, showing in particular that numerical information improves the
optimal ratio for rank two. For graphic matroids, Banihashem et
al.~\cite{graphic2025} obtained a $1/3.95$ guarantee and a guarantee
approaching $1/e$ as the girth grows, together with a matching high-girth
hardness bound; we use that hardness result in
Appendix~\ref{app:truncation}. Soto, Turkieltaub, and
Verdugo~\cite{soto2021} introduced probability-competitive guarantees and a
forbidden-set framework giving them for graphic, laminar, and other classes,
including the optimal $1/e$ for transversal matroids, where Kesselheim,
Radke, T\"onnis, and V\"ocking~\cite{kesselheim2013} had obtained an
asymptotic $1/e$ for the expected weight.
Buchbinder, Jain, and Singh~\cite{buchbinder2014} characterized optimal
policies for the classical and multiple-choice secretary problems by linear
programs; Section~\ref{sec:lp} and Appendix~\ref{app:lp} extend the
approach to arbitrary matroids.

Prophet inequalities compare against the expected optimum of independent
values with known distributions. Kleinberg and Weinberg~\cite{kleinberg2012}
proved the tight $1/2$ matroid prophet inequality. Azar, Kleinberg, and
Weinberg~\cite{azar2014} introduced prophet inequalities with limited sample
access, and Rubinstein, Wang, and Weinberg~\cite{rubinstein2020} showed that
one sample suffices for the tight rank-one guarantee. For general matroids,
Fu et al.~\cite{fu2024} obtained a $(1/4-\varepsilon)$ guarantee from
$O_\varepsilon(\log^4 n)$ samples per distribution, and Feldman, Svensson,
and Zenklusen~\cite{feldman2025samples} reduced the requirement to
$O_\varepsilon(\log n+\log r\,\log^2\log r)$ samples through sample-based
online contention resolution. Those guarantees permit arrival orders that
depend on the realized samples and values, whereas ours requires the order to
be fixed independently of them; in exchange we use one sample and obtain the
optimal constant.

\emph{Organization.}
Section~\ref{sec:preliminaries} fixes notation and the two models.
Section~\ref{sec:lp} describes the linear program for the optimal ordinal
ratio, the computations, and the truncation conjecture.
Section~\ref{sec:linear} presents the linear-matroid secretary algorithm and
proves Theorem~\ref{thm:linear}; Section~\ref{sec:prophet} does the same
for the prophet algorithm and Theorem~\ref{thm:prophet}; and
Section~\ref{sec:reduction} derives the secretary consequences.
Appendix~\ref{app:lp} states the program in full and proves its
correctness. Appendix~\ref{app:truncation} proves that the optimal ratio
is monotone under truncation for uniform matroids and that this fails for
a graphic matroid.

\section{Preliminaries}\label{sec:preliminaries}

Let $M=(E,\mathcal I)$ be a finite matroid with rank function $r$ and
closure operator $\cl_M$. We allow loops and parallel elements. The ground
set and matroid are known in advance; algorithms may query independence
of arbitrary subsets of $E$. We write $S+e$ and $S-e$ for adjoining and
removing an element. For a nonnegative vector $w$, put
$w(S)=\sum_{e\in S}w_e$ and
$\OPT(w)=\max_{I\in\mathcal I}w(I)$; $\OPT(w|_S)$ denotes the optimum
on the restriction $M|S$. All logarithms are natural.

Fix a total priority order on labels, independently of all weights and
randomness. Greedy scans elements in decreasing weight, using this priority
to break ties, and accepts an element exactly when it preserves
independence. We scan zero-weight elements as well, so the output $B(w)$
is a basis. For a restriction to $S$, write $B_w(S)$, or simply $B(S)$
when the weights are understood.

Three standard greedy properties will be used repeatedly. First,
$w(B(S))=\OPT(w|_S)$. Second, if $e\in B(S)$ and $e\in T\subseteq S$,
then $e\in B(T)$. Indeed, greedy selects $e$ precisely when the
elements preceding it in the greedy order do not span it; deleting elements
cannot destroy this property. Third, for every prefix $H$ of the greedy
order on $S$, $|B(S)\cap H|=r(H)$.

In the secretary model the weights are fixed before an independent uniform
permutation of $E$ is drawn. An arrival is a \emph{record} if it precedes every earlier arrival in
the weight order, including the fixed tie priority. A policy returning $A$ is $c$-competitive if
$\E w(A)\ge c\,\OPT(w)$ for every nonnegative weight vector $w$.
An \emph{ordinal} policy uses only the strict relative
order of weights revealed so far, with the fixed tie rule. A guarantee
$\E|A\cap H|\ge c r(H)$ for every weight prefix implies a weighted
$c$-competitive guarantee. To see this, list the elements in weight order
as $e_1,\ldots,e_n$, put $H_j=\{e_1,\ldots,e_j\}$, and set
$w_{e_{n+1}}=0$. Summation by parts gives
\begin{equation}\label{eq:layer-cake}
 \E w(A)=\sum_{j=1}^n(w_{e_j}-w_{e_{j+1}})\E|A\cap H_j|,
 \qquad
 \OPT(w)=\sum_{j=1}^n(w_{e_j}-w_{e_{j+1}})r(H_j).
\end{equation}
For ordinal policies the converse holds as well: approach the indicator
of a prefix by positive weight vectors preserving the given strict order.
The decision law stays fixed, so taking the limit recovers that prefix's
inequality.

In the single-sample prophet model, $X_e$ has distribution $\mathcal D_e$,
independently across $e$. The algorithm first receives independent
samples $S_e\sim\mathcal D_e$, independent also of $X$. It does not know
the distributions. The arrival order is fixed independently of all samples,
values, and internal random coins; the algorithm need not know it in advance.
A random order independent of these
variables is also permitted, by conditioning on the order. Competitive
ratios compare $\E\sum_{e\in A}X_e$ with $\E\OPT(X)$; one may assume
the latter is finite. The distributional identities below remain valid
without this integrability assumption.

\section{The optimal ordinal ratio as a linear program}\label{sec:lp}

For a matroid $M$ of positive rank, let $\rho(M)$ denote its
\emph{optimal ordinal secretary ratio}: the supremum of all $c$ for which
some ordinal policy is $c$-competitive on $M$. In this notation, the strong
conjecture for ordinal algorithms asserts that $\rho(M)\ge1/e$ for every
matroid. A natural starting point for answer this question is to ask how $\rho(M)$ can be computed
for a given matroid. We answer that question in this section via a finite linear
program whose value is exactly $\rho(M)$, and describe the implications we drew from it. The program and the proof of its correctness are in
Appendix~\ref{app:lp}; the proofs concerning the truncation conjecture
discussed at the end of this section are in Appendix~\ref{app:truncation}.

\emph{The program.}
Fix $M$ with $n$ elements. Since policies are ordinal, a weight vector
matters only through the strict total order it induces on $E$ (ties broken
by the fixed priority), which we call the weight order. The observation
behind the program is that, for an ordinal policy, the probability that
after $t$ arrivals the observed set is $S$ and the accepted set is $A$
depends on the hidden weight order only through its restriction to $S$
(Lemma~\ref{lem:ordinal-invariance}). A \emph{state} is therefore a pair
$(\sigma,A)$, where $\sigma$ lists an observed set in decreasing weight
order and $A\subseteq S_\sigma$ is independent. The program has a variable
$p_{\sigma,A}$ for the probability of each state, and variables
$y_{\sigma,e,A}$ and $z_{\sigma,e,A}$ for the probabilities of accepting
and rejecting $e$ when it arrives with prior state $(\sigma-e,A)$. Its
constraints are flow conservation, which says that after a state the next
label is uniform among the unobserved labels and is either accepted or
rejected, and that each state is reached through its possible last
arrivals; feasibility, which forbids dependent acceptances; and
competitiveness, which requires, for every weight order $\pi$ and every
prefix $H$ of $\pi$, that the accepted elements of $H$ number at least
$c\,r(H)$ in expectation. The objective is to maximize $c$. The
competitiveness constraints are the prefix form of $c$-competitiveness,
equivalent to the weighted form for ordinal policies
by~\eqref{eq:layer-cake}. Because a single set of variables is shared by
all weight orders inducing the same relative order on an observed set, one
program encodes the entire adversarial problem at once.

\begin{theorem}[Theorem~\ref{thm:lp-exact}, restated informally]
For every matroid with a nonloop, the value of the program~\eqref{eq:lp} is
$\rho(M)$. Every feasible solution with objective $c$ yields a
$c$-competitive ordinal policy that depends only on the observed weight
order, the arriving label, and the accepted set, and the supremum defining
$\rho(M)$ is attained by such a policy.
\end{theorem}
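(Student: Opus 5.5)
The proof has two directions. We show that every ordinal policy yields a feasible point of the program~\eqref{eq:lp} with objective equal to its competitive ratio. Conversely, every feasible point yields a policy with at least that ratio. The program has finitely many variables and constraints, so its optimum is attained; this gives both the value identity and the attainment of the supremum defining $\rho(M)$.

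\emph{Policy to LP.} Fix any ordinal policy, possibly randomized and history-dependent. The key input is the ordinal-invariance statement (Lemma~\ref{lem:ordinal-invariance}). The probability that after $t$ arrivals the observed set is $S$ and the accepted set is $A$ depends on the weight order $\pi$ only through $\pi|_S$. I would prove this by induction on $t$. The arrival order is a uniform permutation, so the prefix of observed labels is a uniform ordering of $S$, independent of weights. The policy's coins at each step depend only on the relative order among the observed labels, which is $\pi|_S$.

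With this in hand, set $p_{\sigma,A}$ to be this common probability. Set $y_{\sigma,e,A}$ and $z_{\sigma,e,A}$ to be the probabilities that the state before the last arrival is $(\sigma-e,A)$, the last arrival is $e$, and $e$ is accepted or rejected respectively. The flow-conservation constraints hold for two reasons: the next label is uniform among unobserved labels, and every state at time $t$ arises from a unique last arrival. Feasibility holds because the policy keeps $A$ independent. The competitiveness constraints hold with $c$ equal to the policy's ratio, by the prefix form of~\eqref{eq:layer-cake}: the ordinal converse given in Section~\ref{sec:preliminaries} turns $c$-competitiveness into $\E|A\cap H|\ge c\,r(H)$ for every prefix $H$.

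\emph{LP to policy.} Given a feasible $(p,y,z,c)$, define a Markov policy. When $e$ arrives, let $\sigma$ be the new observed order and $A$ the current accepted set. The policy accepts $e$ with probability $y_{\sigma,e,A}/(y_{\sigma,e,A}+z_{\sigma,e,A})$, using an arbitrary rule when the denominator is zero; the feasibility constraints force $y=0$ when $A+e$ is dependent. Only $\sigma$, $e$ and $A$ are used, so the policy is ordinal of the claimed form. Induction on $t$, using the flow constraints, shows that the policy's state distribution equals $p$. The constraint "mass leaving $(\sigma-e,A)$ through $e$ equals $p_{\sigma-e,A}/(n-t+1)$" pins down $y+z$, and the second flow constraint reassembles $p_{\sigma,\cdot}$. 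The competitiveness constraints then read exactly $\E|A\cap H|\ge c\,r(H)$ for every order and prefix, and~\eqref{eq:layer-cake} gives $c$-competitiveness.

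\emph{Main obstacle.} The delicate step is the induction identifying the constructed policy's state law with $p$ when some states have zero probability. Care is needed so that arbitrary choices at null states do not break the equality of accepted-set laws at time $n$. Care is also needed so that competitiveness, which is evaluated through end states $(\sigma,A)$ with $\sigma$ a full order, uses the correctly marginalized probabilities. I would handle this by proving the stronger statement that the policy's joint law of the pair (observed prefix ordered by weight, accepted set) equals $p$ at every time, treating null states separately.

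Finally, attainment of the supremum uses that the feasible region is a nonempty polyhedron (the never-accept policy with $c=0$ is feasible) and that $c$ is bounded by $1$. A nonloop exists, so $r(H)>0$ for some prefix $H$, which bounds $c$. Hence the maximum is attained and $\rho(M)$ is realized by a policy.
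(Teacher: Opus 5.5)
Your proposal is correct and follows the paper's proof essentially step for step. You use ordinal invariance to read $p,y,z$ off a policy. You use the Markov rule accepting with probability $y_{\sigma,e,A}/(y_{\sigma,e,A}+z_{\sigma,e,A})$, with an induction on $|S|$ that identifies both the state and the transition probabilities, so that $x^\pi_e$ is the acceptance probability. Attainment comes from the always-reject solution together with $c\le1$, obtained by placing a nonloop first.

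The null-state obstacle you flag takes one line: since $y+z=p_{\sigma-e,A}/(n-|\sigma|+1)$, both transitions have probability zero whenever the denominator vanishes. The paper rejects in that case. Prefer that over an arbitrary rule, because it keeps the output independent on every realization, not just almost surely.
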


Linear programming characterizations of secretary policies go back to
Buchbinder, Jain, and Singh~\cite{buchbinder2014}, who treated the classical
and multiple-choice problems, that is, rank one and uniform matroids, where
a policy's situation is summarized by the time and the number of acceptances
so far. For a general matroid the accepted set and the observed weight order
both matter, and the program above indexes states by them; to our knowledge
it is the first formulation that computes $\rho(M)$ for an arbitrary
matroid. Its size grows faster than exponentially in $n$, but symmetries of
$M$ and the merging of states that admit the same future decisions reduce
it substantially without changing its value (Appendix~\ref{sec:lp-reductions}).

Using the catalogue of all matroids on at most nine elements compiled by
Mayhew and Royle~\cite{mayhewroyle2008}, we solved the program to compute
$\rho(M)$ for all matroids of positive rank on seven elements and for
nearly all such matroids on eight elements.\footnote{The implementation and its outputs are available at
\url{https://github.com/TlatoaniHJ/MatroidSecretary}, released as is for
full disclosure.}
In every case the optimal ratio exceeded $1/e$, and furthermore in every case the optimal ratio for a matroid on $n$ elements exceeded the optimal ratio for the rank-$1$ matroid on $n$ elements (with the exception of said matroid itself). This served as significant evidence for the strong conjecture.

We additionally describe other investigations performed based on evidence from the results of the linear program. For $1\le k\le r(M)$, the rank-$k$ \emph{truncation} $M^{(k)}$ of $M$ is the
matroid on the same ground set whose independent sets are the independent
sets of $M$ of size at most $k$; thus $M^{(r(M))}=M$, and $M^{(1)}$ is the
rank-one matroid whose nonloops are the nonloops of $M$. The rank-one
problem is the classical secretary problem, so $\rho(M^{(1)})\ge1/e$. If
$\rho(M^{(k)})$ were nondecreasing in $k$, the strong conjecture would
follow at once: $\rho(M)\ge\rho(M^{(1)})\ge1/e$. The computed values of
$\rho$ on small matroids and their truncations were consistent with this
monotonicity, and we conjectured it.

\begin{conjecture}[Truncation monotonicity; false in general]\label{conj:truncation}
For every matroid $M$ and every $2\le k\le r(M)$,
$\rho(M^{(k)})\ge\rho(M^{(k-1)})$.
\end{conjecture}

The conjecture is true for uniform matroids, where truncation only lowers
the capacity: Theorem~\ref{thm:uniform-monotonicity} shows that
$\rho(U_{k,n})$ is strictly increasing in $k$, which recovers the known
$1/e$ guarantee for uniform matroids~\cite{babaioff2007knapsack}
(Corollary~\ref{cor:uniform-strong}). It is false in general. The
refutation came from our earlier high-girth hardness result for graphic
matroids~\cite{graphic2025}: a graph of girth at least four has rank-three
truncation equal to a uniform matroid, whose ratio exceeds $0.53$, while
there are graphic matroids of arbitrarily large girth whose optimal ratios
are arbitrarily close to $1/e$.
Theorem~\ref{thm:truncation-counterexample} gives a self-contained
counterexample on $K_{2,N}$, and since its upper bound allows the algorithm
to observe numerical weights, the conjecture fails without the ordinal
restriction as well. The failure of this route is part of why the proof of
Section~\ref{sec:linear} controls the accepted span directly rather than
arguing by rank.

\section{A tight guarantee for matroid secretary on linear matroids}\label{sec:linear}

This section proves Theorem~\ref{thm:linear}. We present the algorithm
first, in Section~\ref{sec:linear-algorithm}, as pseudocode built on a
recursively defined family of probability laws and linear programs. The
remainder of the section explains why the algorithm is well defined and why
it has the claimed guarantee. Section~\ref{sec:linear-invariant} introduces
the invariant that controls the accepted set. Section~\ref{sec:linear-chain}
shows how to admit a new element while preserving the invariant along a
chain of subspaces, and how to reduce arbitrary collections of subspaces to
chains by uncrossing. Section~\ref{sec:linear-extension} combines these with
a minimax argument to show that the algorithm's linear programs are always
feasible. Section~\ref{sec:linear-policy} verifies that the online coins
implement the recursion and computes the selection probabilities, and
Section~\ref{sec:linear-cutoff} chooses the cutoff.

We first obtain a representation over a finite field. Rado's
theorem~\cite{rado1957} guarantees that one exists for every finite linear
matroid, including those initially represented over an infinite field.
If no finite-field representation is supplied, enumerate finite fields and
matrices with $\rank(M)$ rows and $n$ columns, testing each matrix against
the independence relation of $M$ on all subsets. This search terminates
under the promise that $M$ is linear and can be performed before any weights
are observed.

Fix the resulting representation $(v_e)_{e\in E}$ in a finite-dimensional
vector space $V$ over a finite field. We may replace $V$ by the span of the
represented vectors. For any subset $A\subseteq E$, write
$W(A)=\spn\{v_e:e\in A\}$, and for a subspace $U\le V$ define the
\emph{load} of $A$ at $U$ as
\[
 \ell_U(A)=\dim(W(A)\cap U).
\]
For a probability law $\mu$ on independent sets and $\beta\ge0$, call $\mu$
\emph{$\beta$-bounded} if
\begin{equation}\label{eq:bounded}
 L_U:=\E_{A\sim\mu}\ell_U(A)\le\beta\dim U
 \qquad\text{for every }U\le V.
\end{equation}
Since $V$ is finite, there are finitely many subspaces and finitely many
independent sets, so~\eqref{eq:bounded} is a finite system of linear
inequalities in the values of $\mu$.

\subsection{The algorithm}\label{sec:linear-algorithm}

Fix a cutoff $1\le s<n$. The algorithm rejects the first $s$ arrivals. For
$t\ge s$, put
\begin{equation}\label{eq:schedule}
 \beta_t=1-\frac st,
 \qquad
 a_t=\frac{s}{t-1}\quad(t>s),
 \qquad\text{so that}\qquad
 \beta_{t-1}+a_t=1
 \quad\text{and}\quad
 \beta_{t-1}+\frac{a_t}t=\beta_t .
\end{equation}
The number $\beta_t$ is the bound the algorithm maintains after $t$
arrivals, and $a_t$ is the probability with which it admits a greedy-basis
candidate arriving at time $t$. In rank one these are the classical cutoff
rule's probability that the slot is already used after $t$ arrivals and its
acceptance probability of a record at time $t$.

For an observed set $S$ with a given relative weight order, the algorithm
uses a law $\mu_S$ on independent subsets of $S$ and, for each $e\in S$, a
\emph{transition} $q^S_e$ that modifies $\mu_{S-e}$ by admitting $e$; the
average over $e\in S$ of the modified laws is $\mu_S$. These are defined
recursively as follows.

\begin{definition}[The laws $\mu_S$ and transitions $q^S$]\label{def:recursion}
Let $S\subseteq E$ with a relative weight order, and let $t=|S|$.
\begin{enumerate}[label=(\roman*)]
\item If $t\le s$, then $\mu_S$ is the point mass at $\varnothing$.
\item If $t>s$, let $B=B(S)$ be the greedy basis of $S$ under the given
order, and assume $\mu_{S-e}$ has been defined for every $e\in S$, with
respect to the weight order restricted to $S-e$. Consider
the \emph{transition program} $(P_S)$ in the variables
$q_e(A)$, for $e\in S$ and independent $A\subseteq S-e$:
\begin{subequations}\label{eq:transition-program}
\begin{alignat}{2}
 & 0\le q_e(A)\le\mu_{S-e}(A),\label{eq:tp-mass}\\
 & q_e(A)=0 &\quad& \text{if } v_e\in W(A) \text{ or } e\notin B,
   \label{eq:tp-feasible}\\
 & \textstyle\sum_A q_e(A)=a_t && \text{for every } e\in B,
   \label{eq:tp-quota}\\
 & g_U(q)\le\beta_t\dim U && \text{for every subspace } U\le V,
   \label{eq:tp-invariant}
\end{alignat}
\end{subequations}
where $g_U(q)$ is the expected load at $U$ after choosing $e\in S$
uniformly and moving mass $q_e(A)$ from $A$ to $A+e$ under $\mu_{S-e}$:
\begin{equation}\label{eq:g-def}
 g_U(q)=\frac1t\sum_{e\in S}\Bigl[
   \sum_A\mu_{S-e}(A)\,\ell_U(A)
   +\sum_A q_e(A)\bigl(\ell_U(A+e)-\ell_U(A)\bigr)\Bigr].
\end{equation}
Let $q^S$ be a canonical feasible solution of $(P_S)$, for example the
lexicographically smallest one. Define $\mu_S$ as the resulting average law:
for independent $A'\subseteq S$,
\begin{equation}\label{eq:mu-def}
 \mu_S(A')=\frac1t\sum_{e\in S}
 \begin{cases}
  \mu_{S-e}(A')-q^S_e(A') & e\notin A',\\
  q^S_e(A'-e) & e\in A'.
 \end{cases}
\end{equation}
\end{enumerate}
\end{definition}

Constraint~\eqref{eq:tp-mass} says that $q_e$ moves part of the mass that
$\mu_{S-e}$ places at $A$ to $A+e$; \eqref{eq:tp-feasible} forbids dependent
additions and admits only greedy-basis elements; \eqref{eq:tp-quota} says
that each greedy-basis element is admitted with total probability $a_t$; and
\eqref{eq:tp-invariant} asks that the averaged law remain $\beta_t$-bounded.
The definition makes sense only if $(P_S)$ is feasible and each $\mu_{S-e}$
is a probability law; both are established in
Section~\ref{sec:linear-extension}. The program is finite, with rational
data, since the weights enter only through their order.

\begin{algorithm}[htbp]
\caption{Ordinal secretary policy for a linear matroid}\label{alg:linear}
\DontPrintSemicolon
\KwIn{A representation $(v_e)_{e\in E}$ of $M$ over a finite field; a cutoff $1\le s<n$.}
Initialize $A\gets\varnothing$\;
\For{$t=1,\ldots,n$, with arriving element $e$ and observed set $S$ (so $|S|=t$)}{
  \lIf{$t\le s$}{reject $e$}
  \Else{
    Compute $\mu_{S-e}$ and $q^S$ from the observed relative weight order on $S$, by Definition~\ref{def:recursion}\;
    With probability $q^S_e(A)/\mu_{S-e}(A)$ (zero if $\mu_{S-e}(A)=0$), accept $e$ and set $A\gets A+e$; otherwise reject $e$\;
  }
}
\Return{$A$}
\end{algorithm}

Algorithm~\ref{alg:linear} is ordinal: the recursion uses only the relative
weight order on subsets of $S$ and the fixed representation, so all its
inputs are available when $e$ arrives. It is well defined by
Proposition~\ref{prop:feasible} below, which also shows that the acceptance
probability lies in $[0,1]$ and that accepted sets are independent. The
recursion can be evaluated exactly: all subspaces and independent sets can
be enumerated over the finite field, the programs have rational
coefficients, and canonical rational solutions can be chosen. The number of
states and constraints can be enormous, which is why no running-time bound
is claimed.

\begin{proposition}[Well-definedness]\label{prop:feasible}
For every $S$ with relative weight order and $|S|\ge s$, the law $\mu_S$ is
a probability law on independent subsets of $S$ and is
$\beta_{|S|}$-bounded. For $|S|>s$ the program $(P_S)$ is feasible.
\end{proposition}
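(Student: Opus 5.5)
We argue by induction on $t=|S|$. In the base case $t=s$, $\mu_S$ is the point mass at $\varnothing$. Every load is then zero, so $\mu_S$ is $\beta_s$-bounded with $\beta_s=0$. For the inductive step, fix $S$ with $|S|=t>s$. Assume every $\mu_{S-e}$ is a probability law on independent subsets of $S-e$ and is $\beta_{t-1}$-bounded. Two things must be shown: first, that $(P_S)$ is feasible; second, that any feasible $q$ makes~\eqref{eq:mu-def} a $\beta_t$-bounded probability law on independent subsets of $S$.

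The second part is bookkeeping, so we do it first. Constraint~\eqref{eq:tp-mass} makes every entry of~\eqref{eq:mu-def} nonnegative. Moving mass from $A$ to $A+e$ preserves total mass, so $\mu_S$ has total mass one. Constraint~\eqref{eq:tp-feasible} ensures $A+e$ is independent whenever mass is moved, so $\mu_S$ is supported on independent subsets of $S$. Finally, $\E_{\mu_S}\ell_U=g_U(q)$ by construction, so~\eqref{eq:tp-invariant} is exactly $\beta_t$-boundedness.

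The substance is feasibility, which we prove through the chain and uncrossing argument of Section~\ref{sec:techniques}.

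\emph{Step 1: a single candidate is never blocked too often.} For $e\in B$, apply $\beta_{t-1}$-boundedness of $\mu_{S-e}$ at the line $\langle v_e\rangle$. This gives $\Pr_{\mu_{S-e}}[v_e\in W(A)]\le\beta_{t-1}$, so the unblocked mass is at least $1-\beta_{t-1}=a_t$. Hence constraints~\eqref{eq:tp-mass}, \eqref{eq:tp-feasible} and~\eqref{eq:tp-quota} are jointly satisfiable for each $e$ separately.

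\emph{Step 2: chains.} Fix a chain $U_1<\dots<U_m$ of subspaces. For each $e$, the sets $\{A: v_e\in W(A)+U_i\}$ are nested in $i$; these are exactly the states where admitting $e$ raises the load at $U_i$. So we place the mass $a_t$ first on states with $v_e\notin W(A)+U_m$, then on states with $v_e\notin W(A)+U_{m-1}$, and so on. This order is simultaneously cheapest for every $U_i$. We then bound the resulting increase of the load at $U$. Its cost is at most $\max\bigl(0,\,a_t-\Pr[v_e\notin W(A)+U]\bigr)$. To control $\Pr[v_e\in W(A)+U]$ we use the identity
\[
\ell_{U+\langle v_e\rangle}(A)-\ell_U(A)=\ind[v_e\in W(A)+U]
\]
for $v_e\notin U$, together with $\beta_{t-1}$-boundedness at both $U$ and $U+\langle v_e\rangle$.

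\emph{Step 3: averaging over $e$.} The averaged bound then needs
\[
\tfrac1t\sum_{e\in S}\E\ell_U+\tfrac1t\sum_{e\in B}(\text{extra})\le\beta_t\dim U .
\]
Here we use the identity $\beta_{t-1}+a_t/t=\beta_t$ and the fact that $B$ is independent, so at most $\dim U$ of the vectors $v_e$, $e\in B$, lie in $U$. We also need that $\frac1t\sum_{e\in S}\E_{\mu_{S-e}}\ell_U$ is controlled. It is at most $\beta_{t-1}\dim U$ by induction. Each $e\in B$ with $v_e\in U$ can contribute at most $a_t$ extra load at $U$, and those with $v_e\notin U$ are charged through the identity above. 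The hope is that the total extra is at most $a_t\dim U$, which after dividing by $t$ closes the inequality. I expect this counting, especially bounding the contribution of $v_e\notin U$ via $U+\langle v_e\rangle$, to be the delicate calculation.

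\emph{Step 4: uncrossing and minimax.} $(P_S)$ is a finite LP. By Farkas' lemma, or equivalently LP duality (the minimax step), it suffices to show that for every nonnegative combination $\sum_U\lambda_U(\cdot)$ of the invariant constraints there is a $q$ satisfying the per-element constraints and $\sum_U\lambda_U g_U(q)\le\beta_t\sum_U\lambda_U\dim U$. The feasible set for the per-element constraints is a product of polytopes, hence convex and compact, and $g$ is linear in $q$, so this reduction is valid.

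Next we uncross the combination. Replace incomparable $U,T$ by $U\cap T$ and $U+T$. The right side is unchanged by the modular law for dimensions. The left side can only increase: $\ell$ is supermodular over subspaces, because
\[
\dim(W\cap U)+\dim(W\cap T)\le\dim(W\cap(U\cap T))+\dim(W\cap(U+T)),
\]
and the same holds for the increments from adding $v_e$. After finitely many steps the support of the combination is a chain, and Step 3 applies.

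The main obstacle is Step 2/3: verifying that the greedy ordering along a chain actually keeps each chain member within $\beta_t\dim U$ on average over $e$. The uncrossing and duality steps are standard.
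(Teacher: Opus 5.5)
Your outline follows the paper's architecture: induction on $|S|$, availability at the line $\langle v_e\rangle$, a filling order along a chain that is cheapest for every chain member at once, uncrossing to a chain, and a minimax step. The bookkeeping part is correct. However, the decisive inequality, that the chain rule keeps each chain member within budget after averaging, is left as a ``hope''. The accounting you sketch for it would not close as stated.

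\textbf{Where your accounting fails.} You bound each row's base load by $\beta_{t-1}\dim U$ and then aim to show that the \emph{total extra} over $e\in B$ is at most $a_t\dim U$. In a row with $v_e\notin U$, the extra is $(a_t-1+p_U)_+$, where $p_U=\mu_{S-e}\{A:v_e\in W(A)+U\}$. This is generally positive. For instance, if $U$ is a line with $L_U=0$ and $L_{U+\langle v_e\rangle}=2\beta_{t-1}$, the extra is $\beta_{t-1}$. So the extras of these rows are not controlled by the shared budget $a_t\dim U$, and summed over many rows they can exceed it.

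\textbf{The missing idea.} These extras are paid by each row's own slack $\beta_{t-1}\dim U-L_U$, not by the shared budget. For $v_e\notin U$, your identity gives $L_U+p_U=L_{U+\langle v_e\rangle}\le\beta_{t-1}(\dim U+1)$. Using $a_t=1-\beta_{t-1}$, this yields
\[
 L_U+(a_t-1+p_U)_+=\max\bigl\{L_U,\;L_{U+\langle v_e\rangle}-\beta_{t-1}\bigr\}\le\beta_{t-1}\dim U .
\]
Thus every such row ends within $\beta_{t-1}\dim U$, with no excess at all. The only rows that exceed $\beta_{t-1}\dim U$ are those with $e\in B$ and $v_e\in U$, each by exactly $a_t$. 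Independence of $B$ limits these to at most $\dim U$ rows, and rows with $e\notin B$ are unchanged. Averaging over the $t$ rows gives $(\beta_{t-1}+a_t/t)\dim U=\beta_t\dim U$. This is exactly the paper's chain-admission lemma; with it, your Steps~2--4 go through.

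\textbf{Two smaller points.}
\begin{enumerate}
\item Your remark that the increments $\ell_U(A+e)-\ell_U(A)$ are also supermodular in $U$ is false. Take $W(A)=\langle u_1\rangle$, $U=\langle u_2\rangle$, $T=\langle u_1+u_2\rangle$ and $v_e=u_2$: the left side is $2$ and the right side is $1$. The remark is not needed, though. $g_U(q)$ is the expectation of $\ell_U$ under the updated law, a nonnegative combination of values $\ell_U(A)$ and $\ell_U(A+e)$, so pointwise supermodularity of $\ell$ already gives the domination.
\item The claim that uncrossing stops after finitely many steps needs an argument. The paper takes rational prices, clears denominators, and uses the strict increase of the sum of squared dimensions as a bounded integer potential.
\end{enumerate}
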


\begin{proposition}[Selection probabilities]\label{prop:selection}
Under Algorithm~\ref{alg:linear}, for every weight vector $w$ and every
$e\in B(E)$, conditional on the observed set $S$ containing $e$ with $e$
arriving last at time $t>s$, the element $e$ is accepted with probability
exactly $a_t$. Consequently $\Pr(e\in A)=c_n(s)$.
\end{proposition}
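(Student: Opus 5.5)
We prove Proposition~\ref{prop:selection} assuming Proposition~\ref{prop:feasible}. The plan is to show by induction on $t$ that the algorithm's state law matches the recursion. After conditioning on the weight order and on the observed set $S$ at time $t$ (with its relative order), the accepted set $A_t$ should have law exactly $\mu_S$. The per-element claim is then read off the coin in Algorithm~\ref{alg:linear}.

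\textbf{Step 1: the state law equals $\mu_S$.} For $t\le s$ nothing is accepted, so $A=\varnothing$, which matches Definition~\ref{def:recursion}(i). For $t>s$, condition on the observed set $S$ at time $t$. Under a uniformly random arrival order, the last arrival $e$ is uniform on $S$. Given $e$, the prefix observed set is $S-e$, and the history up to time $t-1$ is distributed as the process conditioned on observing $S-e$. By induction, the accepted set before $e$ then has law $\mu_{S-e}$; the relative order on $S-e$ is the restriction of the order on $S$, matching the recursion's convention. The algorithm moves $A$ to $A+e$ with probability $q^S_e(A)/\mu_{S-e}(A)$. This is in $[0,1]$ by \eqref{eq:tp-mass}, and the move is to an independent set by \eqref{eq:tp-feasible}. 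So the joint mass moved from $A$ to $A+e$ is exactly $q^S_e(A)$. Averaging over the uniform $e$ gives precisely formula \eqref{eq:mu-def}, so $A_t\sim\mu_S$.

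The one point needing care is that conditioning on ``the observed set at time $t-1$ is $S-e$'' does not bias the earlier history. This holds because, given the set of the first $t-1$ arrivals, their internal order is uniform, and the future arrival $e$ is independent of the past coins. I expect this bookkeeping to be the main obstacle, though it is routine. I would formalize it by conditioning on the unordered prefix sets $S_1\subset\dots\subset S_t$; under a uniform permutation these form a uniformly random maximal chain. The coins depend only on the current set, its order and $A$, so the induction goes through chain by chain.

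\textbf{Step 2: acceptance probability $a_t$.} Fix $e\in B(E)$. If $e\in S$, then $e\in B(S)$ by the second greedy property, since $e\in S\subseteq E$. Condition on observing $S$ at time $t>s$ with $e$ last. By Step~1 the prior accepted set has law $\mu_{S-e}$, so
\[
\Pr(\text{accept}\mid\cdot)=\sum_A \mu_{S-e}(A)\,\frac{q^S_e(A)}{\mu_{S-e}(A)}=\sum_A q^S_e(A)=a_t
\]
by \eqref{eq:tp-quota}. Terms with $\mu_{S-e}(A)=0$ contribute nothing, since \eqref{eq:tp-mass} forces $q^S_e(A)=0$ there. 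Feasibility of $(P_S)$ is supplied by Proposition~\ref{prop:feasible}.

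\textbf{Step 3: the unconditional probability.} The arrival time $t$ of $e$ is uniform on $\{1,\dots,n\}$. For $t\le s$ the element is rejected. For $t>s$ the conditional acceptance probability is $a_t=s/(t-1)$ regardless of $S$. Therefore
\[
\Pr(e\in A)=\frac1n\sum_{t=s+1}^n\frac{s}{t-1}=\frac sn\sum_{k=s}^{n-1}\frac1k=c_n(s).
\]
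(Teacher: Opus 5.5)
Your proposal is correct and follows essentially the paper's route. Your Step~1 is Lemma~\ref{lem:implementation}: induction on $t$, with the last arrival uniform in $S$ and the coin moving mass exactly $q^S_e(A)$, so that averaging reproduces~\eqref{eq:mu-def}. Steps~2--3 match the paper's use of the second greedy property, the quota~\eqref{eq:tp-quota}, and the uniform arrival position of $e$.
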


Theorem~\ref{thm:linear} follows from these two propositions and the choice
of cutoff in Section~\ref{sec:linear-cutoff}. The recursion is, in the
language of Appendix~\ref{app:lp}, an explicit feasible solution of the
per-element version of the ordinal secretary program, built one observed set
at a time.

\emph{The algorithm in rank one.}
If $M$ has rank one and no loops, then $V$ is a line, its only subspaces
are $0$ and $V$,
the greedy basis of $S$ is its heaviest element, and an accepted set is
either empty or a singleton whose span is $V$. We claim that
$\mu_S(\varnothing)=1-\beta_t=s/t$ for $|S|=t\ge s$. This holds for $t=s$.
For $t>s$, the only row of $(P_S)$ with a nonzero quota is that of the
heaviest element $e$ of $S$, and its quota $a_t=s/(t-1)$ equals
$\mu_{S-e}(\varnothing)$ by induction, so~\eqref{eq:tp-quota} forces
$q^S_e(\varnothing)=\mu_{S-e}(\varnothing)$; the constraint at $V$ then
holds with equality by~\eqref{eq:schedule}, and~\eqref{eq:mu-def} gives
$\mu_S(\varnothing)=\frac1t\bigl[(t-1)\tfrac{s}{t-1}+0\bigr]=s/t$.
Algorithm~\ref{alg:linear} therefore accepts an arrival with probability one
exactly when $t>s$, the arrival is a record, and nothing has been accepted:
this is the classical cutoff rule, and $\beta_t=1-s/t$ is the probability
that it has accepted something by time $t$. The subspace invariant is what
replaces the single number $\beta_t$ in higher rank.

\subsection{The subspace invariant}\label{sec:linear-invariant}

We first give intuition as to why it is necessary to control the load at every subspace, rather than simply those of dimension $1$. For a nonzero vector $v$, the constraint at the line $\langle v\rangle$ gives
\begin{equation}\label{eq:availability}
 \Pr(v\in W(A))=\E\ell_{\langle v\rangle}(A)\le\beta,
\end{equation}
so at least $1-\beta$ of the probability mass permits adding $v$. By the
schedule~\eqref{eq:schedule}, after $t-1$ arrivals this is exactly $a_t$,
the mass we wish to spend on a candidate arriving at time $t$. Lines of
ground vectors would suffice for availability alone. They do not suffice to
propagate the invariant, and the constraints on spans of ground elements
are strictly weaker than~\eqref{eq:bounded}. For example, let $u_1,u_2,u_3$
be a basis of $\F_2^3$, take ground vectors $u_1,u_2,u_1+u_3,u_2+u_3$, and
choose $A$ uniformly from the first and last pairs. The expected loads of
ground-generated lines and planes are at most $1/2$ and $3/2$, respectively,
while $V$ has load $2$. These constraints hold with $\beta=3/4$, yet the
common line $\langle u_1+u_2\rangle$ of the two possible spans has load $1$.
The proofs below use the constraint at $U+\langle v_e\rangle$ to control the
cost of an addition at $U$, and such subspaces are in general not generated
by ground vectors.

Two identities describe how a load changes under one-dimensional
extensions:
\begin{align}
 \dim((W+\langle v\rangle)\cap U)-\dim(W\cap U)
   &=\ind\{v\in W+U\}, &&v\notin W,\label{eq:load-increment}\\
 \dim(W\cap(U+\langle v\rangle))-\dim(W\cap U)
   &=\ind\{v\in W+U\}, &&v\notin U.\label{eq:adjacent-load}
\end{align}
Both follow by expressing an intersection dimension as the sum of the
two dimensions minus the dimension of their sum. In the first identity,
adjoining $v$ increases $\dim W$ by one; in the second, it increases
$\dim U$ by one. The corresponding sum increases in dimension exactly
when $v\notin W+U$. Identity~\eqref{eq:load-increment} says that accepting
$e$ raises the load at $U$ by one precisely at states with
$v_e\in W(A)+U$; we call such states \emph{expensive} for $U$.
Identity~\eqref{eq:adjacent-load} says that, when $v_e\notin U$, the
probability of an expensive state is the difference between the loads at
$U+\langle v_e\rangle$ and at $U$, which the invariant bounds.

\subsection{Admitting an element along a chain}\label{sec:linear-chain}

An \emph{addition rule} for a represented element $e$ specifies masses
$q(A)$ with $0\le q(A)\le\mu(A)$, and $q(A)=0$ whenever $v_e\in W(A)$.
It retains mass $\mu(A)-q(A)$ at $A$ and moves mass $q(A)$ to $A+e$.
Splitting an atom in this way is implemented by a randomized decision
conditional on that accepted set; this is what Algorithm~\ref{alg:linear}
does. The first lemma handles a chain of subspaces, where a single order of
the states is cheapest for all members at once.

\begin{lemma}[Chain admission]\label{lem:chain}
Let $\mu$ be $\beta$-bounded, let $v_e\ne0$, and suppose $a\ge0$ and
$\beta+a\le1$. For every finite chain of subspaces
$C_1\subseteq\cdots\subseteq C_m$, there is a feasible addition rule of
total mass $a$ whose updated law satisfies
$\E\ell_U(A')\le\beta\dim U+a\ind\{v_e\in U\}$ for every subspace
$U$ in the chain.
\end{lemma}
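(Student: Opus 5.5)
The plan is to build the addition rule greedily along a single order of the states that is simultaneously cheapest for every member of the chain. Then each chain constraint is checked using the invariant at $U$ and at $U+\langle v_e\rangle$, through identity~\eqref{eq:adjacent-load}. Write $v=v_e$.

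\emph{Expensive sets and the order.} For each $i$, let $E_i=\{A: v\in W(A)+C_i\}$ be the states expensive for $C_i$. Since $C_i\subseteq C_{i+1}$, these sets are nested, $E_1\subseteq\cdots\subseteq E_m$. To each state $A$ assign $\iota(A)=\min\{i:A\in E_i\}$, with $\iota(A)=m+1$ if $A$ lies in no $E_i$. Call $A$ \emph{available} if $v\notin W(A)$.

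\emph{The rule.} Order the available states by decreasing $\iota(A)$, breaking ties arbitrarily. Fill total mass $a$ greedily in this order, taking $q(A)=\mu(A)$ until the remaining amount is less than the next atom, and then a partial piece. This requires the total available mass to be at least $a$. By~\eqref{eq:availability}, the constraint at the line $\langle v\rangle$ gives $\Pr(v\in W(A))\le\beta$, so the available mass is at least $1-\beta\ge a$. The rule is feasible by construction.

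\emph{The increment at $C_i$.} By~\eqref{eq:load-increment}, the load at $U=C_i$ rises by exactly $q(E_i)$, the moved mass sitting on states in $E_i$. The key point is nestedness. Every available state outside $E_i$ has $\iota>i$, and every state in $E_i$ has $\iota\le i$, so all available states outside $E_i$ come before all of $E_i$ in the order. Moreover, a state outside $E_i$ is automatically available, since $v\notin W(A)+U$ implies $v\notin W(A)$. So the available mass outside $E_i$ is exactly $1-\mu(E_i)$, and the greedy fill gives
\[
q(E_i)=\max\bigl(0,\;a-(1-\mu(E_i))\bigr).
\]
This single order works for all $i$ at once.

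\emph{Checking the constraint at $U=C_i$.} There are two cases.
\begin{itemize}
\item If $v\in U$, every state is expensive, so the increment is $a$. The bound $L_U+a\le\beta\dim U+a$ is exactly the claim.
\item If $v\notin U$, identity~\eqref{eq:adjacent-load} summed against $\mu$ gives $\mu(E_i)=L_{U+\langle v\rangle}-L_U$. If $q(E_i)=0$, we are done by $\beta$-boundedness. Otherwise
\[
L_U+q(E_i)=L_U+a-1+\mu(E_i)=L_{U+\langle v\rangle}+a-1\le\beta(\dim U+1)+a-1\le\beta\dim U,
\]
using $\beta+a\le1$.
\end{itemize}

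\emph{Main obstacle.} The only delicate step is justifying that one order serves every chain member. Once the sets $E_i$ are seen to be nested and the complement of $E_i$ is seen to consist entirely of available states, the calculation above is routine. This is precisely what breaks for non-chains, where no common order exists.
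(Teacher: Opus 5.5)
Your proof is correct and matches the paper's argument essentially step for step. Both use the same nested expensive sets, the same fill order by decreasing first-expensive index, availability at the line $\langle v_e\rangle$ to guarantee enough feasible mass, and the same two-case check via the adjacent-load identity at $U$ and $U+\langle v_e\rangle$; like the paper, you note that every state outside $E_i$ is automatically feasible, which gives the increment $(a-1+\mu(E_i))_+$.
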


\begin{proof}
For each chain member $U$, let $D_U=\{A:v_e\in W(A)+U\}$ be the set of
expensive states, and let $p_U=\mu(D_U)$. If $U\subseteq U'$ then
$D_U\subseteq D_{U'}$, so the sets $D_{C_1}\subseteq\cdots\subseteq D_{C_m}$
are nested. Every infeasible state, one with $v_e\in W(A)$, lies in every
$D_U$. Assign each feasible state $A$ the index
$i(A)=\min\{i:A\in D_{C_i}\}$, with $i(A)=m+1$ if $A$ lies in no $D_{C_i}$.
Fill mass $a$ into feasible states in decreasing order of $i(A)$, splitting
the last atom if needed, so that the rule uses states outside $D_{C_j}$
before any state inside $D_{C_j}$, simultaneously for every $j$. Such a rule
exists because the feasible mass is at least $1-\beta\ge a$
by~\eqref{eq:availability}.

Fix a chain member $U$. Every state outside $D_U$ is feasible, and the rule
fills those states first; their total mass is $1-p_U$. By
\eqref{eq:load-increment}, the load at $U$ increases only on the mass
placed inside $D_U$, which is exactly $(a-(1-p_U))_+=(a-1+p_U)_+$. If
$v_e\notin U$, equation~\eqref{eq:adjacent-load} gives
$p_U=L_{U+\langle v_e\rangle}-L_U$, and the invariant at $U$ and at
$U+\langle v_e\rangle$ gives $L_U\le\beta\dim U$ and
$L_U+p_U\le\beta(\dim U+1)$. Consequently,
\[
 L_U+(a-1+p_U)_+
 =\max\{L_U,\;L_U+p_U+a-1\}\le\beta\dim U,
\]
where the last inequality uses $\beta+a\le1$. If $v_e\in U$, then
$D_U$ is everything, so every feasible addition increases the load by one
and the expected increase is $a$.
\end{proof}

The chain provides a common ordering of the feasible states for all its
constraints. An arbitrary collection of subspaces need not admit such an
ordering. We next show that a priced collection can be dominated by a
chain without increasing its dimension budget. Figure~\ref{fig:uncrossing}
illustrates the step.

\begin{figure}[htbp]
\centering
\begin{tikzpicture}[x=1.6cm,y=1.1cm,
  every node/.style={font=\small},
  sub/.style={draw,rounded corners,inner sep=3pt,minimum width=1.5cm}]
 \node[sub] (UT) at (0,2) {$U+T$};
 \node[sub] (U) at (-1,1) {$U$};
 \node[sub] (T) at (1,1) {$T$};
 \node[sub] (UcT) at (0,0) {$U\cap T$};
 \draw (UcT) -- (U) -- (UT);
 \draw (UcT) -- (T) -- (UT);
 \node[align=left,anchor=west] at (2.2,1.6)
  {$\dim U+\dim T=\dim(U\cap T)+\dim(U+T)$};
 \node[align=left,anchor=west] at (2.2,0.9)
  {$\dim(W\cap U)+\dim(W\cap T)\le\dim(W\cap U\cap T)+\dim(W\cap(U+T))$};
 \node[align=left,anchor=west] at (2.2,0.2)
  {for every subspace $W$};
\end{tikzpicture}
\caption{Uncrossing two incomparable subspaces $U,T$ into $U\cap T$ and
$U+T$. The dimension budget is preserved and the load of every $W$ can only
increase, so a priced family of constraints is dominated by a chain.}
\label{fig:uncrossing}
\end{figure}

\begin{lemma}[Subspace uncrossing]\label{lem:uncrossing}
For nonnegative rational prices $(\lambda_U)_{U\le V}$, there are a
chain $C_1\subseteq\cdots\subseteq C_m$ and nonnegative rational
coefficients $\gamma_1,\ldots,\gamma_m$ such that, for every $W\le V$,
\begin{equation}\label{eq:chain-domination}
 \sum_U\lambda_U\dim(W\cap U)
 \le\sum_i\gamma_i\dim(W\cap C_i),
 \qquad
 \sum_U\lambda_U\dim U=\sum_i\gamma_i\dim C_i.
\end{equation}
\end{lemma}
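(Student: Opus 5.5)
The plan is the standard uncrossing argument, with a potential function to guarantee termination. We start from the weighted family $\lambda$ and repeatedly replace crossing pairs. If the current priced family contains two incomparable subspaces $U,T$ with positive prices, let $\delta=\min\{\lambda_U,\lambda_T\}$. We lower both prices by $\delta$ and raise the prices of $U\cap T$ and $U+T$ by $\delta$. The modular law $\dim U+\dim T=\dim(U\cap T)+\dim(U+T)$ keeps the total dimension budget $\sum_U\lambda_U\dim U$ unchanged. Prices stay nonnegative and rational, since $\delta$ is a rational combination of current prices.

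The load side needs one inequality for every subspace $W$:
\[
\dim(W\cap U)+\dim(W\cap T)\le\dim(W\cap U\cap T)+\dim(W\cap(U+T)).
\]
Apply the modular law to the subspaces $W\cap U$ and $W\cap T$. Their intersection is $W\cap U\cap T$, and their sum $(W\cap U)+(W\cap T)$ is contained in $W\cap(U+T)$. Hence
\[
\dim(W\cap U)+\dim(W\cap T)=\dim(W\cap U\cap T)+\dim\bigl((W\cap U)+(W\cap T)\bigr)\le\dim(W\cap U\cap T)+\dim(W\cap(U+T)).
\]
So each uncrossing step can only increase $\sum_U\lambda_U\dim(W\cap U)$ for every $W$ simultaneously. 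By transitivity, the final family dominates the original one in the first relation of~\eqref{eq:chain-domination}, while the second relation holds with equality.

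Termination is the main obstacle, since $\delta$ can be fractional and a step may create new crossings. I would first clear denominators so that all prices are integers. Every step then moves an integer $\delta\ge1$, and it suffices to exhibit a potential that strictly increases and is bounded. I would use $\Phi=\sum_U\lambda_U(\dim U)^2$. Its change in one step is
\[
\delta\bigl[(\dim(U\cap T))^2+(\dim(U+T))^2-(\dim U)^2-(\dim T)^2\bigr].
\]
Write $a=\dim(U\cap T)$, $d=\dim(U+T)$, $b=\dim U$, $c=\dim T$, so that $a+d=b+c$. Incomparability gives $a<\min\{b,c\}$ and $d>\max\{b,c\}$. With a fixed sum, the sum of squares is strictly larger for the more spread pair, so $a^2+d^2>b^2+c^2$ and the bracket is a positive integer. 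The total mass $\sum_U\lambda_U$ is preserved and every dimension is at most $\dim V$, so $\Phi$ is bounded and the process stops after finitely many steps.

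At termination the support of the prices is pairwise comparable, that is, a chain $C_1\subseteq\cdots\subseteq C_m$. Taking $\gamma_i$ to be the final price of $C_i$, rescaled back by the common denominator, gives~\eqref{eq:chain-domination}. The finiteness of $V$ is not needed here; only finite dimension and the finite support of $\lambda$ are used.
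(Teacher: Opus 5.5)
Your proposal is correct and is essentially the paper's proof. Both clear denominators and uncross incomparable pairs into $U\cap T$ and $U+T$, using the modular law for the budget and the containment $(W\cap U)+(W\cap T)\subseteq W\cap(U+T)$ for the loads; termination comes from the sum-of-squared-dimensions potential, and moving $\delta=\min\{\lambda_U,\lambda_T\}$ at once instead of one multiset copy at a time is only a cosmetic difference.
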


\begin{proof}
Clear denominators and regard the prices as a multiset of subspaces.
Whenever two members $U,T$ are incomparable, replace them by $U\cap T$
and $U+T$. The dimension formula preserves the total dimension. Moreover,
\[
 \dim(W\cap U)+\dim(W\cap T)
 \le \dim(W\cap U\cap T)+\dim(W\cap(U+T))
\]
for every $W$: the intersection of $W\cap U$ and $W\cap T$ is
$W\cap U\cap T$, and their sum is contained in $W\cap(U+T)$.
Thus the replacement increases the sum of loads pointwise.

The process terminates. If $d=\dim(U\cap T)$, replacing incomparable
$U,T$ increases the sum of squared dimensions by
$2(\dim U-d)(\dim T-d)>0$. This integer potential is bounded by the
fixed multiset size times $(\dim V)^2$. At termination all members
are comparable. Dividing their multiplicities by the original common
denominator proves~\eqref{eq:chain-domination}.
\end{proof}

\subsection{Simultaneous extension}\label{sec:linear-extension}

We now show that the transition program $(P_S)$ is feasible. The lemma is
stated for arbitrary input laws $\mu_e$, one for each possible identity $e$
of the last arrival; no common coupling between them is assumed. The
conclusion concerns the average of the updated laws over $e$, which is what
a uniformly random last arrival produces.

\begin{lemma}[Averaged extension]\label{lem:extension}
Let $S\subseteq E$ have size $t\ge1$, and let $B\subseteq S$ be independent.
For each $e\in S$, let $\mu_e$ be a $\beta$-bounded law on independent
subsets of $S-e$. If $0\le a\le1-\beta$, there are feasible addition
rules admitting mass $a$ in each row $e\in B$ and zero in every other
row, such that the average of the updated laws is
$(\beta+a/t)$-bounded.
\end{lemma}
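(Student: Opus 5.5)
The proof is a minimax argument. The set $Q$ of admissible families $q=(q_e)_{e\in S}$ is a compact convex polytope. Its constraints are $0\le q_e(A)\le\mu_e(A)$, $q_e(A)=0$ when $v_e\in W(A)$ or $e\notin B$, and $\sum_A q_e(A)=a$ for $e\in B$. Each row is nonempty by~\eqref{eq:availability}, since $\beta$-boundedness at $\langle v_e\rangle$ leaves feasible mass at least $1-\beta\ge a$ when $v_e\ne0$. If $v_e=0$ then $e$ is a loop and cannot lie in the independent set $B$.

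The averaged load $g_U(q)$ is affine in $q$, and there are finitely many subspaces $U$. We want some $q\in Q$ with $g_U(q)\le(\beta+a/t)\dim U$ for all $U$ simultaneously. By LP duality (equivalently, von Neumann's minimax theorem over $Q$ and the simplex of prices), it suffices to show the following. For every nonnegative rational price vector $(\lambda_U)$, some $q\in Q$ satisfies
\[
 \sum_U\lambda_U\, g_U(q)\le(\beta+a/t)\sum_U\lambda_U\dim U .
\]
Rational prices suffice by density and compactness of $Q$.

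Fix prices and apply Lemma~\ref{lem:uncrossing}. This gives a chain $C_1\subseteq\cdots\subseteq C_m$ and weights $\gamma_i\ge0$ with $\sum_U\lambda_U\ell_U\le\sum_i\gamma_i\ell_{C_i}$ pointwise on all sets, and with equal dimension budgets. Since each $\mu_e$ and each updated law is a probability law, the priced load is dominated by the chain's priced load. It therefore suffices to find $q$ with
\[
 \sum_i\gamma_i\,g_{C_i}(q)\le(\beta+a/t)\sum_i\gamma_i\dim C_i .
\]

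For each row $e\in B$, apply Lemma~\ref{lem:chain} to $\mu_e$ with this chain and mass $a$. For rows $e\notin B$, take $q_e=0$. Each $\mu_e$ is $\beta$-bounded, so every row's updated law has load at $C_i$ at most $\beta\dim C_i+a\ind\{v_e\in C_i,\ e\in B\}$. Averaging over $e\in S$ gives
\[
 g_{C_i}(q)\le\beta\dim C_i+\frac at\,|\{e\in B:v_e\in C_i\}|\le\Bigl(\beta+\frac at\Bigr)\dim C_i,
\]
because $B$ is independent, so at most $\dim C_i$ of its vectors lie in $C_i$. Multiplying by $\gamma_i$ and summing gives the priced inequality. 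This is the point where the average over the last arrival is essential. No single row could absorb the term $a\ind\{v_e\in U\}$, but the average spreads it over $t$ rows, and independence caps the count by $\dim U$.

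The main obstacle is the passage from the priced inequality to a single simultaneous solution. The chain construction depends on the prices, so the argument needs a genuine separation step. I would phrase it as Farkas' lemma for the finite LP in the variables $q$. Infeasibility would produce multipliers $\lambda_U\ge0$, together with multipliers for the constraints of $Q$, such that $\min_{q\in Q}\sum_U\lambda_U\bigl(g_U(q)-(\beta+a/t)\dim U\bigr)>0$, contradicting the bound above. Rational data make the multipliers rational. A minor detail is that Lemma~\ref{lem:chain} must produce a rule inside $Q$, i.e. exact mass $a$ and zero mass on infeasible states; its construction gives exactly this.
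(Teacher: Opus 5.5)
Your proposal is correct and follows the paper's proof essentially step for step. The admissible arrays form a polytope that is nonempty by availability. Fixed rational prices are uncrossed into a chain, Lemma~\ref{lem:chain} is applied row by row, and independence of $B$ absorbs the averaged $a/t$ term. A separation step then yields a single simultaneous solution; your Farkas/LP-duality phrasing is equivalent to the paper's strict separation of the excess set from the closed negative orthant. One small point: the claim that rational data make the multipliers rational is not available for arbitrary $\mu_e$, $\beta$, $a$. It is also unnecessary, since your density-and-compactness perturbation of real prices, which is exactly what the paper does, already covers that case.
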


\begin{proof}
Let $P$ consist of the transition arrays $q=(q_e(A))$ satisfying
$0\le q_e(A)\le\mu_e(A)$, with zero mass for infeasible additions
and total mass $a\ind\{e\in B\}$ in each row $e$; these are
constraints~\eqref{eq:tp-mass}--\eqref{eq:tp-quota} with $\mu_e$ in place
of $\mu_{S-e}$. Availability~\eqref{eq:availability} with $a\le1-\beta$
shows that $P$ is nonempty. It is a compact convex polytope. For $q\in P$,
write $g_U(q)$ for the expected load at $U$ after choosing a row uniformly
and making its prescribed update, as in~\eqref{eq:g-def}. Each $g_U$ is
affine in $q$. We must find $q\in P$ with $g_U(q)\le(\beta+a/t)\dim U$
for every $U$.

The strategy is a minimax argument: we show that every nonnegative
combination of the constraints can be satisfied by some $q\in P$, and
conclude by separation that a single $q$ satisfies all of them. Fix
rational nonnegative prices $\lambda_U$, and write
$R=\sum_U\lambda_U\dim U$. Uncross them using
Lemma~\ref{lem:uncrossing} into a chain $C_1\subseteq\cdots\subseteq C_m$
with coefficients $\gamma_i$. Apply Lemma~\ref{lem:chain} to this chain in
every row $e\in B$ and leave the other rows unchanged. For each chain
member $C_i$, the chain lemma bounds the updated load in row $e$ by
$\beta\dim C_i+a\ind\{v_e\in C_i\}$, so averaging over the $t$ rows and
using the independence of $B$,
\[
 g_{C_i}(q)\le\beta\dim C_i+
    \frac at|\{e\in B:v_e\in C_i\}|
 \le\Bigl(\beta+\frac at\Bigr)\dim C_i,
\]
since an independent set has at most $\dim C_i$ vectors in $C_i$.
Pointwise domination~\eqref{eq:chain-domination} and preservation of the
dimension budget now give $\sum_U\lambda_U g_U(q)\le(\beta+a/t)R$. The
choice of $q$ depends on the prices, which is enough for the next step.

The finite set of subspaces indexes an excess vector
$z(q)=(g_U(q)-(\beta+a/t)\dim U)_U$. If no $q\in P$ satisfied all the
constraints, the compact convex set $\{z(q):q\in P\}$ would be
disjoint from the closed negative orthant. Strict separation would
give a nonnegative real vector $\lambda$ with
$\min_{q\in P}\lambda\cdot z(q)>0$. The vector is nonnegative because
the orthant is unbounded in each negative coordinate direction. The
excess vectors are bounded, so a sufficiently close rational nonnegative
vector would still give a strictly positive minimum, contradicting the
priced bound just proved. Hence one $q$ satisfies every constraint.
\end{proof}

The lemma does not assert that each updated row is bounded by the new
parameter. It is the uniform average that is controlled. This distinction
allows the independence of the whole candidate set $B$ to pay for the
increase, through $|\{e\in B:v_e\in U\}|\le\dim U$.

\begin{proof}[Proof of Proposition~\ref{prop:feasible}]
We argue by induction on $|S|=t\ge s$. For $t=s$, $\mu_S$ is the point
mass at $\varnothing$, which is a probability law and is $0$-bounded, and
$\beta_s=0$. Let $t>s$ and assume the claim for all sets of size $t-1$.
Then each $\mu_{S-e}$ is a $\beta_{t-1}$-bounded probability law on
independent subsets of $S-e$. Apply Lemma~\ref{lem:extension} with
$B=B(S)$, input laws $\mu_e=\mu_{S-e}$, parameter $\beta=\beta_{t-1}$, and
$a=a_t$; the hypothesis $a\le1-\beta$ holds with equality
by~\eqref{eq:schedule}. The addition rules it produces form a point of
$(P_S)$: constraints~\eqref{eq:tp-mass}--\eqref{eq:tp-quota} are the
definition of $P$, and~\eqref{eq:tp-invariant} is the conclusion, since
$\beta_{t-1}+a_t/t=\beta_t$. Thus $(P_S)$ is feasible. For any feasible
$q^S$, formula~\eqref{eq:mu-def} defines the average of the updated row
laws, each of which is a probability law on independent subsets of $S$ by
\eqref{eq:tp-mass} and~\eqref{eq:tp-feasible}; so $\mu_S$ is a probability
law, and~\eqref{eq:tp-invariant} says it is $\beta_t$-bounded.
\end{proof}

\subsection{Analysis of the policy}\label{sec:linear-policy}

We now verify that the coins of Algorithm~\ref{alg:linear} implement the
recursively defined laws, and compute the selection probabilities.

\begin{lemma}\label{lem:implementation}
Fix a weight vector $w$ and a set $S$ with $|S|=t\ge s$. Conditional on the
first $t$ arrivals forming $S$, the accepted set of
Algorithm~\ref{alg:linear} after these arrivals has law $\mu_S$, where
$\mu_S$ is computed from the relative weight order that $w$ induces on $S$.
\end{lemma}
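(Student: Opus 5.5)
We argue by induction on $t=|S|$, starting from $t=s$. Fix $w$ throughout. For $t\le s$ the algorithm rejects every arrival, so conditional on the first $t$ arrivals forming $S$, the accepted set is $\varnothing$ almost surely; this is $\mu_S$ by Definition~\ref{def:recursion}(i). This covers the base case $t=s$.

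For the inductive step, let $t>s$ and assume the claim for all sets of size $t-1$. Under a uniformly random arrival order, condition on the first $t$ arrivals forming $S$. The $t$-th arrival $e$ is then uniform over $S$. Moreover, conditional on $e$ being last, the first $t-1$ arrivals form $S-e$ in a uniformly random order. So the conditional law of the accepted set $A$ just before time $t$ is the same as its law conditional on the first $t-1$ arrivals forming $S-e$. By the inductive hypothesis this is $\mu_{S-e}$, computed from the order $w$ induces on $S-e$, which is the restriction of the order on $S$.

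The coin at time $t$ depends only on $(S, e, A)$ and the relative order on $S$, together with fresh randomness. Hence, given last arrival $e$ and prior state $A$, the algorithm moves to $A+e$ with probability $q^S_e(A)/\mu_{S-e}(A)$ and stays at $A$ otherwise. The quantity $\mu_{S-e}$ computed inside the algorithm is the same object as in the recursion, since both are computed from the order on $S-e$. Consider states $A$ with $\mu_{S-e}(A)>0$:
\begin{itemize}
\item the mass reaching $A+e$ is $\mu_{S-e}(A)\cdot q^S_e(A)/\mu_{S-e}(A)=q^S_e(A)$;
\item the mass staying at $A$ is $\mu_{S-e}(A)-q^S_e(A)$.
\end{itemize}
States with $\mu_{S-e}(A)=0$ carry no mass, and there $q^S_e(A)=0$ by \eqref{eq:tp-mass}, so the formula still holds.

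Now compute the probability of a final state $A'$, conditional on $e$ being last:
\begin{itemize}
\item if $e\notin A'$, the state can only be reached by staying, with probability $\mu_{S-e}(A')-q^S_e(A')$;
\item if $e\in A'$, it can only be reached by accepting $e$ from $A'-e$, with probability $q^S_e(A'-e)$.
\end{itemize}
Averaging over $e$ uniform in $S$ gives exactly \eqref{eq:mu-def}, so the conditional law is $\mu_S$.

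Proposition~\ref{prop:feasible} guarantees that the coin biases lie in $[0,1]$ and that $q^S$ exists. The main point requiring care is the conditioning step. The recursion works for $w$ fixed in advance: conditional on the set $S$, the last arrival is uniform and independent of the order of the earlier arrivals. Also, the algorithm's earlier decisions depend only on arrivals within $S-e$, so the inductive hypothesis applies to the correct conditional law.
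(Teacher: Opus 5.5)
Your proof is correct and follows essentially the same route as the paper. Both argue by induction on $|S|$: the last arrival is uniform in $S$ with the earlier arrivals forming $S-e$ in uniform order, the coin moves exactly mass $q^S_e(A)$ from $A$ to $A+e$, and averaging over $e$ recovers \eqref{eq:mu-def}. You also add a careful treatment of zero-mass states and of why conditioning on the last label leaves the earlier coins' law unchanged.
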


\begin{proof}
For $t=s$ nothing has been accepted and $\mu_S$ is the point mass at
$\varnothing$. Let $t>s$. Conditional on the first $t$ labels forming $S$,
their last label $e$ is uniform in $S$; conditional also on that label, the
preceding $t-1$ labels form $S-e$ in a uniform order. By induction, the
accepted set before this arrival has law $\mu_{S-e}$. The coin in
Algorithm~\ref{alg:linear} accepts with probability
$q^S_e(A)/\mu_{S-e}(A)$ from each state $A$ of positive mass, which moves
exactly mass $q^S_e(A)$ from $A$ to $A+e$; constraint~\eqref{eq:tp-mass}
makes this a probability, and~\eqref{eq:tp-feasible} keeps the accepted set
independent. The updated law in row $e$ is therefore the one
in~\eqref{eq:mu-def}, and averaging over the uniform $e$ gives $\mu_S$.
Conditioning on the identity of the next label does not alter the previous
internal coins, since they were independent of the arrival order beyond
the observed set.
\end{proof}

\begin{proof}[Proof of Proposition~\ref{prop:selection}]
Let $e\in B(E)$, the greedy basis of the whole ground set. By the second
greedy property of Section~\ref{sec:preliminaries}, $e\in B(S)$ for every
$S\ni e$, so $e$ is a candidate in every observed set containing it.
Condition on the observed set $S$ with $e$ arriving last at time $t>s$.
By Lemma~\ref{lem:implementation}, the accepted set before this arrival
has law $\mu_{S-e}$, and the coin moves total mass
$\sum_A q^S_e(A)=a_t$ by~\eqref{eq:tp-quota}. Thus $e$ is accepted with
conditional probability exactly $a_t$. The arrival position of $e$ is
uniform on $\{1,\ldots,n\}$ and independent of the set of labels preceding
it, so
\[
 \Pr(e\in A)=\frac1n\sum_{t=s+1}^n a_t
 =\frac1n\sum_{t=s+1}^n\frac{s}{t-1}
 =\frac sn\sum_{k=s}^{n-1}\frac1k=c_n(s).
\qedhere
\]
\end{proof}

Summing the weights of the elements of $B(E)$ shows that
$\E w(A)\ge c_n(s)\,w(B(E))=c_n(s)\OPT(w)$, which is the fixed-cutoff claim
of Theorem~\ref{thm:linear}. As noted in the introduction, the same
computation gives the success probability of the classical cutoff rule; the
algorithm reproduces it for every element of the optimum simultaneously.

\subsection{Choosing the cutoff}\label{sec:linear-cutoff}

We finish with a finite-$n$ bound, so the theorem does not rely on an
asymptotic choice of $s$. For $n=2$, the cutoff $s=1$ gives $c_2(1)=1/2$.
For $n\ge3$, put $z=n/e$, $r=\lfloor z\rfloor$, and $\theta=z-r$;
both $r$ and $r+1$ are allowed cutoffs. Using $\log u\le u-1$ first on
$(r+1)/z$ and then on $(k+1)/k$ gives
\begin{align*}
 \frac1e=\frac zn\log\frac nz
 &\le\frac{r+1-z}{n}+\frac zn\sum_{k=r+1}^{n-1}\frac1k\\
 &= (1-\theta)c_n(r)+\theta c_n(r+1).
\end{align*}
The last equality follows by separating the term $1/r$ in $c_n(r)$.
At least one of these two cutoffs therefore attains $1/e$.
This completes the proof of Theorem~\ref{thm:linear}.

\section{A tight single-sample prophet inequality}\label{sec:prophet}

We now consider an arbitrary matroid and prove Theorem~\ref{thm:prophet}.
As in Section~\ref{sec:linear}, we present the algorithm first
(Section~\ref{sec:prophet-algorithm}) and then its analysis
(Sections~\ref{sec:prophet-invariant} and~\ref{sec:prophet-guarantee}).
The algorithm maintains a
stored vector $W$, initially equal to the sample vector, and keeps its
accepted set inside $B(W)$. When an actual value arrives, the algorithm
may replace its coordinate in $W$. The replacement and acceptance rules
are chosen together, so that the stored vector retains the original
product distribution and the accepted set is a \emph{fair thinning} of the
greedy basis of $W$: each processed element of that basis is retained
independently with probability $1/2$.

The conclusion is stronger than Theorem~\ref{thm:prophet}. If $Y$ is
a fresh draw from the product value distribution, and $T$ independently
retains each element of $B(Y)$ with probability $1/2$, then our output
satisfies
\begin{equation}\label{eq:marked-identity}
 \{(e,X_e):e\in A\}\dist\{(e,Y_e):e\in T\}.
\end{equation}
Taking expected total value in~\eqref{eq:marked-identity} gives
$\E\sum_{e\in A}X_e=\tfrac12\E\sum_{e\in B(Y)}Y_e
=\tfrac12\E\OPT(Y)=\tfrac12\E\OPT(X)$.

\subsection{Algorithm}\label{sec:prophet-algorithm}

We first recall how a greedy basis changes when one coordinate changes.
We include zero-weight elements in greedy throughout this section, so
every greedy set is a full basis.

\begin{lemma}\label{lem:sensitivity}
Fix all coordinates other than $e$, and let $B_0,B_1$ be the greedy
bases when the value of $e$ is respectively $v_e^-\le v_e^+$. If $e$
belongs to both bases or to neither, then $B_0=B_1$. Otherwise,
$e\notin B_0$, $e\in B_1$, and $B_1=B_0-f+e$ for one element $f$.
\end{lemma}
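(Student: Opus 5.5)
The plan is to compare the two greedy runs through the standard characterization: an element $g$ lies in the greedy basis exactly when it is not spanned by the elements strictly preceding it in the greedy order, with ties broken by the fixed priority. Changing only the value of $e$ from $v_e^-$ to $v_e^+$ moves $e$ earlier (weakly) in the greedy order. The relative order of all other elements is unchanged. So for $g\ne e$, the set of predecessors of $g$ either stays the same or gains $e$; it never loses anything. We analyze the run in the order at $v_e^+$, call it $\pi_1$, against the order at $v_e^-$, call it $\pi_0$.

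\emph{Step 1: the membership of $e$ is monotone.} The predecessors of $e$ under $\pi_1$ form a subset of its predecessors under $\pi_0$. If $e$ is spanned by the smaller set, it is spanned by the larger one. Hence $e\in B_0$ implies $e\in B_1$, and the case $e\in B_0$, $e\notin B_1$ cannot occur.

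\emph{Step 2: the two "same membership" cases.} Let $P_g^j$ denote the set of predecessors of $g\ne e$ under $\pi_j$. Then $P_g^1\setminus\{e\}=P_g^0\setminus\{e\}$, and the two sets differ only when $g$ lies between the two positions of $e$.
\begin{itemize}
\item If $e\notin B_1$, then $e$ is spanned by its $\pi_1$-predecessors, so $e\in\cl(P_g^1\setminus\{e\})$ for every $g$ after $e$. Hence $\cl(P_g^0)=\cl(P_g^1)$ for every $g$, and $B_0=B_1$.
\item If $e\in B_0$, then $e\in B_1$ by Step 1. A cleaner argument here uses the greedy-prefix property. For every weight threshold, the greedy basis is determined by the ranks of prefixes, via $|B\cap H|=r(H)$. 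The prefixes of the two orders differ only in whether they contain $e$. When $e$ is in both bases, $e$ is a coloop of the relevant restriction, so the selections among the other elements agree. I will instead argue directly: for $g$ between the two positions, $e\in B_0$ means $e\notin\cl(P_e^0)$, so adding $e$ to the set preceding $g$ cannot change whether $g$ is spanned. This again uses the exchange axiom: if $g\in\cl(P+e)\setminus\cl(P)$, then $e\in\cl(P+g)$, which is a subset of $\cl(P_e^0)$, a contradiction.
\end{itemize}

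\emph{Step 3: the exchange case.} Suppose $e\notin B_0$ and $e\in B_1$. Both bases have the same size, the rank. Elements before $e$'s $\pi_1$-position agree in both runs. I will show that $B_0\setminus\{e\}$ and $B_1\setminus\{e\}$ differ by exactly one element $f$, so $B_1=B_0-f+e$. Let $Q$ be the set of elements strictly before $e$ in $\pi_1$; then $e\notin\cl(Q)$. Consider the other elements in order. The first $g$ that is accepted at $v_e^-$ but not at $v_e^+$ satisfies $g\in\cl(P+e)\setminus\cl(P)$, where the predecessor sets agree with $P$ up to that point. By the exchange axiom, $\cl(P+e)=\cl(P+g)$. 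From then on, both runs have the same closure, so their decisions agree. That unique $g$ is $f$. Such a $g$ must exist, and it must come before $e$'s $\pi_0$-position, because $e\in\cl(P_e^0)$ in the run at $v_e^-$.

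The main obstacle is the careful bookkeeping in Step 3. We must show that after the first discrepancy the closures coincide, so that no second discrepancy can occur. We must also check that tie-breaking via the fixed priority leaves the relative order of the other elements intact. Both points reduce to the closure-exchange identity $\cl(P+e)=\cl(P+f)$, which holds whenever $f\in\cl(P+e)\setminus\cl(P)$.
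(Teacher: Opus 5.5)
Your proof is correct, but it takes a longer route than the paper's.

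Both arguments start from the same place: the closure criterion, and the fact that raising the value of $e$ shrinks the predecessor set of $e$ and can only add $e$ to the predecessor set of every other label. The paper uses this monotonicity in both directions. The label $e$ can only enter the basis, and every $g\ne e$ can only leave it, because a larger predecessor set has a larger closure. Hence $B_1\setminus\{e\}\subseteq B_0\setminus\{e\}$. Since $|B_0|=|B_1|=r(E)$, a count then settles all cases at once: if $e$ is in both bases or in neither, the inclusion is an equality, and if $e$ enters, exactly one label $f$ leaves.

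You apply monotonicity only to $e$ (Step 1). You then re-derive everything else case by case from the exchange identity $\cl(P+g)=\cl(P+e)$ for $g\in\cl(P+e)\setminus\cl(P)$. You even note in Step 3 that the bases have equal size, but you never use it.

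Your closure arguments in both bullets of Step 2 and the first-discrepancy argument in Step 3 are sound. The existence of $f$ is justified only tersely, by pointing to $e\in\cl(P_e^0)$. It follows by taking the first label $g$ strictly between the two positions of $e$ with $e\in\cl(P_g^0+g)$, or directly from the cardinality count.

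What your route buys is an explicit identification of $f$ as that label. Equivalently, $f$ is the element of the fundamental circuit of $e$ in $B_0$ that is scanned last at $v_e^-$. You also get a direct proof that the two scans have equal prefix closures after $f$. The paper does not need any of this, since Algorithm~\ref{alg:prophet} uses $f$ only as the unique element of $C\setminus D$.
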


\begin{proof}
Greedy accepts a label exactly when it is not in the closure of the labels
scanned before it. Raising the value of $e$ from $v_e^-$ to $v_e^+$ moves
$e$ earlier in the scan: its set of predecessors shrinks, while the set of
predecessors of every other label either gains $e$ or is unchanged. By the
closure criterion, $e$ can only enter the basis, and every other label can
only leave it. Both bases have size $r(E)$. Hence if $e$ enters, exactly
one other label $f$ leaves, and if $e$ does not enter, nothing changes.
\end{proof}

Let $P$ be the set of processed labels and $A$ the accepted set. When $e$
arrives with actual value $x=X_e$, compare the current basis $C=B(W)$ with
the basis $D$ obtained by replacing $W_e$ with $x$. A \emph{refresh} of
coordinate $e$ means making this replacement in the stored vector, that
is, setting $W_e=x$. If $e\in D\setminus C$,
Lemma~\ref{lem:sensitivity} gives $D=C-f+e$ for a unique label $f$.
The algorithm \emph{refreshes} with
probability
\begin{equation}\label{eq:refresh-rule}
 p=\begin{cases}
  1/2,& e\in C,\\
  0,& e\notin C\cup D,\\
  0,& e\in D\setminus C,\ f\in A,\\
  1,& e\in D\setminus C,\ f\in P\setminus A,\\
  1/2,& e\in D\setminus C,\ f\notin P,
 \end{cases}
\end{equation}
using a fresh independent fair coin whenever $p=1/2$. On a refresh it sets
$W_e=x$ and accepts $e$ if and only if $e\in D$; without a refresh it keeps
$W$ and rejects $e$. Then $e$ is added to $P$. Algorithm~\ref{alg:prophet}
states the same rule as pseudocode.

The approach of repeatedly recomputing the current optimum and implementing its
recommendation only when compatible with previous commitments is reminiscent of
the random-order matching framework of Kesselheim et al.~\cite{kesselheim2013}.
The intuition behind the cases is as follows. If $e$ is already in the basis on the
strength of its sample, a fair coin decides whether to replace the sample
by the actual value, and $e$ is accepted when it survives the replacement.
If $e$ newly enters and displaces $f$, an accepted $f$ blocks the exchange,
and an already rejected $f$ permits it with probability one. If $f$ has not
arrived, a fresh fair coin is used.

\begin{algorithm}[htbp]
\caption{Single-sample matroid prophet algorithm}\label{alg:prophet}
\DontPrintSemicolon
Initialize $W\gets S$, $A\gets\varnothing$, $P\gets\varnothing$, and $C\gets B(W)$\;
\For{each arriving label $e$ with actual value $x=X_e$}{
  Let $W'$ equal $W$ with coordinate $e$ replaced by $x$, and compute $D\gets B(W')$\;
  \uIf{$e\in C$}{
    With probability $1/2$, set $W\gets W'$ and $C\gets D$, and add $e$ to $A$ if $e\in D$\;
  }
  \ElseIf{$e\in D$}{
    Let $f$ be the unique element of $C\setminus D$\;
    Set $p\gets0$ if $f\in A$, $p\gets1$ if $f\in P\setminus A$, and $p\gets1/2$ if $f\notin P$\;
    With probability $p$, set $W\gets W'$, $C\gets D$, and $A\gets A+e$\;
  }
  Set $P\gets P+e$\;
}
\Return{$A$}
\end{algorithm}

\emph{The rule in rank one.}
To see the rule in the simplest case, let $M$ have rank one with no loops,
so that $B(W)$ is the single element holding the largest stored value, with
ties broken by priority. Initially the stored values are the samples and
$C=\{c\}$, where $c$ holds the largest sample. When $e$ arrives, there are
two ways for it to matter. If $e=c$, so that its own sample is the current
maximum, a fair coin decides whether to overwrite that sample by the actual
value $X_e$; on a refresh, $e$ is accepted if it still holds the maximum.
If $e\ne c$ and replacing its stored value by $X_e$ makes it the maximum, accepting $e$ would displace $c$: if
$c$ was accepted earlier, $e$ is rejected; if $c$ arrived earlier and was
rejected, $e$ is accepted; and if $c$ has not yet arrived, $e$ is accepted
with probability $1/2$. The rule is randomized even in rank one, and it
differs from the maximum-sample threshold rule of
Rubinstein, Wang, and Weinberg~\cite{rubinstein2020}, which accepts the
first value above the threshold (with independent random tie-breakers
when distributions have atoms). The coins are what make the
output an exact fair thinning of a fresh optimum, and tracking the displaced
element is what allows the same rule to work in every matroid.

\begin{lemma}\label{lem:accepted-basis}
After every arrival processed by Algorithm~\ref{alg:prophet}, the accepted
set and stored vector satisfy $A\subseteq B(W)$ and $W_e=X_e$ for all
$e\in A$.
\end{lemma}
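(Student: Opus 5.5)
The proof goes by induction on the number of processed arrivals. The invariant is the conjunction of the two claims: $A\subseteq B(W)$, and $W_e=X_e$ for every $e\in A$. Initially $A=\varnothing$, so both hold trivially. Throughout, $C=B(W)$ denotes the current basis before the arrival of $e$, and $D$ denotes the basis after refreshing coordinate $e$.

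The inductive step follows the cases of~\eqref{eq:refresh-rule}.

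\emph{No refresh.} Then $W$ and $A$ are unchanged, since $e$ is rejected, so the invariant persists.

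\emph{Refresh with $e\in C$.} The new basis is $D$, and $e$ is added to $A$ only if $e\in D$. In that case $W_e=x=X_e$ after the refresh, as required.

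For the old accepted elements, apply Lemma~\ref{lem:sensitivity} to coordinate $e$. Either $C=D$, or one of $C,D$ is obtained from the other by exchanging $e$ with a single element. Since $e\in C$, if the bases differ then $e\notin D$ and $D=C-e+f'$. In both situations $C\setminus\{e\}\subseteq D$. The old set $A$ lies in $C$ and does not contain $e$, because $e$ arrives only once. Hence $A\subseteq D$.

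\emph{Refresh with $e\in D\setminus C$.} Here $D=C-f+e$, and the refresh happens only when $f\notin A$: either $f\in P\setminus A$, or $f\notin P$, in which case $f\notin A$ because $A\subseteq P$. Therefore $A\subseteq C-f\subseteq D$. The new element $e$ satisfies $e\in D$ and $W_e=X_e$.

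\emph{Refresh with $e\notin C\cup D$.} The probability is $0$, so this case never refreshes.

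The values of old accepted elements are preserved in every case: a refresh changes only coordinate $e$, and $e\notin A$ before its arrival. So $W_g=X_g$ still holds for all previously accepted $g$.

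The step needing care is applying Lemma~\ref{lem:sensitivity} when $x$ may be smaller than the current $W_e$. The lemma is stated for raising a value, so I will use it symmetrically, taking $v_e^-=\min(W_e,x)$ and $v_e^+=\max(W_e,x)$. This yields the two facts used above:
\begin{itemize}
\item if $e$ is in both bases or in neither, then $C=D$;
\item if $e$ is in exactly one basis, the bases differ by swapping $e$ with a single element.
\end{itemize}

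I also need that $A\subseteq P$, which holds because only arriving elements are ever added to $A$. With these two observations the case check above completes the induction.
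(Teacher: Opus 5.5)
Your proof is correct and matches the paper's own argument: induction over arrivals, using Lemma~\ref{lem:sensitivity} to see that refreshing an $e\in C$ displaces no other basis member, using rule~\eqref{eq:refresh-rule} to see that an accepted $f$ is never displaced, and noting that a refresh changes only coordinate $e$, which lies outside $A$ before its arrival. Your symmetric use of Lemma~\ref{lem:sensitivity} (with $v_e^-=\min(W_e,x)$ and $v_e^+=\max(W_e,x)$) and the observation $A\subseteq P$ simply spell out steps the paper leaves implicit.
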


\begin{proof}
The arriving label has not previously been accepted. If $e\in C$, a refresh
removes no other old basis member, by Lemma~\ref{lem:sensitivity}; if $e$
enters $D$ from outside $C$, the rule~\eqref{eq:refresh-rule} forbids
displacing an accepted $f$. Acceptance always refreshes the accepted
coordinate to its actual value, and no coordinate changes after its own
arrival. Induction proves both claims; in particular the output is
independent on every realization.
\end{proof}

\subsection{The conditional thinning invariant}\label{sec:prophet-invariant}

The sample and the actual value of a label are two independent draws from
the same distribution. Their joint law can be generated by first drawing
the unordered pair, then using an independent fair bit to assign its two
endpoints to the sample and actual value. These orientation bits are
independent across labels, including when equal endpoints are distinguished
only for the analysis.
Moreover, the stored coordinate $W_e$ is always one of the two draws, so the
algorithm's state on coordinate $e$ is captured by a single bit recording
which draw is stored.

Fix $v_e^-\le v_e^+$, and independently assign one endpoint as
the sample and the other as the actual value by a fair bit. Keep the arrival order
fixed independently of these assignments. The algorithm sees the sample
endpoints at the start and the actual endpoint of each label at its
arrival, but does not see the pairs in advance.

Introduce a latent bit $Z_e$ recording the endpoint currently stored in
$W_e$: bit zero denotes $v_e^-$ and bit one denotes $v_e^+$. Initially
these bits are independent and fair. A refresh flips the current bit,
even if its endpoints are numerically equal. Before $e$ arrives, its
actual endpoint is opposite its stored endpoint. These bits are only
proof variables; the algorithm does not use them.

For $z\in\{0,1\}^E$, let $W(z)$ be the vector with
$W(z)_e=v_e^-$ if $z_e=0$ and $W(z)_e=v_e^+$ if $z_e=1$.
Thus $B(W(Z))$ is the greedy basis of the current stored vector,
with the fixed tie priority.

\begin{lemma}\label{lem:joint-thinning}
After every fixed arrival prefix $P$, the vector $Z$ is uniform on
$\{0,1\}^E$. Conditional on the entire vector $Z$, the accepted set is
an independent fair thinning of $B(W(Z))\cap P$.
\end{lemma}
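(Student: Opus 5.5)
We will prove the statement by induction on the length of the arrival prefix $P$, working conditionally on the unordered pairs $\{v_e^-,v_e^+\}$ throughout; every draw is then encoded by the bit vector $Z$ and the algorithm's fair coins. At the start $P=\varnothing$ and $A=\varnothing$. The bits $Z$ are independent and fair by construction, so the base case holds.

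The first point to record is that the algorithm's view is a function of $Z$. Before $e$ arrives, the stored value is $W(Z)_e$ and the actual endpoint is the opposite one. Hence $C=B(W(Z))$, and $D$ is the greedy basis of $W(Z)$ with $z_e$ flipped. A refresh flips $Z_e$. The algorithm's decision at $e$ therefore depends only on $Z$, on $A$, and on a fresh coin.

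For the inductive step, let $e$ be the next arrival, with pre-state $(Z,A)$ and post-state $(Z',A')$. Fix the other coordinates $z_{-e}$. Since $Z'_{-e}=Z_{-e}$, it suffices to show two things conditionally on $z_{-e}$. First, $Z'_e$ is a fair bit. Second, conditionally on $Z'_e$, the set $A'$ is a fair thinning of $B(W(Z'))\cap(P+e)$.

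Let $B_0$ and $B_1$ be the greedy bases for $z_e=0$ and $z_e=1$. By Lemma~\ref{lem:sensitivity} there are three cases.
\begin{enumerate}[label=(\alph*)]
\item If $e\in B_0\cap B_1$, then $B_0=B_1$ and $e\in C$. A fair coin decides the refresh, and $e$ is accepted exactly on a refresh. The pair $(Z_e,\text{refresh})$ is uniform on four outcomes, so $Z'_e$ is fair and the refresh indicator is a fair bit independent of $Z'_e$. The old thinning of $B\cap P$ is untouched, and $e$ is retained with probability $1/2$ independently.
\item If $e\notin B_0\cup B_1$, nothing happens and $e$ lies outside the basis, so the invariant persists.
\item If $e\in B_1\setminus B_0$, then $B_1=B_0-f+e$. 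When $Z_e=1$, $e\in C$. A refresh (probability $1/2$) moves to $Z'_e=0$ with $e$ rejected, and no refresh keeps $Z'_e=1$ with $e$ rejected. When $Z_e=0$, $e$ lies in $D\setminus C$ and the displaced element is $f$; the rule~\eqref{eq:refresh-rule} applies.
\end{enumerate}

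The main obstacle is case (c), where each of the four pre-image events has weight $1/4$ and we must check the conditional thinning given each value of $Z'_e$.

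Suppose first $f\notin P$. Given $Z'_e=1$, the two equally likely origins are ($Z_e=1$, no refresh, $e$ rejected) and ($Z_e=0$, refresh, $e$ accepted). So $e$ is retained with probability $1/2$. Since $B_1\cap P=B_0\cap P$, the rest of $A$ is the old thinning. Given $Z'_e=0$, $e$ is never accepted, and the basis $B_0$ agrees on $P$ with the old one, so the invariant holds.

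Suppose next $f\in P$. By induction, in state $Z_e=0$ the element $f$ is retained with probability $1/2$, independently of the rest of $A$. Given $Z'_e=1$, the origins are ($Z_e=1$, no refresh), with probability $1/4$ and $e$ rejected, and ($Z_e=0$, $f$ rejected, refresh), with probability $1/4$ and $e$ accepted. In both, $f\notin A'$ as required by $f\notin B_1$, and $e$ is retained with probability $1/2$. Given $Z'_e=0$, the origins are ($Z_e=1$, refresh), with probability $1/4$, where $f\notin A$ because $f\notin B_1$, and ($Z_e=0$, $f$ accepted, no refresh), with probability $1/4$. Thus $f$ is retained with probability $1/2$ in $B_0$ and $e$ is rejected.

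In every subcase the bits for the elements of $(B\cap P)\setminus\{f\}$ are unaffected and remain independent fair bits. Each origin event depends only on $Z_e$, the fresh coin, and $f$'s bit, so these are independent of the other bits. In all subcases the total mass on each value of $Z'_e$ is $1/2$, which completes the induction.
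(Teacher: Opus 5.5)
Your proof is correct and follows essentially the same route as the paper: induction on the prefix with $(Z_g)_{g\ne e}$ fixed, the three cases from Lemma~\ref{lem:sensitivity}, and, in the exchange case, the same four events of mass $1/4$, split by whether $f\in P$. You trace these events backward from each value of $Z'_e$, whereas the paper tabulates them forward. The one point the paper states more explicitly is that the indicators on $B_0\cap B_1\cap P$ have the same fair product law under both values of $Z_e$ and are independent of $I_f$; this is exactly the fact your final paragraph relies on.
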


\begin{proof}
Initially $P=\varnothing$, so the claim holds. Suppose it holds
before $e$ arrives. Fix $(Z_g)_{g\ne e}$, and let $B_i$ be the resulting
basis $B(W(Z))$ when $Z_e=i$, for $i\in\{0,1\}$. By the induction
hypothesis, $\Pr(Z_e=i\mid (Z_g)_{g\ne e})=1/2$. Write
$I_g=\mathbf 1_{\{g\in A\}}$ for each $g\in P$. Conditional on $Z_e=i$,
the variables $(I_g)_{g\in B_i\cap P}$ are independent, and each equals
one with probability $1/2$. Let $H=B_0\cap B_1\cap P$. The distribution
of $(I_g)_{g\in H}$ is the same for $i=0$ and $i=1$; when an exchanged
element $f\in P$ exists, these variables are also independent of $I_f$
conditional on each value of $Z_e$. Thus we can fix $(I_g)_{g\in H}$
without changing the probabilities for $Z_e$ or $I_f$ used below.
The update does not change any $I_g$ with $g\in H$.
We now compute the joint probabilities of $Z_e$ and $I_e$
after processing $e$, together with $I_f$ when $f\in P$.

If neither basis contains $e$, the bases coincide and nothing changes.
If both contain $e$, they again coincide. For each value of $Z_e$
before processing $e$, a fair coin either refreshes the coordinate,
flipping $Z_e$ and accepting $e$, or leaves $Z_e$ unchanged and rejects
$e$. After processing $e$, for each $z\in\{0,1\}$, the joint
probabilities $\Pr(Z_e=z,I_e=1)$ and $\Pr(Z_e=z,I_e=0)$ are each $1/4$.
Thus $\Pr(I_e=1\mid Z_e=z)=1/2$.

In the remaining case write $B_1=B_0-f+e$. If $f\notin P$, either
orientation flips with probability $1/2$, and only a flip from zero
to one accepts. The output-zero mass is $1/2$, with $e$ outside the
basis; the output-one masses with $e$ accepted and unaccepted are
each $1/4$. There is no processed indicator for $f$ to preserve.

Suppose instead that $f\in P$. There are three input states after
the common indicators have been suppressed. The zero-bit state with
$f$ accepted has mass $1/4$ and stays unchanged. The zero-bit state
with $f$ unaccepted has mass $1/4$ and surely flips, accepting $e$.
The one-bit state has mass $1/2$; its fair refresh coin sends half
this mass to bit zero with $f$ unaccepted, and leaves half at bit
one with $e$ unaccepted. Thus the joint output masses are
\begin{center}
\begin{tabular}{lcc}
\toprule
Stored bit and processed basis member & Accepted & Unaccepted\\
\midrule
$Z_e=0$, member $f$ & $1/4$ & $1/4$\\
$Z_e=1$, member $e$ & $1/4$ & $1/4$\\
\bottomrule
\end{tabular}
\end{center}
Each output bit is fair, and its processed basis member has an
independent fair accepted indicator. Multiplying by the unchanged
product law of the common indicators proves the conditional thinning
claim. For every configuration of the other bits, the output bit
is fair, so the entire vector remains uniform. This completes the
induction.
\end{proof}

\subsection{Distributional guarantee and optimality}\label{sec:prophet-guarantee}

\begin{proof}[Proof of Theorem~\ref{thm:prophet}]
For fixed endpoint pairs, Lemma~\ref{lem:joint-thinning} says that at
termination $W$ has the product law of fair endpoint choices. Conditional
on $W$, $A$ is an independent fair thinning of $B(W)$. By
Lemma~\ref{lem:accepted-basis}, each selected stored value is its actual
value. This proves
the fixed-pair version of~\eqref{eq:marked-identity}, with $Y$ a fresh
independent fair choice of endpoints.

For general distributions, draw two independent values per label
and condition on their unordered pairs. Independent fair orientations
recover the law of the sample and actual vectors, including when the
two values coincide. Averaging the fixed-pair result over the pairs
proves~\eqref{eq:marked-identity}.

Finally, nonnegativity and greedy optimality give
\[
 \E\sum_{e\in A}X_e=\E\sum_{e\in T}Y_e
 =\frac12\,\E\sum_{e\in B(Y)}Y_e=\frac12\,\E\OPT(X),
\]
with no integrability assumption.

All data used at a step are the initially given samples and values
already revealed. The algorithm caches the current basis and computes
one replacement basis per arrival, for a total of at most $n+1$ greedy
scans. Each scan uses at most $n$ independence queries. Thus the
algorithm is online and uses $O(n^2)$ independence queries, proving
the algorithmic part of Theorem~\ref{thm:prophet}.

The tightness is well-known and we provide only for completeness. Consider a rank-one matroid with two elements arriving
in the order $1,2$. Let $X_1=1$, and let $X_2$ be $H$ with probability
$1/H$ and zero otherwise. Even an algorithm knowing these distributions
has expected reward at most one: accepting the first value gives one,
and rejecting leaves an independent future reward of mean one. Extra
independent samples cannot change this comparison. The prophet obtains
$2-1/H$ in expectation, so no uniform factor greater than $1/2$ is
possible as $H\to\infty$.
\end{proof}

\section{Secretary guarantees from samples}\label{sec:reduction}

In this section, we describe a reduction from the matroid secretary problem to
the single-sample matroid prophet that preserves constant-competitiveness. The
existence of this reduction is originally credited to Wenzheng Li by Fu et
al.~\cite{fu2024}, who cite private communication.\footnote{The second author thanks Neel Patel for informing him of the existence of this reduction.}
As we do not have access to
this private communication, we, for completeness, give a self-contained
quantitative
version, which yields Corollary~\ref{cor:general-secretary} when applied
to our single-sample algorithm.

\begin{theorem}[Reduction to secretary]\label{thm:reduction}
Let $0<\alpha\le1$. An $\alpha$-competitive single-sample prophet
algorithm for arbitrary matroids yields an
$\alpha^2/16$-competitive secretary algorithm for arbitrary matroids.
It suffices for the prophet algorithm to handle a uniformly random arrival
order, independent of the samples and values, without knowing the
permutation in advance. The reduction preserves polynomial running time.
\end{theorem}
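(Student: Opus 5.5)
The plan follows the three moves sketched in the techniques section: a sample-based filter, sparse activation, and an interleaving of a stored prefix with the physical arrivals.

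\emph{Filtering.} Draw $m\sim\mathrm{Bin}(n,1/2)$ and let the first $m$ arrivals form a uniformly random half-sample $S_0$; reject all of them. Each remaining element $e$ is a \emph{candidate} if $e\notin\cl_M(\{f\in S_0: f\text{ heavier than }e\})$. Equivalently, $e$ is not spanned by the heavier sample elements, which can be decided ordinally. Two bounds are needed.
\begin{itemize}
\item \emph{Half the optimum survives.} This is the standard forbidden-set step: each $e\in B(E)$ lies outside $S_0$ with probability $1/2$, and when it does, it is never spanned by heavier elements of $S_0$. So candidates from $B(E)$ have expected weight at least $\OPT(w)/2$, and hence the expected optimum over candidates is at least $\OPT(w)/2$.
\item \emph{The candidate set is light.} Scan all of $E$ in weight order and flip the half-sample coin of each element as it is scanned. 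A candidate is an element outside $S_0$ that is not spanned by earlier sample elements. Charge it to the next element in $B(E)$ order whose coin sends it to the sample and that raises the rank of the scanned sample. A geometric-type argument shows that the expected number of candidates per rank increment is at most $2$, and that candidates are charged to heavier greedy elements. This gives $\E w(\text{candidates})\le 2\,\OPT(w)$; a careful version should sharpen this to $\le\OPT(w)$, but any constant suffices for the structure.
\end{itemize}

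\emph{Activation.} Restrict the matroid to the candidates, with all other elements given value $0$. Each candidate $e$ independently receives two activation bits with probability $q$ each: the sample bit $\sigma_e$ and the value bit $\xi_e$. We feed the prophet algorithm the samples $S_e=w_e\sigma_e$ and values $X_e=w_e\xi_e$. These are independent copies from the two-point law $w_e\cdot\mathrm{Bernoulli}(q)$, so the prophet guarantee gives expected reward at least $\alpha\,\E\OPT(X)$. Since $X$ is an independent $q$-thinning of the candidate set, the prophet's reward is at least $\alpha q\cdot$(the candidate optimum)$\,\cdot$(a factor near $1$); for example, a rank-submodularity or contention argument gives $\E\OPT(X)\ge q\,\E\OPT$ directly from the thinning.

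We cannot implement selections of elements that arrived in the rejected prefix. A selection with value $X_e>0$ requires $\xi_e=1$. The real element can be accepted only if $e$ has not yet physically passed. So we arrange the virtual order so that an element whose sample bit is set was already seen. Then the losses come only from elements with $\sigma_e=\xi_e=1$, whose total expected value is at most $q^2\,\E w(\text{candidates})$. The reward from implementable selections is therefore at least
\[
\alpha q\cdot\tfrac12\OPT(w)-q^2\cdot 2\OPT(w).
\]
Optimizing at $q=\alpha/8$ gives $\alpha^2/32$. To reach $\alpha^2/16$, I would use the sharper bound $\E w(\text{candidates})\le\OPT(w)$, which gives $q=\alpha/4$ and exactly $\alpha^2/16$. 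So the target lemma is that the expected candidate weight is at most $\OPT(w)$, and I will aim the filtering argument at that bound.

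\emph{Interleaving, the main obstacle.} The prophet must see a uniformly random order independent of the activations, while the physical order and the post-cutoff arrivals are correlated with which elements were sampled. My construction is as follows. Draw each candidate's virtual position by assigning fresh uniform time stamps. At each physical arrival after the cutoff, run the prophet on all stored prefix candidates whose stamps precede it; their value bits are simulated, and accepting them is a loss, counted above. Then process the arriving element itself. I need to verify that this interleaving produces an exactly uniform virtual order independent of $(\sigma,\xi)$, conditionally on the candidate set. This is delicate because candidacy itself depends on the prefix. I would handle it by defining candidates and activations on the entire ground set through independent coins, before revealing positions, using the random-order symmetry. Finally, every step is ordinal apart from the fixed values $w_e\cdot\{0,1\}$ fed to the prophet. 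The prophet treats these as numbers, so I would replace $w_e$ by the rank of $e$ in the observed order, or argue via~\eqref{eq:layer-cake} with prefix indicators. All steps are polynomial given a polynomial prophet algorithm.
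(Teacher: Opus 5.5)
Your three-step architecture (half-sample filter, rare activations, interleaving) is the paper's, but neither of the two steps that carry the argument is actually established.

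First, the constant $\alpha^2/16$ needs $\E w(T)\le\OPT(w)$ for the candidate set $T$. You only sketch $\E w(T)\le 2\OPT(w)$, which gives $\alpha^2/32$, and you leave the sharp bound as a ``target.'' The sketch is also not a proof as written. Charging a candidate to the next rank-raising sample element after it charges it to a \emph{lighter} element, which bounds nothing, and the count $2$ is never derived. The missing observation is a one-line symmetry. The event $e\notin\cl_M(S_{\succ e})$ depends only on the coins of elements heavier than $e$, so it is independent of whether $e\in S$. Conditional on this event, $e\in S$ means exactly $e\in B(S)$, and $e\notin S$ means $e\in T$. Hence $\Pr(e\in T)=\Pr(e\in B(S))$ for every $e$, and $\E w(T)=\E w(B(S))\le\OPT(w)$. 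Your block picture can also be repaired. A candidate scanned while the sample has rank $j-1$ weighs at most the $j$-th element of $B(E)$. The expected number of such candidates is the mean number of failures before the first success of fair coins, which is $1$, not $2$.

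Second, the interleaving, which you yourself call the main obstacle, is left open, and part of your setup cannot work. The prophet algorithm receives all samples before any actual value, so every element with sample bit $1$ must be physically observed before the simulation starts. The sampled set therefore cannot come from independent per-candidate coins. Nor can activation be restricted to candidates, since a physical element's candidacy is known only when it arrives.

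The paper first conditions on $S$. This makes $T$ and $v_e=w_e\ind\{e\in T\}$ deterministic and leaves the order of $E'=E\setminus S$ uniform, so the dependence of candidacy on the prefix causes no difficulty. It then runs the prophet algorithm on $M|E'$, with zero values off $T$. It draws a word $b\in\{0,1\}^N$ of Bernoulli-$q$ bits. The next $K$ arrivals, where $K$ is the number of ones in $b$, are rejected as a set $R$; these get sample $v_e$ and all other elements get sample $0$. It then scans $b$: a one replays the next stored element of $R$ with actual value $Y_ev_e$, and a zero processes the next physical arrival. Each pair $(R,\Sigma)=(D,\sigma)$ arises from exactly one physical permutation ($\sigma|_D$ followed by $\sigma|_{E'\setminus D}$) and one word. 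So $\Pr(R=D,\Sigma=\sigma\mid S)=q^{|D|}(1-q)^{N-|D|}/N!$: $R$ is a Bernoulli-$q$ set, and $\Sigma$ is uniform and independent of it, exactly as the prophet guarantee requires.

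Your ``fresh uniform time stamps'' compared against physical arrivals do not specify a uniform interleaving pattern, and that uniformity is precisely what must be proved. Finally, ordinality is not part of this theorem. Feeding ranks in place of $w_e$ would void the prophet guarantee with respect to $w$. The paper's ordinal claim comes only from its specific prophet algorithm being comparison-based.
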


\emph{Overview.}
The natural idea is to reject a prefix of the arrivals, use their weights as
the prophet samples, and then run the prophet algorithm on the remaining
arrivals. The difficulty is that the prophet algorithm may later decide to
select one of the already rejected prefix elements, which cannot be
implemented. The reduction handles this in three steps.
Section~\ref{sec:reduction-filter} rejects a preliminary half-sample and
filters the remaining elements so that the expected total weight of all
candidates is at most the original optimum, while the expected optimum
of the candidates is at least half the original optimum. Section~\ref{sec:reduction-activation} presents each candidate to
the prophet algorithm as a random variable that is nonzero only with a small
activation probability $q$, independently as a sample and as a value; the
useful prophet reward is then linear in $q$, while an unimplementable
selection requires both activations and costs only $q^2$. This is where the
filter is needed: without it the quadratic loss would be charged against the
total weight rather than the optimum. Section~\ref{sec:reduction-simulation}
shows how to realize this simulation exactly online, by interleaving the
stored prefix with the physical arrivals so that the simulated arrival order is
independent of the activations. Optimizing $q$ yields the ratio
$\alpha^2/16$.

\subsection{A preliminary sample}\label{sec:reduction-filter}

Fix the secretary weights $w$. Reject a prefix of
length $K_0\sim\operatorname{Bin}(n,1/2)$, independently of the arrival
permutation. Its set $S$ is an independent half-sample of $E$. An element
$e\notin S$ is \emph{eligible} if $e\notin\cl_M(S_{\succ e})$, where
$S_{\succ e}$ consists of sample elements preceding $e$ in weight and
tie-priority order. Let $T$ be the eligible set. Eligibility is known
when $e$ arrives, since every sample weight has already been observed.

\begin{lemma}\label{lem:light-candidates}
The eligible set satisfies $\E[w(T)]=\E[w(B(S))]\le\OPT(w)$ and
$\E[\OPT(w|_T)]\ge\OPT(w)/2$.
\end{lemma}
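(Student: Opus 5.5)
The plan is to prove the two claims separately, both by counting element by element. The first claim uses an exact identity between the events $e\in T$ and $e\in B(S)$. The second uses a prefix decomposition of $\OPT$ together with the observation that optimal-basis elements are never filtered out.

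For the first claim, fix $e$. By the greedy closure criterion of Section~\ref{sec:preliminaries}, $e\in B(S)$ if and only if $e\in S$ and $e\notin\cl_M(S_{\succ e})$. By definition, $e\in T$ if and only if $e\notin S$ and $e\notin\cl_M(S_{\succ e})$. Since $K_0\sim\operatorname{Bin}(n,1/2)$ is independent of the uniform permutation, $S$ contains each element independently with probability $1/2$. The set $S_{\succ e}$ is determined by the membership of the other elements, so it is independent of the event $e\in S$. Hence
\[
\Pr(e\in T)=\Pr(e\in B(S))=\tfrac12\Pr\bigl(e\notin\cl_M(S_{\succ e})\bigr).
\]
Multiplying by $w_e$ and summing over $e$ gives $\E w(T)=\E w(B(S))$. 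Finally, $w(B(S))=\OPT(w|_S)\le\OPT(w)$ because weights are nonnegative.

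For the second claim, I would use the summation-by-parts identity~\eqref{eq:layer-cake}, applied to the restriction $M|T$. Listing $E$ in weight order as $e_1,\ldots,e_n$ with prefixes $H_j$, it gives
\[
\OPT(w|_T)=\sum_j(w_{e_j}-w_{e_{j+1}})\,r(T\cap H_j).
\]
Since $\OPT(w)=\sum_j(w_{e_j}-w_{e_{j+1}})r(H_j)$ with nonnegative coefficients, it suffices to show $\E\, r(T\cap H)\ge r(H)/2$ for every prefix $H$.

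The key observation is that every element of the global greedy basis survives the filter unless it lands in the sample. Take $e\in B(E)$. Then $e$ is not spanned by all heavier elements of $E$, so in particular $e\notin\cl_M(S_{\succ e})$. Therefore $e\in T$ whenever $e\notin S$, which gives
\[
(B(E)\cap H)\setminus S\subseteq T\cap H.
\]
The left side is independent. By the third greedy property it has expected size $\tfrac12|B(E)\cap H|=\tfrac12 r(H)$. Hence $\E\, r(T\cap H)\ge r(H)/2$, and summing over prefixes yields $\E\,\OPT(w|_T)\ge\OPT(w)/2$.

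I expect the main obstacle to be the second claim. The tempting route via $r(T\cap H)+r(S\cap H)\ge r(H)$ gives no useful bound, since $r(S\cap H)$ can be close to $r(H)$. The way around it is to compare $T$ with the fixed basis $B(E)$ instead of with $S$, and the containment above is exactly what that comparison needs.
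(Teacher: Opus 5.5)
Your proof is correct and follows the paper's argument: the first claim uses the same observation that $e\notin\cl_M(S_{\succ e})$ is independent of whether $e\in S$, and the second rests on the same fact that $B(E)\setminus S$ is an independent subset of $T$. The only difference is that the paper does not use your prefix decomposition; it concludes directly from $\OPT(w|_T)\ge w(B(E)\setminus S)$, whose expectation is $\frac12 w(B(E))=\OPT(w)/2$.
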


\begin{proof}
The event $e\notin\cl_M(S_{\succ e})$ is independent of whether $e$
belongs to $S$. The two possibilities put $e$ in $B(S)$ and in $T$,
respectively, with equal probability. Summing their weights gives
the equality, and every $B(S)$ is independent, giving the upper bound.
For the other inequality, let $I=B(E)$. Each $e\in I$ is not spanned
even by all of its predecessors in the greedy order, so $I\setminus S$
is an independent subset of $T$. Its expected weight is $\OPT(w)/2$.
\end{proof}

\subsection{Rare activations and the loss bound}\label{sec:reduction-activation}

Conditional on $S$, let $E'=E\setminus S$ and define fixed values
$v_e=w_e\ind\{e\in T\}$ for $e\in E'$. The restriction $M|E'$ is known,
and each $v_e$ can be computed when $w_e$ is revealed. For a parameter
$q\in(0,1)$, we simulate the prophet algorithm on this restriction with
independent distributions
$\mathcal D_e=(1-q)\delta_0+q\delta_{v_e}$.
Write $X_e,Y_e$ for mutually independent Bernoulli-$q$ activation bits:
the sample is $X_ev_e$, and the actual value is $Y_ev_e$.

The simulation below observes every label with $X_e=1$ in a second
rejected prefix. It supplies all prophet samples before simulating
any actual arrivals, and processes the latter in a uniform
simulated arrival order independent of all activation bits. When the prophet algorithm
selects a positive-value element with $X_e=0$, that label is physically
arriving and we accept it immediately. Selections with $X_e=1$ cannot
be implemented physically and are discarded. All selections remain
in the internal state of the prophet algorithm, so the real output
is a subset of an independent set.

On every realization, the discarded value is at most $\sum_e X_eY_ev_e$. Thus, conditional
on $S$, the prophet guarantee gives
\begin{align}\label{eq:reduction-bound}
 \E[\text{secretary reward}\mid S]
 &\ge \alpha\E[\OPT((Y_ev_e)_{e\in E'})\mid S]-q^2w(T)\notag\\
 &\ge \alpha q\OPT(w|_T)-q^2w(T).
\end{align}
The second line follows by retaining the activated members of an
optimum of $T$. Taking expectations and applying
Lemma~\ref{lem:light-candidates} yields
$\E[\text{reward}]\ge(\alpha q/2-q^2)\OPT(w)$.
Choosing $q=\alpha/4$ gives $\alpha^2\OPT(w)/16$, as claimed.

We note that without the preliminary sample, the same calculation would charge
$q^2\sum_e w_e$, which can be arbitrarily larger than $\OPT(w)$.
The bound on $\E[w(T)]$ is needed for the analysis to go through.

\subsection{Exact online simulation}\label{sec:reduction-simulation}

We supply the implementation and its distributional justification.
Conditional on $S$, let $N=|E'|$; the residual physical permutation
$\pi$ is uniform. Draw a word $b\in\{0,1\}^N$ of independent
Bernoulli-$q$ bits, independently of $\pi$, and let $K$ be its number
of ones. Reject the next $K$ physical arrivals, denoting their set by
$R$, and store their weights. Set $X_e=\ind\{e\in R\}$. Initialize the
prophet algorithm with sample $v_e$ on $R$ and zero elsewhere, using
fresh independent internal randomness. Computing
$v_e$ for $e\in R$ uses only $S$ and this observed weight.

Draw the actual bits $Y_e$ independently. Scan the word $b$ from left
to right. At a one, take the next stored label of $R$ in its observed
order and simulate its actual arrival with value $Y_ev_e$. At a zero,
read the next physical arrival, compute its eligibility and $v_e$,
and immediately supply $Y_ev_e$ to the prophet algorithm. Implement
its positive-value acceptance in the physical process. The internal
state records every simulated acceptance, including discarded ones and
zero-value selections. Empty streams when $K=0$, $K=N$, or $N=0$
require no special changes.

Let $\Sigma$ be the simulated arrival permutation. For a fixed subset $D\subseteq E'$
and permutation $\sigma$ of $E'$, exactly one physical permutation and
one word $b$ produce $(R,\Sigma)=(D,\sigma)$: the physical permutation is
$\sigma|_D$ followed by $\sigma|_{E'\setminus D}$, and the word marks
the positions of $D$ in $\sigma$. Consequently,
\begin{equation}\label{eq:interleaving}
 \Pr(R=D,\Sigma=\sigma\mid S)
 =\frac{q^{|D|}(1-q)^{N-|D|}}{N!}.
\end{equation}
Thus $R$ is an independent Bernoulli-$q$ sample and $\Sigma$ is uniform
independently of $R$. The bits $Y$ are independent of both. The simulated
sample and actual vectors are therefore independent product draws from
the same distributions, in an independent random order, exactly as
required by the prophet guarantee. In particular, the simulation allows
$X_e=Y_e=1$; this is the overlap charged in~\eqref{eq:reduction-bound}.

After the preliminary sample, we test each arriving label's
eligibility with one independence query against the prefix of $B(S)$
preceding it in weight order, and make the accept/reject decision
immediately. The prophet algorithm runs on $M|E'$, and all remaining
bookkeeping is polynomial. Thus the reduction is online and preserves
polynomial running time.

For Algorithm~\ref{alg:prophet}, the resulting secretary algorithm is ordinal.
First suppose all secretary weights are positive and distinct. Every
simulated coordinate is either zero or an already observed secretary
weight, so the observed order determines every comparison used by the
prophet algorithm. Eligibility also uses only that order. The resulting
algorithm therefore has the same behavior for every positive weight vector
with a given strict order. Taking limits of such vectors extends its
$1/64$ guarantee to nonnegative weights, including ties resolved by the
fixed label priority. The algorithm treats every eligible real weight as
above the artificial zeros; it need not identify genuine zero weights.
Thus Theorem~\ref{thm:reduction} gives an ordinal $1/64$ guarantee when
instantiated with Algorithm~\ref{alg:prophet}.

\emph{A constant number of samples.}
The same argument applies if the prophet algorithm uses $k\ge1$ independent
samples per distribution. Set $p=1-(1-q)^k$, and use a Bernoulli-$p$
word to generate the second rejected prefix $R$ and the interleaving.
Conditional on $R$, draw the $k$ sample flags independently across labels:
on $R$, use the Bernoulli-$q$ product law conditioned on at least one
being one; outside $R$, set all flags to zero. Their unconditional law is
the required product law, independent of the simulated arrival order.
The expected discarded value is at most
$q[1-(1-q)^k]\E[w(T)]\le kq^2\OPT(w)$. Choosing $q=\alpha/(4k)$ gives
ratio $\alpha^2/(16k)$. This extension uses only the stated expected
prophet guarantee, not any conditional guarantee after fixing endpoint
pairs.

\section*{Acknowledgments and research process}
\addcontentsline{toc}{section}{Acknowledgments and research process}
Several of the results in this paper date from April 2026.
We first investigated the strong matroid secretary conjecture by computing
optimal ordinal policies for small matroids with the linear program of
Section~\ref{sec:lp}; the computed ratios always exceeded $1/e$. We also
tested several stronger conjectures on these instances, among them the
truncation conjecture of Section~\ref{sec:lp}, which we could
prove for uniform matroids and later refuted using our earlier high-girth
hardness result for graphic matroids. Our search for problems equivalent
to the strong conjecture led to the single-sample prophet inequality of
Section~\ref{sec:prophet}; the constant-to-constant implication for
secretary was known through personal communication, but as we did not have
access to that argument, Section~\ref{sec:reduction} gives an explicit
quantitative reduction.

Recent interactions with GPT Sol 5.6 and especially GPT Astra were
essential in obtaining the results above, in particular the linear-matroid
construction of Section~\ref{sec:linear} and the tight single-sample
prophet inequality of Section~\ref{sec:prophet}. Proofs produced in these
interactions were then leanified by GPT Astra and checked independently by
the authors.

\bibliographystyle{alphaurl}
\bibliography{references}

\appendix
\section{An exact linear program for ordinal matroid secretary}\label{app:lp}

Recall from Section~\ref{sec:lp} that $\rho(M)$ denotes the optimal ordinal
secretary ratio of a matroid $M$ of positive rank. We now give the full
linear program and prove that its optimum is $\rho(M)$, including the
attainment and policy-extraction statements in Theorem~\ref{thm:lp-exact}.
The computational motivation and comparison with earlier LP formulations
are in Section~\ref{sec:lp}.

We present the uncompressed program first, with a direct probabilistic
meaning for every variable. After proving correctness, we distinguish the
weighted and per-element objectives and describe symmetry and state
reductions that preserve the optimum.

\subsection{States and variables}\label{sec:lp-states}

Throughout, $M=(E,\mathcal I)$ has $n$ elements and rank function $r$.
Since policies are ordinal, only the relative order of the weights matters,
and with the fixed tie-breaking priority of Section~\ref{sec:preliminaries}
we may regard the weight vector as a strict total order on $E$; we call
this the \emph{weight order}. The elements arrive in a uniformly random
order, which is independent of the weight order.

An \emph{ordered subset} $\sigma$ is a list of distinct labels. We write
$S_\sigma$ for its underlying set and $|\sigma|=|S_\sigma|$ for its
length. For a weight order $\pi$ on $E$ and a subset $S\subseteq E$,
$\pi|_S$ denotes the ordered subset listing $S$ in decreasing weight
order. For $e\in S_\sigma$, $\sigma-e$ denotes the list obtained by
deleting $e$; thus $(\pi|_S)-e=\pi|_{S-e}$.

The observation available to an ordinal policy after $t$ arrivals is the
observed set $S$ together with the relative weight order on $S$, that is,
the ordered subset $\pi|_S$, together with the order in which those
elements arrived. The following observation is the basis of the program.

\begin{lemma}\label{lem:ordinal-invariance}
Fix an ordinal policy, a weight order $\pi$, and a $t$-subset $S$. The
probability that the first $t$ arrivals form $S$ and that the accepted set
after these arrivals equals $A$ depends on $\pi$ only through $\pi|_S$.
\end{lemma}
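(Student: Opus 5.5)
The plan is to sum over arrival sequences and use the fact that an ordinal policy's decisions are functions of the observed information only. Fix the policy, $\pi$, and the $t$-subset $S$. The event that the first $t$ arrivals form $S$ is the disjoint union, over the $t!$ orderings $(e_1,\ldots,e_t)$ of $S$, of the events that the arrival sequence begins with $e_1,\ldots,e_t$. Each such event has probability $(n-t)!/n!$, independent of $\pi$, because the arrival permutation is uniform and independent of the weights. It therefore suffices to show that, conditional on the first $t$ arrivals being $e_1,\ldots,e_t$ in this order, the law of the accepted set depends on $\pi$ only through $\pi|_S$.

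For this we formalize what an ordinal policy is. At step $j$ the policy's decision to accept or reject $e_j$ is a (possibly randomized) function of three things: the arrival sequence $e_1,\ldots,e_j$, the relative weight order $\pi|_{\{e_1,\ldots,e_j\}}$, and the past decisions and internal coins. It may also depend on the matroid and the tie rule, which are fixed. We then induct on $j\le t$ to prove that the joint law of the decisions made at steps $1,\ldots,j$ is a product of transition kernels, each depending only on $(e_1,\ldots,e_i)$, $\pi|_{\{e_1,\ldots,e_i\}}$, and the earlier decisions, for $i\le j$. For $i\le t$ we have $\pi|_{\{e_1,\ldots,e_i\}}=(\pi|_S)|_{\{e_1,\ldots,e_i\}}$, because restricting a total order to a subset commutes with first restricting to a superset. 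Hence every kernel up to time $t$ is determined by $\pi|_S$ and the arrival sequence.

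Summing over orderings then gives a probability that is a function of $\pi|_S$ alone. The only point needing care is the conditioning. The future arrivals $e_{t+1},\ldots$ must not influence decisions up to time $t$, which holds because the policy is online. The internal randomness must be independent of the arrival order and of $\pi$, which is part of the definition of a policy. We should also note that the probability is taken jointly over the arrival permutation, not conditioned on the full permutation, so the uniform weight $(n-t)!/n!$ is a common factor.

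The main obstacle is this modeling step: stating precisely that an ordinal policy's kernel at step $i$ is measurable with respect to $(\pi|_{S_i},\,e_1,\ldots,e_i,\,\text{history})$. Once that is settled, the identity $(\pi|_S)|_{S_i}=\pi|_{S_i}$ makes the rest a routine induction.
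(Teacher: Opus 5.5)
Your proposal is correct and follows essentially the paper's proof. Both arguments average over the uniform arrival order within $S$, which is independent of $\pi$. Both observe that the decisions through time $t$ depend only on that order, the relative weight order on the observed elements, and the internal coins. Both conclude via $(\pi|_S)|_{S_i}=\pi|_{S_i}$. You spell out the transition-kernel induction that the paper compresses into a single sentence.
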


\begin{proof}
The first $t$ arrivals form $S$ with probability $\binom nt^{-1}$, and
conditional on this event the arrival order within $S$ is uniform on the
$t!$ orders of $S$. The policy's decisions during these arrivals are
determined by the arrival order within $S$, the relative weight order on
the observed elements, and the policy's internal coins. The relative weight
order on any subset of $S$ is determined by $\pi|_S$. Averaging over the
arrival order and the coins therefore yields a quantity depending on $\pi$
only through $\pi|_S$.
\end{proof}

We introduce one variable for each state and one for each transition.

\begin{itemize}
\item For every ordered subset $\sigma$, including the empty list, and
every independent $A\subseteq S_\sigma$, a variable $p_{\sigma,A}$. Its
intended value is the probability, under any weight order $\pi$ with
$\pi|_{S_\sigma}=\sigma$, that the first $|\sigma|$ arrivals form
$S_\sigma$ and the accepted set after them is $A$. By
Lemma~\ref{lem:ordinal-invariance} this is well defined. Note that
$p_{\sigma,A}$ includes the probability $\binom n{|\sigma|}^{-1}$ of the
observed set; it is not a conditional probability.
\item For every nonempty $\sigma$, every $e\in S_\sigma$, and every
independent $A\subseteq S_\sigma-e$, variables $y_{\sigma,e,A}$ and
$z_{\sigma,e,A}$. Their intended values are the probabilities, under any
weight order restricting to $\sigma$, that the first $|\sigma|-1$ arrivals
form $S_\sigma-e$ with accepted set $A$, that $e$ arrives next, and that
$e$ is accepted (for $y$) or rejected (for $z$).
\end{itemize}

Different orderings of the same set $S$ describe different weight orders,
and their variables are not combined: for each fixed $\pi$, only the
variables with $\sigma=\pi|_S$ for some $S$ are relevant. States with a
dependent accepted set are unreachable and are omitted.

For example, take $U_{2,3}$ on labels $a,b,c$, and suppose $a$ is heavier
than $b$. The state $\sigma=(a,b)$, $A=\{a\}$ can be reached in two ways:
$a$ arrived first and was accepted, then $b$ was rejected; or $b$ arrived
first and was rejected, then $a$ was accepted. Accordingly,
$p_{(a,b),\{a\}}=z_{(a,b),b,\{a\}}+y_{(a,b),a,\varnothing}$.
The total mass entering through the first of these histories, before the
decision on $b$, is
$y_{(a,b),b,\{a\}}+z_{(a,b),b,\{a\}}=p_{(a),\{a\}}/2$,
since $b$ is one of the two labels remaining after $a$.
These identities hold regardless of where $c$ lies in the weight order.

\subsection{The program}\label{sec:lp-program}

For a weight order $\pi$ and an element $e$, define the abbreviation
\begin{equation}\label{eq:lp-marginal}
 x^\pi_e=\sum_{S\subseteq E:\ e\in S}\;
 \sum_{\substack{A\subseteq S-e\\ A\in\mathcal I}}
 y_{\pi|_S,e,A},
\end{equation}
a linear function of the variables. Its intended value is the probability
that $e$ is accepted under weight order $\pi$: the summands correspond to
the disjoint possibilities for the observed set at the arrival of $e$ and
the accepted set at that moment. A \emph{prefix} of $\pi$ is a set
consisting of the $j$ heaviest elements under $\pi$, for some
$0\le j\le n$. The program is
\begin{subequations}\label{eq:lp}
\begin{alignat}{2}
 &\text{maximize}\quad c\quad\text{subject to}\notag\\
 & p_{\varnothing,\varnothing}=1,\label{eq:lp-root}\\
 & y_{\sigma,e,A}+z_{\sigma,e,A}
   =\frac{p_{\sigma-e,A}}{n-|\sigma|+1}
   &\quad&\text{for all nonempty $\sigma$, $e\in S_\sigma$, and
          independent $A\subseteq S_\sigma-e$,}\label{eq:lp-input}\\
 & y_{\sigma,e,A}=0
   &&\text{whenever $A+e$ is dependent,}\label{eq:lp-feasible}\\
 & p_{\sigma,A}=\sum_{e\in S_\sigma\setminus A}z_{\sigma,e,A}
   +\sum_{e\in A}y_{\sigma,e,A-e}
   &&\text{for all nonempty $\sigma$ and independent
          $A\subseteq S_\sigma$,}\label{eq:lp-output}\\
 & \sum_{e\in H}x^\pi_e\ge c\,r(H)
   &&\text{for every weight order $\pi$ and every prefix $H$ of $\pi$,}
   \label{eq:lp-competitive}\\
 & p,\,y,\,z\ge0.\notag
\end{alignat}
\end{subequations}
Constraint~\eqref{eq:lp-input} says that after reaching the state
$(\sigma-e,A)$, the label $e$ arrives next with probability
$1/(n-|\sigma|+1)$, since $n-|\sigma|+1$ labels remain unobserved and the
next arrival is uniform among them; the arriving label is then either
accepted or rejected. Constraint~\eqref{eq:lp-feasible} forbids dependent
acceptances. Constraint~\eqref{eq:lp-output} partitions the ways of
reaching the state $(\sigma,A)$ according to the last arrival $e$ and
whether it was rejected (so the accepted set was already $A$) or accepted
(so the accepted set was $A-e$). Constraint~\eqref{eq:lp-competitive} is
the prefix form of $c$-competitiveness, which by~\eqref{eq:layer-cake} is
equivalent to the weighted form for ordinal policies.

A consequence of the flow constraints is the mass identity
\begin{equation}\label{eq:lp-mass}
 \sum_{A}p_{\sigma,A}=\binom{n}{|\sigma|}^{-1}
 \qquad\text{for every ordered subset }\sigma,
\end{equation}
which is the probability that the first $|\sigma|$ arrivals form
$S_\sigma$. Indeed, summing~\eqref{eq:lp-output} over $A$ regroups the
right side as $\sum_{e\in S_\sigma}\sum_{A'}(y_{\sigma,e,A'}+z_{\sigma,e,A'})$,
with $A'$ ranging over independent subsets of $S_\sigma-e$;
by~\eqref{eq:lp-input} and induction on $|\sigma|=t$ this equals
$t\binom n{t-1}^{-1}/(n-t+1)=\binom nt^{-1}$.

\subsection{Correctness}\label{sec:lp-correctness}

\begin{theorem}\label{thm:lp-exact}
Let $M$ be a matroid with at least one nonloop. The linear
program~\eqref{eq:lp} has an optimal solution, and its optimal value is
$\rho(M)$. Moreover, every feasible solution with objective value $c$
yields a $c$-competitive ordinal policy whose decisions depend only on the
current observed weight order, the arriving label, and the current accepted
set. In particular,
the supremum defining $\rho(M)$ is attained by such a policy.
\end{theorem}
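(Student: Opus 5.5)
The plan has two directions plus attainment. First I show that every ordinal policy induces a feasible point whose objective equals its competitive ratio. Then I show that every feasible point induces an ordinal policy achieving its objective. Finally I use compactness of the feasible region to get an optimum.

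\emph{Soundness (policy to LP).} Fix an ordinal $c$-competitive policy. For each ordered subset $\sigma$, choose any weight order $\pi$ with $\pi|_{S_\sigma}=\sigma$, and define $p_{\sigma,A}$, $y_{\sigma,e,A}$, $z_{\sigma,e,A}$ as the intended probabilities. Lemma~\ref{lem:ordinal-invariance} makes these independent of the choice of $\pi$. For $y,z$ the event only concerns the first $|\sigma|$ arrivals, so the same argument applies to them.

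With these definitions the constraints are checked one at a time:
\begin{itemize}
\item \eqref{eq:lp-input} holds because, given the history of the first $|\sigma|-1$ arrivals, the next label is uniform among the $n-|\sigma|+1$ unobserved labels, independently of the policy's coins.
\item \eqref{eq:lp-feasible} holds because the policy keeps its accepted set independent.
\item \eqref{eq:lp-output} is a partition of the event according to the last arrival and its decision.
\item \eqref{eq:lp-root} is immediate.
\end{itemize}
For any $\pi$, $x^\pi_e$ equals $\Pr(e\in A)$ under $\pi$. The prefix inequalities \eqref{eq:lp-competitive} then follow from $c$-competitiveness via the converse direction of \eqref{eq:layer-cake} noted in Section~\ref{sec:preliminaries}, which passes to limits of strictly ordered weights. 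Hence the LP value is at least $\rho(M)$.

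\emph{Completeness (LP to policy).} Given a feasible $(p,y,z,c)$, define a Markov policy. When $e$ arrives, let $\sigma$ be the current observed weight order including $e$ and let $A$ be the current accepted set. The policy accepts with probability $y_{\sigma,e,A}/(y_{\sigma,e,A}+z_{\sigma,e,A})$, and accepts with an arbitrary fixed value such as $0$ if the denominator vanishes. This rule never makes a dependent acceptance, by \eqref{eq:lp-feasible}, and it is ordinal.

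Fix a weight order $\pi$. I prove by induction on $t$ that the probability that the first $t$ arrivals form $S$ with accepted set $A$ equals $p_{\pi|_S,A}$.
\begin{itemize}
\item The base case is \eqref{eq:lp-root}.
\item For the step, the event ``state $(\pi|_{S-e},A)$, then $e$ arrives'' has probability $p_{\pi|_{S-e},A}/(n-t+1)$. By the induction hypothesis and \eqref{eq:lp-input} this equals $y+z$.
\item The coin therefore splits this mass exactly into $y$ and $z$. When $y+z=0$ the mass is zero, so the arbitrary choice is harmless.
\item Summing over the last arrival via \eqref{eq:lp-output} gives the claim at time $t$.
\end{itemize}
Here Markovianity matters: the decision depends on the history only through $(\sigma,A)$, so conditioning on the full history is not needed. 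It then follows that $\Pr(e\in A)=x^\pi_e$, and \eqref{eq:lp-competitive} together with \eqref{eq:layer-cake} gives $c$-competitiveness. Thus $\rho(M)$ is at least the LP value, and every feasible point yields a policy of the stated form.

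\emph{Attainment.} The mass identity \eqref{eq:lp-mass} bounds every $p$ variable in $[0,1]$, and \eqref{eq:lp-input} then bounds $y$ and $z$. Since $M$ has a nonloop, some prefix has $r(H)\ge1$, so $c$ is bounded above, e.g.\ by $\sum_e x_e^\pi\le n$. The feasible region is nonempty: the policy that never accepts gives a point with $c=0$. A linear program that is feasible and bounded attains its optimum. The optimal solution then gives a policy attaining $\rho(M)$.

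I expect the main obstacle to be the soundness step. It requires checking carefully that the transition variables are well defined independently of $\pi$, which needs Lemma~\ref{lem:ordinal-invariance} extended to include the identity and decision of the $t$-th arrival. It also requires the limit argument linking the weighted and prefix forms when weights have ties.
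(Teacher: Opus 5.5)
Your proposal is correct and follows the paper's proof essentially step for step. It uses the same policy-to-LP direction via Lemma~\ref{lem:ordinal-invariance} and the limiting form of~\eqref{eq:layer-cake}, the same Markov rule $y/(y+z)$ with the same induction on $|S|$, and the same "feasible plus bounded objective" attainment. The one small difference is that you bound $c$ through~\eqref{eq:lp-mass} and~\eqref{eq:lp-input}, whereas the paper uses the completeness claim to see that $x^\pi_e$ is a probability, giving $c\le1$; either works. Note also that the feasible region is not compact, since $c$ is unbounded below, but your argument only uses boundedness of the objective, so this does not matter.
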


\begin{proof}
We show that every $c$-competitive ordinal policy yields a feasible
solution with objective value $c$, and that every feasible solution with
objective value $c$ yields a $c$-competitive ordinal policy. Together these
give that the value of the program is $\rho(M)$, once we know that the
maximum is attained.

\emph{From policies to solutions.}
Fix an ordinal policy that is $c$-competitive. It may be randomized and may
remember the entire arrival history, not only the current weight order and
accepted set. Define $p_{\sigma,A}$, $y_{\sigma,e,A}$, and $z_{\sigma,e,A}$
as the probabilities described in Section~\ref{sec:lp-states}. By
Lemma~\ref{lem:ordinal-invariance}, applied to $S_\sigma$ and to
$S_\sigma-e$, these are well defined: the transition probabilities are
determined by the same averaging, since the event that $e$ arrives at
position $|\sigma|$ and the subsequent decision are functions of the
arrival order within $S_\sigma$, the relative weight order on $S_\sigma$,
and the coins. The variables are nonnegative, and \eqref{eq:lp-root}
holds because nothing has been observed or accepted initially.
Conditional on any history through the first $|\sigma|-1$ arrivals, the
next label is uniform among the $n-|\sigma|+1$ unobserved labels, which
gives~\eqref{eq:lp-input}; the policy never accepts a dependent addition,
which gives~\eqref{eq:lp-feasible}; and partitioning the event of reaching
$(\sigma,A)$ according to the last arrival gives~\eqref{eq:lp-output}.
Under a weight order $\pi$, the probability that $e$ is accepted is the sum
of the probabilities of the disjoint events indexed by the observed set at
its arrival and the accepted set at that moment, which is $x^\pi_e$
in~\eqref{eq:lp-marginal}. Finally, $c$-competitiveness of an ordinal
policy is equivalent to the prefix inequalities
$\sum_{e\in H}\Pr(e\in A)\ge c\,r(H)$ for all weight orders and prefixes,
by~\eqref{eq:layer-cake} and the limiting argument following it. This
is~\eqref{eq:lp-competitive}.

\emph{From solutions to policies.}
Fix a feasible solution with objective value $c$. Define a policy as
follows. When a label $e$ arrives, let $\sigma$ be the observed weight
order including $e$, and let $A$ be the set accepted before this arrival.
Accept $e$ with probability
$y_{\sigma,e,A}/(y_{\sigma,e,A}+z_{\sigma,e,A})$, interpreted as zero when
the denominator is zero. This uses only the current weight order and the
accepted set together with the arriving label, and it is ordinal. By~\eqref{eq:lp-feasible}, $e$ is
accepted with positive probability only if $A+e$ is independent, so the
output is always independent.

We claim that under this policy, for every weight order $\pi$ and every
$S\subseteq E$, the probability of reaching the state $(\pi|_S,A)$ is
$p_{\pi|_S,A}$, and the probabilities of the accepting and rejecting
transitions into states with observed set $S$ are the corresponding
$y$ and $z$ variables. We argue by induction on $|S|$. For $S=\varnothing$
this is~\eqref{eq:lp-root}. Let $|S|=t\ge1$, put $\sigma=\pi|_S$, and fix
$e\in S$ and independent $A\subseteq S-e$. By induction, the probability of
reaching $(\sigma-e,A)$ is $p_{\sigma-e,A}$. Conditional on this event,
$e$ is the next arrival with probability $1/(n-t+1)$, so the probability
of reaching $(\sigma-e,A)$ and then seeing $e$ arrive is
$y_{\sigma,e,A}+z_{\sigma,e,A}$ by~\eqref{eq:lp-input}. Multiplying by
the acceptance probability shows that the accepting and rejecting
transitions have probabilities exactly $y_{\sigma,e,A}$ and
$z_{\sigma,e,A}$; when both variables vanish, so do both probabilities.
Summing the transitions into $(\sigma,A)$ over the last arrival
gives~\eqref{eq:lp-output}, so the probability of reaching $(\sigma,A)$
is $p_{\sigma,A}$. This proves the claim.

By the claim, $x^\pi_e$ is the probability that $e$ is accepted under
$\pi$, so~\eqref{eq:lp-competitive} states that
$\sum_{e\in H}\Pr(e\in A)\ge c\,r(H)$ for every prefix $H$ of every weight
order. By~\eqref{eq:layer-cake}, the policy is $c$-competitive.

\emph{Attainment.}
The always-rejecting policy yields a feasible solution with $c=0$, so the
program is feasible. If $e$ is a nonloop and $\pi$ places $e$ first, the
prefix $H=\{e\}$ has $r(H)=1$, and the claim above shows that $x^\pi_e$ is a
probability; hence $c\le1$ for every feasible solution. A feasible linear
program with bounded objective attains its optimum. The optimal solution
yields an optimal ordinal policy of the stated form.
\end{proof}

\subsection{Two objectives}\label{sec:lp-objectives}

Program~\eqref{eq:lp} optimizes the full weighted competitive ratio.
Our linear-matroid construction in Section~\ref{sec:linear} satisfies a
stronger, probability-competitive requirement: every element of the greedy
basis is selected with probability at least $c$. This requirement also has
a linear programming formulation, obtained by
replacing~\eqref{eq:lp-competitive} with the constraints $x^\pi_e\ge c$
for every weight order $\pi$ and every element $e$ of its greedy basis,
namely every $e$ whose addition increases the rank of the prefix of $\pi$
ending at $e$. Every feasible solution of the stronger program is feasible
for~\eqref{eq:lp}, but the two optima can differ.

For example, on the uniform matroid $U_{2,3}$ the full and per-element
optima are $3/4$ and $2/3$, respectively. To see this, symmetrize over
labels, and assume without loss that the policy accepts all remaining
elements whenever capacity permits doing so. Conditional on accepting the
first arrival, let $u,v$ be its time-two acceptance probabilities for a
record and a nonrecord. Enumerating the six weight orders gives
best-to-worst marginals $((4+u-v)/6,\,2/3,\,(4-u+v)/6)$. Conditional on
rejecting the first arrival, all three marginals are $2/3$. Thus the
expected number of top-two selections is at most $3/2$, attained at
$u=1,v=0$ after accepting the first arrival; that policy satisfies all
prefix constraints with ratio $3/4$. The second-best marginal is always
$2/3$, so the per-element optimum is $2/3$, attained by selecting two
fixed arrival positions.

\subsection{Symmetry and state reductions}\label{sec:lp-reductions}

The uncompressed program has a variable for every ordered subset and
independent accepted set, and the number of ordered subsets alone exceeds
$n!$. Three reductions make the program tractable for larger instances
without changing its value; they are what we used in our computations.

First, only rank-increasing prefixes need constraints
in~\eqref{eq:lp-competitive}: if a prefix $H+e$ has the same rank as $H$,
then nonnegativity of $x^\pi_e$ shows that the inequality for $H$ implies
the inequality for $H+e$.

Second, averaging any policy over the automorphisms of $M$ preserves its
worst-case guarantee, since an automorphism maps weight orders to weight
orders and preserves ranks of prefixes. Hence there is an optimal policy
that is invariant under automorphisms, and ordered subsets and weight
orders may be reduced to automorphism representatives, transporting labels
and accepted sets by the same permutation.

Third, for a fixed observed set $S$, accepted sets $A,A'$ can be merged
when $A\cup J$ and $A'\cup J$ are either both independent or both
dependent for every $J\subseteq E\setminus S$. Equivalently, the labeled
minors $(M/A)|(E\setminus S)$ and $(M/A')|(E\setminus S)$ coincide, so the
two states admit exactly the same future decisions. For example, after two
arrivals in $U_{2,3}$, the empty accepted set and an accepted singleton
both leave a free singleton on the last label. This equivalence is
preserved after either accepting or rejecting the next label: rejection
deletes that label from the remaining minor, and acceptance contracts it.
Past selections are already recorded in the acceptance flows, so averaging
the next decision over histories within a merged state preserves the
aggregate future flows and the objective. The conditional averaging depends
only on the observed order and the merged state, so it is common to every
weight order extending that observation.

These reductions decrease the program size without changing its value.
They do not produce a polynomial-size formulation, nor does the existence
of the program itself supply a uniform competitive bound over all matroids;
it is a tool for computing and testing rather than for proving the strong
conjecture.

\section{Ordinal secretary ratios under truncation}\label{app:truncation}

This appendix proves the two results about truncation stated in
Section~\ref{sec:lp}. Recall that $\rho(M)$ denotes the optimal ordinal
secretary ratio of a matroid $M$ of positive rank, that $M^{(k)}$ denotes
its rank-$k$ truncation, and that Conjecture~\ref{conj:truncation} asserts
that $\rho(M^{(k)})$ is nondecreasing in $k$.
Section~\ref{sec:truncation-uniform} proves the conjecture for uniform
matroids in the strict form of Theorem~\ref{thm:uniform-monotonicity}, and
Section~\ref{sec:truncation-graphic} refutes it in general with an explicit
graphic matroid (Theorem~\ref{thm:truncation-counterexample}), following
the method of the high-girth hardness theorem of Banihashem et
al.~\cite{graphic2025}. Since the upper bound in the counterexample allows
the algorithm to observe numerical weights, the conjecture also fails
without the ordinal restriction.

\subsection{Uniform matroids}\label{sec:truncation-uniform}

Let $r(n,k)$ denote the maximum expected fraction of the top $k$
elements selected in a uniform random arrival order by a policy that
observes only relative ranks and has capacity $k$.
Kleinberg's ordinal algorithm~\cite{kleinberg2005}
gives $r(n,k)\ge1-O(k^{-1/2})$. Chan, Chen, and Jiang~\cite{chan2015}
characterized asymptotically optimal thresholds for the more general
$(J,K)$-secretary problem, with separate selection and relevance parameters.
We first relate the top-$k$ objective
to the full weighted ordinal secretary problem, then compare the exact
optima at successive capacities. The comparison expresses $k\,r(n,k)$ as
the value of an optimal stopping problem with $k$ slots and independent
rewards, namely the posterior probabilities that each arrival is among the
top $k$ given its relative rank. Passing from $k$ to $k-1$ both lowers these
rewards and removes a slot; Lemmas~\ref{lem:capacity-thinning}
and~\ref{lem:adjacent-posteriors} bound the two effects separately.

The following equivalence between the top-$k$ objective and the weighted
ordinal ratio is standard: Buchbinder, Jain, and
Singh~\cite{buchbinder2014} observe that a monotone policy
for the $(k,k)$-secretary problem transfers its ratio to the weighted
problem, and that an optimal policy can be taken monotone. We include a
self-contained proof; the strict finite-$n$ monotonicity in
Theorem~\ref{thm:uniform-monotonicity} is the new claim.

\begin{lemma}\label{lem:uniform-ratio}
For the uniform matroid $U_{k,n}$, $\rho(U_{k,n})=r(n,k)$.
\end{lemma}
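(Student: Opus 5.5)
We need to show $\rho(U_{k,n})\ge r(n,k)$ and $\rho(U_{k,n})\le r(n,k)$. Here $r(n,k)$ is defined as the maximum, over capacity-$k$ ordinal policies, of $\E|A\cap H_k|/k$, where $H_k$ is the set of the top $k$ elements. For $U_{k,n}$ the rank of a prefix $H_j$ is $\min(j,k)$. By \eqref{eq:layer-cake} and the converse limiting argument of Section~\ref{sec:preliminaries}, $\rho(U_{k,n})$ is the largest $c$ for which some ordinal policy satisfies $\E|A\cap H_j|\ge c\min(j,k)$ for every weight order and every $j$. A random arrival order composed with an ordinal policy makes the law of $A$ depend only on ranks. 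After symmetrizing over label permutations, which preserves all the prefix constraints, the quantity $\E|A\cap H_j|$ depends only on $j$. Write $m_j$ for the probability that the $j$-th best element is selected.

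\textbf{Upper bound.} Any $c$-competitive policy satisfies the constraint at $j=k$, namely $\E|A\cap H_k|\ge ck$. Hence $c\le r(n,k)$, and so $\rho(U_{k,n})\le r(n,k)$.

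\textbf{Lower bound.} This is the main step: we need a policy optimal for the top-$k$ objective that also satisfies the constraints at every $j<k$. For $j\ge k$ the constraint is implied by the one at $j=k$, since $\E|A\cap H_j|$ is nondecreasing in $j$ while $\min(j,k)=k$. For $j<k$ it suffices to show that $m_1\ge m_2\ge\cdots\ge m_k$ holds for some optimal policy. Then the average of the first $j$ values $m_i$ is at least the average of the first $k$, which is $r(n,k)$.

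To obtain this monotonicity, I will use the monotone-policy fact credited to Buchbinder--Jain--Singh, proved directly as follows. Take an optimal policy, which exists by compactness or by the LP of Theorem~\ref{thm:lp-exact} applied to the top-$k$ objective. Then apply an exchange argument. Suppose that at some state the policy accepts an arrival of relative rank $i$ with higher probability than an arrival of a better relative rank $i'<i$, at the same time and capacity. Swap these conditional decisions and average over the exchange. Conditional on relative ranks, the future does not depend on which of the two was actually seen beyond its relative rank. A better relative rank has a higher posterior probability of being in the top $k$, and also higher posterior probabilities of being among the top $j$ for every $j$, so the swap does not decrease the objective. After the swap, the policy accepts relative ranks monotonically, possibly with randomization at the threshold.

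For such threshold-type policies, I then show that $m_i$ is nonincreasing in $i$ by a coupling. Exchanging the labels of the $i$-th and $(i+1)$-th best elements in a uniformly random order gives another uniform order. On every arrival sequence, the better element has relative rank at arrival at most that of the worse one, and the two sequences of relative ranks agree elsewhere. Hence acceptance of the better element is at least as likely.

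The delicate point is checking that the swap coupling does not alter the capacity state at the later arrival. I would handle this by noting that the two orders differ only in the relative rank of one element, and only when the two elements are compared directly. A policy that depends only on its capacity and on relative ranks against thresholds, which suffices for uniform matroids because the merging reduction of Appendix~\ref{sec:lp-reductions} shows states depend only on $|A|$, is then monotone in this relative rank.
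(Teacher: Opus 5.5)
Your proposal is correct and follows essentially the paper's proof. You symmetrize over labels, take a top-$k$-optimal policy whose acceptance probability is nonincreasing in the current relative rank at each fixed preceding history, couple the orders exchanging global ranks $j$ and $j+1$ (the histories agree until the later arrival of the pair, whose relative rank is one better when it is rank $j$) to get $m_1\ge\cdots\ge m_k$, average over prefixes, and bound from above via the prefix of size $k$. The differences are minor: you obtain monotonicity by sorting the acceptance probabilities of an arbitrary optimal policy (valid if the continuation after rank $i'$ is the one formerly used after $i$, since future relative ranks are independent of $R_t$), whereas the paper reads it off the Bellman threshold on $p_k(t,R_t)$; and your closing appeal to the merging reduction is unnecessary, since identical histories before the later arrival already fix the capacity.
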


\begin{proof}
Symmetrizing over label permutations does not decrease a uniform
matroid policy's worst-case guarantee, so labels can be ignored.
Let $R_t$ be the relative rank of arrival $t$, with rank one best.
The variables $R_t$ are independent and uniform on $[t]$, since
relative-rank sequences are in bijection with permutations. The set
of global ranks observed by time $t$ is a uniform $t$-subset,
independent of its relative arrival order. Conditional
on the entire ordinal history through $t$, the posterior probability
that the arrival is in the top $k$ is
\begin{equation}\label{eq:posterior}
 p_k(t,R_t),\qquad
 p_k(t,s)=\Pr\{H_{k,t}\ge s\},\quad
 H_{k,t}\sim\operatorname{Hypergeometric}(n,k,t).
\end{equation}
Thus an optimal policy can use a Bellman rule depending only on time,
remaining capacity, and the current posterior. Since $p_k(t,s)$ is
nonincreasing in $s$, the rule is nonincreasing in the current relative
rank at each fixed preceding history.

For any label-blind policy with this property, the probabilities of
selecting global ranks $1,2,\ldots,n$ are nonincreasing. Pair a
permutation with the permutation exchanging ranks $j$ and $j+1$,
using the same random seed. Their histories and decisions agree
until the later arrival of the pair. The earlier arrival receives
the same decision, and the later arrival is at least as likely to
be selected when it is rank $j$. Averaging proves the assertion.

Apply this to the optimal top-$k$ policy. Every prefix of size
$m\le k$ has average capture probability at least that of the top
$k$. Every larger prefix has expected capture at least the top-$k$
score and has rank $k$. Hence all prefix-rank guarantees are at
least $r(n,k)$, implying $\rho(U_{k,n})\ge r(n,k)$. The reverse
inequality follows by taking weights arbitrarily close to one on
the top $k$ and arbitrarily close to zero outside, preserving the
strict order. Any ordinal weighted guarantee must therefore also
hold for the top-$k$ objective.
\end{proof}

Albers and Ladewig~\cite{albersladewig2021} observed, in numerical
computations for small $k$, that the competitive ratio of a particular
threshold algorithm increases with $k$. The following theorem compares the
optimal ratios themselves, for every finite $n$.

\begin{theorem}\label{thm:uniform-monotonicity}
For $2\le k\le n$, $r(n,k)>r(n,k-1)$. In particular, the optimal
ordinal secretary ratio is strictly increasing through successive
uniform matroid ranks.
\end{theorem}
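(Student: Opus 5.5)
The proof will follow the route indicated before Lemma~\ref{lem:uniform-ratio}. By that lemma it suffices to compare the top-$k$ optima, and I will write both as optimal stopping values. Let $V_k(\phi)$ denote the optimal expected total reward of a policy with $k$ slots facing the independent relative-rank process $R_1,\dots,R_n$, where arrival $t$ carries reward $\phi(t,R_t)$ if accepted. Conditioning on the ordinal history and using~\eqref{eq:posterior} gives $k\,r(n,k)=V_k(p_k)$. So the claim is
\[
 \frac{V_k(p_k)}{k}>\frac{V_{k-1}(p_{k-1})}{k-1}.
\]
I will split the passage from $k-1$ to $k$ into two effects: the change of reward from $p_{k-1}$ to $p_k$, and the change of capacity from $k-1$ to $k$. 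Each will get its own lemma, and the two will then be multiplied.

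\emph{Adjacent posteriors.} The pointwise inequality $p_k/k\ge p_{k-1}/(k-1)$ fails. At $t=n$, for instance, it reads $1/k$ against $1/(k-1)$ for relative ranks $s\le k-1$. So the posterior comparison has to be organised by the state where the failure occurs. Write $p_k(t,s)=p_{k-1}(t,s)+\Pr\{H_{k-1,t}=s-1,\ \text{global rank }k\text{ is observed}\}$. The lemma I aim for is that $p_k(t,s)-p_{k-1}(t,s)\ge p_{k-1}(t,s)/(k-1)$ for every $s$ at which the posterior is not already forced to one. I will get this from the log-concavity and the ratio $\Pr\{H_{k,t}=j\}/\Pr\{H_{k-1,t}=j\}$ of adjacent hypergeometrics. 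The deficit is then confined to arrivals whose reward is already (nearly) certain.

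\emph{Capacity thinning.} Take an optimal $(k-1)$-slot policy $\pi$ for $p_{k-1}$; as in the proof of Lemma~\ref{lem:uniform-ratio}, it can be taken monotone in the relative rank. I build a $k$-slot policy $\pi'$ from it. Run $\pi$ with a uniformly random one of its $k-1$ slots \emph{duplicated}. Whenever $\pi$ would accept into that slot while its twin is still free, $\pi'$ may also accept one later arrival that $\pi$ rejects, namely the next arrival with relative rank at most $k$. Averaging over the duplicated slot, the mass gained on saturated states (where the adjacent-posterior bound fails) is at least $1/(k-1)$ of the reward $\pi$ collects there. On the remaining states the first lemma already gives the factor $k/(k-1)$. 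Adding the two contributions yields $V_k(p_k)\ge\frac{k}{k-1}V_{k-1}(p_{k-1})$.

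\emph{Strictness.} Strictness will come from one explicit state of positive probability where the extra slot gains strictly more than needed. Take the state at time $n$ with both slots free after a nonrecord: $\pi$ loses something there, while $\pi'$ gains with positive probability because $n\ge k$. The main obstacle is making the two lemmas fit together: the adjacent-posterior deficit must be charged exactly to states where the duplicated slot is usable. If the duplication coupling does not line up, the fallback is a direct Bellman induction backward in $t$. In that induction I would prove the stronger statement that the value-to-go with $j$ remaining slots under $p_k$ is at least $\frac{k}{k-1}$ times the value-to-go with $j-1$ remaining slots under $p_{k-1}$, using the hypergeometric identities at each step.
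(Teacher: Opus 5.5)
Your top-level plan is the paper's: write $k\,r(n,k)=V_k(p_k)$, then separate the change of rewards from the loss of a slot. But the reward lemma you build on is false, and it fails at records, not at saturated states. Turning one of the $k$ top elements into a non-top element gives $p_{k-1}(t,s)=\frac{k-s}{k}p_k(t,s)+\frac sk\,p_k(t,s+1)$. So your inequality $p_k(t,s)\ge\frac{k}{k-1}p_{k-1}(t,s)$ is equivalent to $s\,p_k(t,s+1)\le(s-1)\,p_k(t,s)$. At $s=1$ this would require $p_k(t,2)=0$. Hence it fails strictly for every record arriving at any time $t\ge2$, however small its posterior. For example, $n=10$, $k=2$, $t=2$, $s=1$ gives $p_2=34/90<36/90=2p_1$. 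A good $(k-1)$-slot policy accepts records with substantial probability (for $k=2$ it accepts only records), so the deficit is spread over the bulk of $\pi$'s acceptances. A duplicated-slot charge designed only for ``saturated'' states cannot absorb it. As described, that construction is also too imprecise to check: what it means to accept ``into'' a slot, and why the extra arrival recovers a $1/(k-1)$ fraction of $\pi$'s reward. The Bellman fallback does not repair this. Write $v^k_j$ for the value-to-go with $j$ slots under $p_k$. With only the hypothesis $v^k_j\ge\frac{k}{k-1}v^{k-1}_{j-1}$, even after a correct reward comparison the natural step needs $v^k_{j-1}\ge v^{k-1}_{j-2}+\frac1{k-1}v^{k-1}_{j-1}$, which the hypothesis does not supply.

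The missing idea is to compare rewards in distribution rather than pointwise, and to let the $(k-1)$-slot policy re-optimize. The mixture identity above plus Jensen gives $\E\phi(X^{k-1}_t)\le\frac{k-1}{k}\E\phi(X^k_t)$ for every convex $\phi$ with $\phi(0)=0$. Thus $X^{k-1}_t$ is dominated in every $\E(\cdot-z)_+$ by the thinning $B_tX^k_t$, with independent Bernoulli-$\frac{k-1}{k}$ bits $B_t$. Each Bellman step $v(b)+\E(X_t-[v(b)-v(b-1)])_+$ is monotone in this order, so $V_{k-1}(X^{k-1})\le V_{k-1}((B_tX^k_t)_t)$. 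The capacity step then runs opposite to your duplication. Color each time uniformly from $[k]$, run an optimal $(k-1)$-slot policy on each $(k-1)$-set of colors (each sees an exact thinning), average these into a fractional $k$-slot policy, and round online with one uniform variable. This gives $V_{k-1}((B_tX_t)_t)\le\frac{k-1}{k}V_k(X)$. Strictness comes from the all-colors-equal event, of probability $k^{1-n}$, combined with $V_k(X^k)-V_{k-1}(X^k)\ge\binom nk^{-1}$. It does not come from a single state at time $n$.
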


\emph{Plan of the proof.}
By the posterior argument in the proof of Lemma~\ref{lem:uniform-ratio},
the top-$k$ problem is an optimal stopping problem with $k$ slots: at time
$t$ the policy sees the relative rank $R_t$, and accepting the arrival earns
expected reward $p_k(t,R_t)$, its posterior probability of being among the
top $k$. For independent nonnegative finite-support rewards
$X=(X_1,\ldots,X_n)$, let $V_b(X)$ be the optimal expected total reward
from at most $b$ sequential selections, and define $X_t^k=p_k(t,R_t)$.
Then
\begin{equation}\label{eq:uniform-value}
 k\,r(n,k)=V_k(X^k),
\end{equation}
since the expected number of top-$k$ elements captured is the expected sum
of the posteriors of the accepted arrivals. The rewards $X^k_t$ are
independent because the relative ranks $R_t$ are independent, and past
ranks carry no information about future rewards beyond what the rewards
themselves encode.

Passing from $k$ to $k-1$ changes two things: the reward sequence changes
from $X^k$ to $X^{k-1}$, and one slot is lost. We bound the two effects
separately. Lemma~\ref{lem:adjacent-posteriors} shows that the rewards
$X^{k-1}$ are dominated, in the sense relevant for optimal stopping, by the
rewards obtained from $X^k$ by independently retaining each with probability
$(k-1)/k$ and zeroing the rest. Lemma~\ref{lem:capacity-thinning} shows that
with such thinned rewards, $k-1$ slots earn at most a $(k-1)/k$ fraction of
what $k$ slots earn on the unthinned rewards. Combining the two gives
$(k-1)\,r(n,k-1)\le\frac{k-1}{k}\,k\,r(n,k)$, that is,
$r(n,k-1)\le r(n,k)$, and a small explicit slack in the second lemma makes
the inequality strict.

\begin{lemma}[Thinning and capacity]\label{lem:capacity-thinning}
Let $1\le m<K$, and let $B_t$ be independent Bernoulli-$(m/K)$
bits independent of $X$. Then
$V_m((B_tX_t)_t)\le(m/K)V_K(X)$. The inequality is strict if
$V_K(X)>V_m(X)$.
\end{lemma}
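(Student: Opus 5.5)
I would prove the inequality by backward induction over time on the Bellman recursions of the two stopping problems, with an inductive hypothesis that allows fractional capacities. Write $p=m/K$. For $t\le n+1$ and integer $b\ge0$, let $V_b(t)$ be the optimal expected reward from arrivals $t,\ldots,n$ with $b$ slots on the rewards $X$, and let $W_b(t)$ be the same quantity for the thinned rewards $Y_t=B_tX_t$. Both satisfy
\[
 V_b(t)=\E\max\{V_b(t+1),\,X_t+V_{b-1}(t+1)\},\qquad
 W_b(t)=(1-p)W_b(t+1)+p\,\E\max\{W_b(t+1),\,X_t+W_{b-1}(t+1)\},
\]
with $V_0\equiv W_0\equiv0$ and zero values at $t=n+1$. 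Capacity enters only through the increments $\Delta V_b(t)=V_b(t)-V_{b-1}(t)$, so the claim $W_m(1)\le pV_K(1)$ should follow from comparing the increments of $W$ with averaged increments of $V$.

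The first step is the standard fact that, for independent rewards, $b\mapsto V_b(t)$ is concave: $\Delta V_b(t)$ is nonincreasing in $b$. I would prove this by induction on $t$ using the representation
\[
 V_b(t)=V_b(t+1)+\E\bigl(X_t-\Delta V_b(t+1)\bigr)_+ .
\]
From it, $\Delta V_b(t)$ is a monotone combination of the increments at time $t+1$, in the usual way.

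Let $\bar V_x(t)$ be the piecewise-linear interpolation of $V_b(t)$ in real $x\ge0$; by concavity it is a concave function of $x$. The inductive hypothesis is
\[
 W_b(t)\le p\,\bar V_{b/p}(t)\qquad\text{for every integer } b\ge0,
\]
applied at $b=m$, where $b/p=K$ is an integer. A cleaner equivalent form is that each thinned increment $\Delta W_b(t)$ is dominated by $\frac1p$ times the average of $\Delta V$ over the capacity window $\bigl((b-1)/p,\,b/p\bigr]$, that is, by a convex combination of roughly $1/p$ consecutive $V$-increments. For the step, I would plug the hypothesis at $t+1$ into the $W$-recursion and use that $y\mapsto\E(X_t-y)_+$ is convex and nonincreasing. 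This gives a Jensen-type inequality: $p\,\E(X_t-\text{average of the }V\text{-increments})_+$ is at most the sum over that window of the per-increment terms $\E(X_t-\Delta V_j(t+1))_+$, which is exactly what the $V$-recursion accumulates across the $1/p$ slots.

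The main obstacle is this inductive step with non-integer windows, when $1/p=K/m$ is not an integer. Fractional endpoints of the window must be handled by the linear interpolation, and one must check that interpolation commutes correctly with the $\max$ in the recursion. My first attempt would be to state everything through the $\Delta$ form: treat the window as a probability measure on slots, so that convexity of $(X_t-\cdot)_+$ applies directly. If that fails, I would fall back on a coupling. One would build a $K$-slot policy for $X$ from an optimal $m$-slot policy for $Y$ by running $K/m$ "copies" with shared randomness, which is natural when $m\mid K$. The general case would then follow via the interpolation.

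For strictness, I would track where Jensen is strict. The inequality is strict as soon as, at some time with positive probability, the optimal $K$-slot policy accepts with the window's increments not all equal. If instead every step is tight, the increments $\Delta V_j(1)$ for $m<j\le K$ all vanish, which means $V_K(X)=V_m(X)$. This gives the stated contrapositive.
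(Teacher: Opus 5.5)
Your hypothesis $W_b(t)\le p\,\bar V_{b/p}(t)$ is true, and your Bellman route can be made to work. But the proposal never carries out the inductive step, and two things it says about that step are wrong.

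First, the increment form is not equivalent to the hypothesis; it is false. For $n=2$, $X=(1,1)$, $p=1/2$ one has $\Delta W_2(1)=1-\tfrac34=\tfrac14$, while $\Delta V_3(1)=\Delta V_4(1)=0$.

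Second, the $V$-recursion does not accumulate a sum over the window. It gives $V_j(t)=V_j(t+1)+g_j$ with $g_j=\E(X_t-\Delta V_j(t+1))_+$. So the target $p\,\bar V_{b/p}$ grows from $t+1$ to $t$ by the single interpolated term $p\,\bar g(b/p)$ at the right endpoint.

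The chain that actually closes is as follows. Write $W_b(t)=F(W_b(t+1),W_{b-1}(t+1))$ with $F(u,v)=u+p\,\E(X_t-u+v)_+$. This $F$ is nondecreasing in $v$, and also in $u$, since its slope in $u$ is at least $1-p\ge0$; monotonicity of $(X_t-\cdot)_+$ alone points the wrong way for $u$. Plug in both upper bounds. Apply Jensen to the uniform measure on the window $((b-1)/p,b/p]$ to bound $\E(X_t-\bar d)_+$ by the window average of $g$. Finally, bound that average by $\bar g(b/p)$, using that $g_j$ is nondecreasing in $j$ (this is where concavity of $b\mapsto V_b$ enters) and that the window has length $1/p\ge1$. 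This last comparison is the missing idea; with it, fractional windows cause no trouble and no fallback is needed. Your fallback with $K/m$ copies only covers $m\mid K$. The paper handles every $m<K$ by coloring times uniformly from $[K]$ and running the optimal $m$-slot policy on each of the $\binom Km$ thinnings indexed by $m$-sets of colors. Averaging gives marginals $a_t\le1$ with $\sum_ta_t\le K$, which it rounds online with a single uniform variable.

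The strictness part is not a proof, and its criterion (strictness in Jensen) looks in the wrong place. Take $X=(1,1,1)$, $m=2$, $K=4$. Then $V_4(1)=3>2=V_2(1)$, and indeed $W_2(1)=\tfrac{11}8<\tfrac32$. Yet in the $b=2$ recursion the window increments $\Delta V_3(t+1)$ and $\Delta V_4(t+1)$ vanish at every $t$. The plug-in in $u$, Jensen, and the endpoint comparison are all tight there. The slack arises in the $b=1$ recursion at $t=2$, from the endpoint comparison ($g_1=0<g_2=1$) rather than from Jensen. It reaches the $b=2$ bound only through the $v$-argument of $F$, since $W_1(2)=\tfrac34<1=\tfrac12\bar V_2(2)$. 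So your claim that tightness everywhere forces $\Delta V_j(1)=0$ for $m<j\le K$ would need a tightness analysis across all lower-capacity recursions, which you do not supply and which is not routine. The paper gets strictness directly from its construction. With probability $\delta=K^{1-n}$ all colors coincide, and the combined rule is then an $m$-slot policy, which gives $(K/m)V_m((B_tX_t)_t)\le(1-\delta)V_K(X)+\delta V_m(X)$.
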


\begin{proof}
The idea is to build a $K$-slot policy on $X$ out of many $m$-slot
policies on thinned copies of $X$, and to compare values. Color each time
independently and uniformly from $[K]$; then for each $m$-element color set
$J$, the rewards $X_t\ind\{C_t\in J\}$ are exactly an $(m/K)$-thinning of
$X$. For each such $J$, run the same deterministic optimal
$m$-selection policy on $X_t\ind\{C_t\in J\}$. Choose this policy
to reject zero rewards, and let $I_t^J$ be its selections. Put
$a_t=\binom{K-1}{m-1}^{-1}\sum_J I_t^J$. Only color sets containing
$C_t$ can select at time $t$, so $0\le a_t\le1$. Each copy uses at
most $m$ slots, giving $\sum_t a_t\le K$ on every realization.

Round these decisions online with one independent uniform
$U\in[0,1)$. If $Q_t=\sum_{i\le t}a_i$, accept at time $t$ when
$\lfloor Q_t+U\rfloor-\lfloor Q_{t-1}+U\rfloor=1$. This uses at
most $K$ slots. Conditional on all rewards and colors, its marginal
at time $t$ is $a_t$, so its expected reward is
$(K/m)V_m((B_tX_t)_t)$. Optimality proves the weak inequality.

With probability $\delta=K^{1-n}$, all colors coincide. On this
event the active copies see the same full reward sequence and make
identical decisions; the fractional rule is integral and uses at
most $m$ slots. For every fixed color sequence the rounded rule is
a feasible online policy on $X$, since the colors are independent
of $X$. Therefore the construction gives the stronger bound
\begin{equation}\label{eq:strict-thinning-paper}
 \frac Km V_m((B_tX_t)_t)
 \le(1-\delta)V_K(X)+\delta V_m(X),
\end{equation}
which proves strictness under the stated condition.
\end{proof}

\begin{lemma}[Adjacent posteriors]\label{lem:adjacent-posteriors}
For every convex function $\phi$ with $\phi(0)=0$,
$\E\phi(X_t^{k-1})\le\frac{k-1}{k}\E\phi(X_t^k)$.
\end{lemma}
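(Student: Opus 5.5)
The plan is to prove the stronger statement that $X_t^{k-1}$ is dominated in the convex order by the thinned variable $Y_t=B\,X_t^k$, where $B$ is an independent Bernoulli-$((k-1)/k)$ bit. Since $\E\phi(Y_t)=\frac{k-1}{k}\E\phi(X_t^k)+\frac1k\phi(0)$ and $\phi(0)=0$, this is exactly the claimed inequality.

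\emph{Step 1 (equal means).} Both variables are functions of the single uniform relative rank $R_t\in[t]$. By~\eqref{eq:posterior}, $\E X_t^k$ is the unconditional probability that the $t$-th arrival is among the top $k$, which is $k/n$. Likewise $\E X_t^{k-1}=(k-1)/n=\E Y_t$. With equal means, it suffices to compare stop-loss transforms, $\E(X_t^{k-1}-c)_+\le\E(Y_t-c)_+$ for all $c\ge 0$. Every convex $\phi$ on $[0,1]$ with $\phi(0)=0$ is a linear term plus a nonnegative mixture of such hinge functions and of $(c-x)_+$. The terms $(c-x)_+$ reduce to hinges via $(c-x)_+=(x-c)_+-(x-c)$ and equal means.

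\emph{Step 2 (a single-crossing criterion).} By the Karlin--Novikoff cut criterion, equal means together with a single sign change of $F_{X^{k-1}}-F_{Y}$ (from $\ge0$ to $\le0$, or equivalently in the opposite direction for the survival functions) gives the stop-loss order. Write $f(s)=p_{k-1}(t,s)$ and $g(s)=p_k(t,s)$. Both are nonincreasing in $s$, and $f\le g$. To compare them, couple the hypergeometrics by $H_{k,t}=H_{k-1,t}+\xi$, where $\xi$ indicates that the $k$-th best element is observed. This gives
\[
 g(s)-f(s)=\Pr(H_{k-1,t}=s-1,\ \xi=1).
\]
I would then show that the ratio $h(s)=f(s)/g(s)$ is nonincreasing in $s$, using the log-concavity (total positivity) of the hypergeometric family in $(k,s)$. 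From $f=hg$ with $h$ decreasing and $\E f=\frac{k-1}{k}\E g$, the large values of $X^{k-1}$ are relatively larger copies of the large values of $X^k$ than the thinning factor gives. I expect this to yield, for a threshold $x$, that $\Pr(X^{k-1}>x)-\frac{k-1}{k}\Pr(X^k>x)$ changes sign at most once as $x$ increases.

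\emph{Main obstacle.} The difficulty is turning the monotone-ratio fact into the correct single crossing of the distribution functions. The variables $X^{k-1}$ and $Y$ live on different value sets: $f(s)$ and $g(s)$ for the same $s$, plus an atom of $Y$ at $0$ of mass at least $1/k$. So the crossing must be checked in value space, not in $s$. My first attempt is to avoid this altogether and prove the stop-loss inequality directly in $s$-space. Fix $c$, let $s_1=\max\{s:f(s)>c\}$ and $s_2=\max\{s:g(s)>c\}$, so that $s_2\ge s_1$. Then
\[
 \E(X^{k-1}-c)_+=\frac1t\sum_{s\le s_1}\bigl(f(s)-c\bigr)
 \le\frac1t\sum_{s\le s_1}\Bigl(\frac{k-1}{k}g(s)-\frac{k-1}{k}c\Bigr)+\cdots .
\]
Here I would use that the partial averages of $f$ over initial segments dominate those of $\frac{k-1}{k}g$ only in the manner controlled by the decreasing ratio $h$. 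If this bookkeeping fails, the fallback is an explicit martingale coupling, built via the $\xi$-decomposition, in which $X^{k-1}=\E[Y\mid X^{k-1}]$.
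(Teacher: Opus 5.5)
There is a genuine gap. Your target is right: $X_t^{k-1}$ should be dominated in convex order by $BX_t^k$, and the equal means $(k-1)/n$ are correct. But no step of the proposal establishes the inequality, and the route you commit to in Step~2 fails for $k\ge3$.

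\textbf{Why the single-crossing route fails.} Your own coupling, plus one exchangeability fact, shows the problem. Given $H_{k,t}=s$, the observed members of the top $k$ form a uniform $s$-subset, so the $k$-th best is observed with probability $s/k$. Hence $g(s)-f(s)=\Pr(H_{k-1,t}=s-1,\ \xi=1)=\Pr(H_{k,t}=s,\ \xi=1)=\frac sk\bigl(g(s)-g(s+1)\bigr)$, that is,
\[
 f(s)=\frac{k-s}{k}\,g(s)+\frac sk\,g(s+1),\qquad 1\le s\le k-1 .
\]
Thus $g(s+1)\le f(s)\le g(s)$: the supports of the two variables interlace. For $k\le t\le n-k$ these inequalities are strict. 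For each $1\le j\le k-1$, the difference $\Pr(X_t^{k-1}>x)-\Pr(BX_t^k>x)$ equals
\begin{itemize}
\item $\frac jt-\frac{k-1}{k}\cdot\frac jt>0$ on $(g(j+1),f(j))$, and
\item $\frac{j-1}{t}-\frac{k-1}{k}\cdot\frac jt<0$ on $(f(j),g(j))$.
\end{itemize}
So the difference changes sign $2k-3$ times. For example, with $n=10$, $t=5$, $k=3$, the positive support points of the two variables are $21<56<126<196<231$ in units of $1/252$, and the sign pattern is $+,-,+,-$.

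So the single crossing you expect does not hold. The Karlin--Novikoff cut criterion is therefore unavailable, however the (true) monotonicity of $h$ is used. Your $s$-space alternative is left as ``$\cdots$''. Its termwise comparison already fails at $s=1$ when $c<g(2)$, since $f(1)-\frac{k-1}{k}g(1)=g(2)/k$. The fallback martingale coupling is only named, not built.

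\textbf{The missing idea.} Use the displayed identity directly; this is the paper's proof. The paper obtains the identity by turning a uniformly chosen one of the $k$ successes into a failure, which is equivalent to your $\xi$ by exchangeability. Jensen gives
\[
 \phi(f(s))\le\frac{k-s}{k}\phi(g(s))+\frac sk\phi(g(s+1)).
\]
Sum over $1\le s\le k-1$. Each $\phi(g(j))$ with $1\le j\le k$ collects weight $\frac{k-j}{k}$ from $s=j$ and $\frac{j-1}{k}$ from $s=j-1$, which is exactly $\frac{k-1}{k}$. All terms with $s\ge k$ or $s>t$ vanish because $\phi(0)=0$. Dividing by $t$ proves the lemma.

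This is precisely the martingale coupling you wanted. Given $R_t=s\le k-1$, let $Y=g(s)$ with probability $(k-s)/k$ and $Y=g(s+1)$ otherwise, and let $Y=0$ when $R_t\ge k$. Then $\E[Y\mid R_t]=X_t^{k-1}$ and $Y\dist BX_t^k$. Because Jensen applies to an arbitrary convex $\phi$, the hinge reduction of Step~1 also becomes unnecessary.
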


\begin{proof}
We couple the two posteriors by a single change to the hypergeometric
population: change a uniformly chosen one of the $k$ successes to a
failure. This turns a population with $k$ successes into one with $k-1$,
so the sample count for $k-1$ is obtained from the sample count for $k$
by possibly removing one success. A sample with exactly $s$
successes drops below $s$ with probability $s/k$; no other count
changes the tail event. Thus, suppressing $t$,
$p_{k-1}(s)=\frac{k-s}{k}p_k(s)+\frac{s}{k}p_k(s+1)$ for
$1\le s\le k-1$. Jensen's inequality and summation give
\[
 \sum_{s=1}^{k-1}\phi(p_{k-1}(s))
 \le\frac1k\sum_{s=1}^{k-1}
       \bigl[(k-s)\phi(p_k(s))+s\phi(p_k(s+1))\bigr]
 =\frac{k-1}{k}\sum_{s=1}^k\phi(p_k(s)).
\]
Terms beyond $t$ vanish. Dividing by $t$ proves the claim.
\end{proof}

\begin{proof}[Proof of Theorem~\ref{thm:uniform-monotonicity}]
If independent rewards $Z,W$ satisfy
$\E(Z_t-z)_+\le\E(W_t-z)_+$ for every $z\ge0$, then
$V_b(Z)\le V_b(W)$. The Bellman update
$\E\max\{v(b),Z_t+v(b-1)\}=v(b)+\E(Z_t-[v(b)-v(b-1)])_+$
is increasing in both continuation values and has a nonnegative threshold,
so backward induction applies.

Put $m=k-1$, and thin $X^k$ with independent Bernoulli-$(m/k)$
bits. Lemma~\ref{lem:adjacent-posteriors}, with
$\phi(x)=(x-z)_+$, gives the preceding comparison between $X^m$
and $(B_tX_t^k)_t$. Using~\eqref{eq:uniform-value} and
Lemma~\ref{lem:capacity-thinning},
\[
 m r(n,m)=V_m(X^m)
 \le V_m((B_tX_t^k)_t)
 \le\frac mk V_k(X^k)=m r(n,k).
\]
For strictness, the first $k$ rewards of $X^k$ are all at least
$\eta=\binom nk^{-1}>0$: indeed
$p_k(t,s)\ge p_k(t,t)=\binom{n-t}{k-t}/\binom nk\ge\eta$ for
$s\le t\le k$. Simulate an optimal $m$-selection policy and use
one extra slot on its first rejection among these $k$ times. This
improves reward by at least $\eta$ on every realization, so
$V_k(X^k)-V_m(X^k)\ge\eta$. Bound~\eqref{eq:strict-thinning-paper}
makes the displayed comparison strict. It also yields
$r(n,k)-r(n,k-1)\ge[k^n\binom nk]^{-1}$.
\end{proof}

\begin{corollary}\label{cor:uniform-strong}
For every $n\ge2$ and $1\le k\le n$, $\rho(U_{k,n})\ge\rho(U_{1,n})\ge1/e$.
Thus the strong matroid secretary conjecture holds for every uniform
matroid.
\end{corollary}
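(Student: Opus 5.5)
The plan is to chain Theorem~\ref{thm:uniform-monotonicity} across ranks and then use Lemma~\ref{lem:uniform-ratio} to translate between $r(n,\cdot)$ and $\rho(U_{\cdot,n})$.

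First, for $1\le k\le n$, Lemma~\ref{lem:uniform-ratio} gives $\rho(U_{k,n})=r(n,k)$. Applying Theorem~\ref{thm:uniform-monotonicity} successively at $k,k-1,\ldots,2$ yields
\[
r(n,k)\ge r(n,k-1)\ge\cdots\ge r(n,1).
\]
The case $k=1$ is trivial. This gives $\rho(U_{k,n})\ge\rho(U_{1,n})$.

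Second, I need to show $\rho(U_{1,n})\ge1/e$ for every $n\ge2$, which is a finite-$n$ statement. I would avoid the asymptotic classical result and instead use the finite-$n$ cutoff computation from Section~\ref{sec:linear-cutoff}. The classical cutoff rule with cutoff $s$ selects the heaviest element with probability exactly $c_n(s)$. That section shows that for $n=2$ the choice $s=1$ gives $1/2$, and for $n\ge3$ one of $s=\lfloor n/e\rfloor$ or $s=\lfloor n/e\rfloor+1$ gives $c_n(s)\ge1/e$.

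Both cutoffs must be admissible, i.e.\ $1\le s<n$. For $n\ge3$ we have $\lfloor n/e\rfloor\ge1$, and $\lfloor n/e\rfloor+1\le n/e+1<n$. The cutoff rule is ordinal, since it only detects records. In rank one its prefix guarantee is $\Pr(\text{top selected})\ge c$, because every nonempty prefix has rank one and contains the top element. By \eqref{eq:layer-cake} this gives the weighted guarantee, so $r(n,1)\ge c_n(s)\ge1/e$. Alternatively, one can invoke Theorem~\ref{thm:linear} directly, since $U_{1,n}$ is linear (all vectors equal on a line), and that theorem gives a $1/e$-competitive ordinal policy.

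The final sentence of the corollary follows because every matroid of positive rank that is uniform is some $U_{k,n}$ with $k\ge1$. The degenerate case $n=1$ needs a remark: there the rank-one matroid trivially has ratio $1$. The main point to check carefully is the interaction with loops, or rank zero, which is excluded by positive rank; nothing else is an obstacle.
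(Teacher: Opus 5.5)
Your proposal is correct and matches the paper's proof essentially step for step. Both combine Lemma~\ref{lem:uniform-ratio} with Theorem~\ref{thm:uniform-monotonicity}, chained down to rank one, to get $\rho(U_{k,n})=r(n,k)\ge r(n,1)=\rho(U_{1,n})$, and both then apply the finite-$n$ cutoff bound of Section~\ref{sec:linear-cutoff} to the classical rule; your additional checks (that both cutoffs lie in $[1,n)$, and the rank-one prefix-to-weighted step) are details the paper leaves implicit.
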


\begin{proof}
By Lemma~\ref{lem:uniform-ratio} and Theorem~\ref{thm:uniform-monotonicity},
$\rho(U_{k,n})=r(n,k)\ge r(n,1)=\rho(U_{1,n})$. The rank-one matroid
$U_{1,n}$ is the classical secretary problem on $n$ elements, for which the
ordinal cutoff rule with cutoff $s$ selects the heaviest element with
probability $c_n(s)$, and Section~\ref{sec:linear-cutoff} exhibits a cutoff
with $c_n(s)\ge1/e$ for every $n\ge2$.
\end{proof}

The conclusion of Corollary~\ref{cor:uniform-strong} is not new: the
virtual algorithm of Babaioff, Immorlica, Kempe, and
Kleinberg~\cite[Theorem~1]{babaioff2007knapsack} is ordinal and selects
every element of the top $k$ with the classical cutoff probability, which
gives $\rho(U_{k,n})\ge1/e$ directly. The corollary shows that the same
conclusion also follows from strict monotonicity alone, without any
algorithm beyond the rank-one cutoff rule.

\subsection{A graphic counterexample}\label{sec:truncation-graphic}

As explained at the start of this appendix, nonmonotonicity follows from
the high-girth hardness theorem of Banihashem et al.~\cite{graphic2025}
together with the uniform three-choice bound below, since a graph of girth
at least four has uniform rank-three truncation. Here we give a
self-contained example on $K_{2,N}$, following their method of embedding
independent single-secretary instances.

\begin{theorem}\label{thm:truncation-counterexample}
Let $N=10^{42}$ and let $M$ be the cycle matroid of $K_{2,N}$. Then
\[
 \rho(M^{(3)})>\frac{53}{100}>\rho(M).
\]
Since $M$ has rank $N+1$, some $4\le k\le N+1$ satisfies
$\rho(M^{(k)})<\rho(M^{(k-1)})$.
\end{theorem}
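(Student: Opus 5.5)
The proof has two independent halves: a lower bound on $\rho(M^{(3)})$ via uniform matroids, and an upper bound on $\rho(M)$ via embedded rank-one instances. The final sentence then follows from $\rho(M^{(N+1)})=\rho(M)<\rho(M^{(3)})$. If $\rho(M^{(k)})\ge\rho(M^{(k-1)})$ held for every $4\le k\le N+1$, chaining these inequalities would give $\rho(M)\ge\rho(M^{(3)})$, a contradiction.

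\emph{Lower bound.} $K_{2,N}$ is bipartite, so its girth is four. Every set of at most three edges is therefore acyclic, and $M^{(3)}$ is the uniform matroid $U_{3,2N}$. By Lemma~\ref{lem:uniform-ratio}, $\rho(M^{(3)})=r(2N,3)$. By Theorem~\ref{thm:uniform-monotonicity}, $r(2N,3)$ is at least $r(2N,2)$, but that route only yields about $0.58$ asymptotically in the best case, and I would not rely on it. Instead I will exhibit an explicit three-slot ordinal policy whose prefix ratio exceeds $0.53$ for $n=2N$. The policy uses two cutoffs $s_1<s_2<s_3$ on fractions of time: after $s_i$, accept a relative-rank-$\le i$ arrival while slots remain, in the style of the multiple-choice threshold rules of \cite{chan2015,buchbinder2014}. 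I compute its capture probability for the top three, which gives $r(2N,3)$ via the top-$k$ objective. For $n$ this large, the finite-$n$ value is within $10^{-10}$ of its continuum limit, which is a known constant near $0.6$ for three choices. A crude rigorous version is enough, since the target $0.53$ leaves a lot of slack. If the threshold analysis becomes messy, a fallback is Kleinberg-style splitting of time into three segments with a single-secretary rule in each, combined with a direct computation of the prefix guarantees for prefixes of size $1,2,3$.

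\emph{Upper bound.} Following \cite{graphic2025}, I label the two hubs $x,y$ and the $N$ paths $x\!-\!u_i\!-\!y$. Choose weights so that each pair $\{xu_i,u_iy\}$ has its two edges nearly equal and the pairs are widely separated in scale, with ties and random orders arranged so that the optimum essentially consists of a spanning tree plus choices within pairs. The key structural fact is that the cycles of $K_{2,N}$ are the 4-cycles $xu_iyu_j$: once both edges of two different paths are accepted, no further complete path can be accepted. Consequently, any algorithm takes both edges of a path at most once, and on all other paths it effectively plays a single-choice secretary game on that pair. The adversary makes most of $\OPT$ come from ``second edges'' of paths, and I embed a rank-one secretary instance so that the algorithm must decide online which path to complete. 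The algorithm then gains at most about $1/e$ of the contested mass plus a $\tfrac12$ share of the uncontested mass. Tuning the proportion of contested to uncontested mass yields an upper bound strictly below $0.53$. To make the bound valid for numerical-weight algorithms, the contested weights use the standard hard distribution for the single secretary problem \cite[Theorem~15]{graphic2025}, which forces ratio close to $1/e$ as the number of items grows; $N=10^{42}$ is chosen to make error terms such as $O(1/\log N)$ negligible.

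The main obstacle is this upper bound. It requires a rigorous reduction showing that any policy on $M$, after averaging over the adversary's random weight construction, yields a single-secretary policy, with the error terms controlled explicitly enough to land below $53/100$. I would first reproduce the embedding argument of \cite{graphic2025} specialized to $K_{2,N}$, and then tune its parameters numerically.
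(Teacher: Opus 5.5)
Your lower-bound plan follows the paper's route, but the upper bound on $\rho(M)$ has a real gap: the construction you sketch cannot certify anything below $0.53$, whatever parameters you choose.

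\textbf{The upper bound.} Your structural fact is right, and the paper uses it. Write $w_i^+\ge w_i^-$ for the two weights on path $i$. A spanning tree of $K_{2,N}$ completes exactly one path, so $\OPT=\sum_i w_i^++\max_i w_i^-$, and any algorithm completes at most one path. But this means the ``contested'' mass is a single edge whose partner is also in $\OPT$. It is therefore at most half of $\OPT$, so $\OPT$ can never come ``mostly from second edges''.

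Worse, take your nearly equal edges within each pair. The algorithm that accepts the first-arriving edge of every pair (always a forest) already collects about $\sum_i w_i^+$. A cutoff rule on pair completions adds the extra edge with probability about $1/e$: completions occur in a uniformly random order of pairs, and their lighter edges can be compared ordinally. With contested share $\theta\le1/2$, this gives roughly $(1-\theta)+\theta/e\ge\frac12+\frac{1}{2e}\approx0.68$. So no tuning of your design gets below $0.53$. The claimed ``$\tfrac12$ share of the uncontested mass'' has no mechanism behind it, and nearly equal pairs actually contradict it. The $1/e$ hardness of \cite[Theorem~15]{graphic2025} is not the relevant ingredient.

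\textbf{The missing idea.} Make every pair a hard \emph{two-element} instance, and make the single completed pair negligible rather than dominant. The paper draws, independently for each pair, a scale $i\in\{0,\ldots,L-1\}$ with probability $Q^{-(i+1)}/Z$. It places $Q^i$ and $Q^{i+1}$ on the two edges in random order. Fix the other pairs, the permutation, and the seed; all are independent of this pair. The decision at the pair's first arrival is then a probability $p_j$ depending only on the revealed value $Q^j$. The reward at scale $i$ is at most $Q^{i+1}\bigl[\frac{1+1/Q}{2}+\frac{(1-1/Q)(p_{i+1}-p_i)}{2}\bigr]$. The weights $Q^{-(i+1)}$ make the differences telescope. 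This gives at most $\alpha\mu$ per pair, with $\alpha=\frac{1+1/Q}{2}+\frac{1-1/Q}{2L}$ and $\mu=L/Z$, even for numerical algorithms. The one extra edge weighs at most $Q^L$, against $\E\OPT\ge N\mu$. That, not an $O(1/\log N)$ error, is why $N=Q^L=10^{42}$ with $Q=100$ and $L=21$. The result is $\frac{37}{70}+\frac{1}{2079}<\frac{53}{100}$.

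\textbf{The lower bound.} Your route is the paper's: an explicit threshold policy on $U_{3,2N}$, then Lemma~\ref{lem:uniform-ratio}. It is not carried out, however, and the slack is much thinner than you assume. The paper's policy rejects up to $h/6$, accepts records until $h/2$, then accepts anything in the current top three. Its limit is $L_*\approx0.5326$, so you need an exact evaluation with rigorous logarithm enclosures, not a crude estimate. Your three-segment fallback falls far short of $0.53$. The paper handles every finite $n$ by padding with fictitious worst elements, not by a convergence estimate. Finally, your dismissal of the monotonicity route is inconsistent: a value near $0.58$ for $r(2N,2)$ would already suffice. That route actually fails because the two-choice value is only about $0.49$.
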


There are no cycles of length at most three in this graph, so
$M^{(3)}=U_{3,2N}$. We first give a uniform lower bound above $0.53$
for this three-choice problem, and then bound the unrestricted
graphic problem below that value.

\emph{A three-choice policy.}
On $h\ge6$ objects, reject through time $\lfloor h/6\rfloor$, accept
records until time $\lfloor h/2\rfloor$, and thereafter accept
arrivals among the best three observed so far. Stop after three
acceptances. Call an arrival \emph{eligible} if the policy would accept it
were a slot available: a record in the first active phase, and an arrival
among the best three so far in the second. Ignoring the capacity, the
eligibility indicators are independent, with probabilities $1/t$ and $3/t$ in the two active
phases. At scaled time $x=t/h$, the mean number of earlier eligible
arrivals converges to $\lambda(x)=\log(6x)$ in the first phase and
$\lambda(x)=\log3+3\log(2x)$ in the second. The sum of squared
eligibility probabilities is $O(1/h)$, so the probability of
fewer than three earlier eligible arrivals tends to
$e^{-\lambda}(1+\lambda+\lambda^2/2)$.

In the first phase, the expected fraction gained before imposing capacity
is $1/(3t)$ times the posterior that the current record is in the global
top three. This posterior tends to $1-(1-x)^3$. In the second phase
every top-three arrival is eligible, so the expected fraction gained is
exactly $1/h$ per arrival. The posterior identity~\eqref{eq:posterior}
and independence from past ranks let us multiply by the probability
that a slot remains. The~resulting Riemann sum has limit
\begin{align*}
 L_*={}&\int_{1/6}^{1/2}
 \frac{[1+\log(6x)+\tfrac12\log^2(6x)](3-3x+x^2)}{18x}\,dx\\
 &+\int_{1/2}^{1}
 \frac{1+\lambda(x)+\tfrac12\lambda(x)^2}{24x^3}\,dx.
\end{align*}
The Poisson approximation and record posterior converge uniformly within
each phase, and the $O(1/h)$ boundary contributions vanish.
Writing $a=\log2$ and $b=\log3$, integration gives
\[
 L_*=-\frac{3a^2}{32}-\frac{ab}{16}-\frac{5a}{32}
       +\frac{b^3}{36}+\frac{11b^2}{144}+\frac{47b}{144}
       +\frac{425}{1728}.
\]
Substituting $0.693<a<0.694$ and $1.098<b<1.099$, using the
upper endpoints in the negative terms and lower endpoints in the
positive terms, gives
$L_*>531/1000>53/100$. These logarithm enclosures
follow, for example, from six terms of
$\log((1+z)/(1-z))=2\sum_{j\ge0}z^{2j+1}/(2j+1)$ at
$z=1/3,1/2$, bounding the positive tail geometrically.

This limit lower-bounds $r(n,3)$ for every finite $n\ge3$. To
simulate a policy on any larger horizon $h$, insert $h-n$ fictitious
elements, all worse than the real elements, at uniformly random
positions and in a random order. All comparisons are computable
online. Discard any fictitious selections while counting them in
the simulated policy's capacity. The top-three score is unchanged,
and the combined order is uniform. Taking $h\to\infty$ gives
$r(n,3)\ge L_*$. Lemma~\ref{lem:uniform-ratio} proves the required
lower bound for $M^{(3)}$.

\emph{A two-element distribution.}
Fix integers $Q>1$ and $L\ge1$. Choose an index
$i\in\{0,\ldots,L-1\}$ with probability $Q^{-(i+1)}/Z$, where
$Z=\sum_{i=0}^{L-1}Q^{-(i+1)}$, and present the two values
$Q^i,Q^{i+1}$ in random order. Their expected maximum is
$\mu=L/Z$. An algorithm may know this distribution and observe
numerical values. When the first arrival is revealed, all the algorithm
sees is its value $Q^j$, which is consistent with two scales; let $p_j$ be
the probability of taking the first arrival when it equals $Q^j$. Awarding
the second value after a rejection can only help, so the expected reward at
scale $i$ is at most
\[
 \tfrac12\bigl[p_iQ^i+(1-p_i)Q^{i+1}\bigr]
 +\tfrac12\bigl[p_{i+1}Q^{i+1}+(1-p_{i+1})Q^i\bigr]
 =Q^{i+1}\Bigl[\frac{1+1/Q}{2}+\frac{(1-1/Q)(p_{i+1}-p_i)}{2}\Bigr],
\]
the two terms corresponding to which value arrives first.
Each scale's probability times $Q^{i+1}$ is $1/Z$, so summing over $i$ the
differences $p_{i+1}-p_i$ telescope; using $p_L-p_0\le1$ bounds the
expected reward by $\alpha\mu$, where
$\alpha=(1+1/Q)/2+(1-1/Q)/(2L)$. The bound also permits any
independent auxiliary randomness, absorbed into the probabilities
$p_j$.

\emph{Embedding the pairs.}
The $N$ degree-two vertices of $K_{2,N}$ partition the edges into
pairs. Independently draw the preceding distribution for each pair
and randomly assign its values to its two labels. Reveal all edges
in uniform random order. For each pair, retain only the first edge
selected by an arbitrary graphic secretary algorithm. Fix the other
pairs' values, the arrival permutation, and the algorithm's random seed;
all are independent of this pair's scale and value assignment. Its
first-arrival decision is then a function only of the value revealed
there, so the two-element bound applies conditionally. Averaging and
summing over pairs bounds the expected total weight of these first
selections by $N\alpha\mu$. This holds even if the algorithm knows
the graph and the entire input distribution.

A forest can contain both edges of at most one pair, since two
complete pairs form a four-cycle. There is thus at most one extra
selected edge, of weight at most $Q^L$. On the other hand, the
heavier edge from each pair forms a forest, so $\E\OPT\ge N\mu$.
Thus $\E[\text{reward}]/\E\OPT\le\alpha+Q^L/(N\mu)$.
For $Q=100$, $L=21$, and $N=Q^L$, using $Z<1/99$ makes this
strictly less than $37/70+1/2079<53/100$.
Any guarantee valid on every fixed instance also holds after averaging
over this distribution, so $\rho(M)<53/100$. Together with the
uniform lower bound, this proves Theorem~\ref{thm:truncation-counterexample}.

The same matroid also disproves monotonicity for the optimal secretary
ratio when algorithms can use numerical weights. The uniform lower
bound uses only ordinal information, and the graphic upper bound permits
numerical observations.

\end{document}